\theoremstyle{plain}
\newtheorem{theorem}{Theorem}
\newtheorem{lemma}[theorem]{Lemma}
\theoremstyle{definition}
\theoremstyle{remark}
\newcommand{\alglong}{\texttt{ConfThreshGreedy}\xspace}
\newcommand{\alglongmono}{\texttt{ConfThreshGreedy2}\xspace}
\newcommand{\algmono}{\texttt{CTG2}\xspace}
\newcommand{\alg}{\texttt{CTG}\xspace}
\newcommand{\prob}{MSMC\xspace}
\newcommand{\conf}{C_t}
\newcommand{\singla}{\texttt{ExpGreedy}\xspace}
\newcommand{\mE}{\mathbb{E}}
\newcommand{\samp}{\texttt{CS}\xspace}
\newcommand{\samplong}{\texttt{Confident Sample}\xspace}
\newcommand{\thresholdlong}{\texttt{Threshold Greedy}\xspace}
\newcommand{\threshold}{\texttt{TG}\xspace}
\newcommand{\cdg}{\texttt{CDG}\xspace}
\newcommand{\contialglong}
{\texttt{ConfContinuousThreshGreedy}\xspace}
\newcommand{\contialg}
{\texttt{CCTG}\xspace}
\newcommand{\contisublong}
{\texttt{Decreasing-Threshold Procedure}\xspace}
\newcommand{\contisub}
{\texttt{DTP}\xspace}
\newcommand{\vect}{\textbf}
\newcommand{\appendixtitle}[1]{
	\begin{center}
		\LARGE \bf #1
	\end{center}
}
\newcommand{\sampnew}{\texttt{CS}\xspace}
\newcommand{\sampnewlong}{\texttt{Confident Sample}\xspace}
\title{A Threshold Greedy Algorithm for Noisy Submodular Maximization}
\author[1]{Wenjing Chen}
\author[1]{Shuo Xing}
\author[1]{Victoria G. Crawford}
\affil[1]{Department of Computer Science \& Engineering, Texas A\&M University}
\date{}                     
\begin{document}

\maketitle

\begin{abstract}
We consider the maximization of a submodular objective function $f:2^U\to\mathbb{R}_{\geq 0}$, where the objective $f$ is not accessed as a value oracle but instead subject to noisy queries. We introduce a versatile adaptive sampling procedure called \samplong (\samp), which determines whether the marginal gain of the function $f$ is approximately above or below an input threshold with high probability in as few noisy samples as possible. Using \samp as a subroutine, we propose sample efficient algorithms for monotone submodular maximization with cardinality and matroid constraints, as well as unconstrained non-monotone submodular maximization. The proposed algorithms achieve approximation guarantees arbitrarily close to those of the standard value oracle setting. We further provide an experimental evaluation on real instances of \prob and demonstrate the sample efficiency of our proposed algorithm relative to alternative approaches.
\end{abstract}

\section{Introduction}


Submodularity is a property of set functions that arises in many applications such as cut functions in graphs \cite{balkanski2018non}, coverage functions \cite{bateni2017almost}, data summarization objectives \cite{tschiatschek2014learning}, information theoretic quantities such as mutual information \cite{iyer2021generalized}, and viral marketing in social networks \cite{kempe2003maximizing}. A function $f:2^U\to\mathbb{R}_{\geq 0}$ defined over subsets of the universe $U$ of size $n$ is submodular if for all $X\subseteq Y\subseteq U$ and $u\notin Y$, $f(Y\cup\{u\}) - f(Y)\leq f(X\cup \{u\})-f(X)$. In addition, in many applications of submodular functions $f$ is monotone \citep{tschiatschek2014learning,iyer2021generalized,kempe2003maximizing}, meaning that for all $X\subseteq Y\subseteq U$, $f(X)\leq f(Y)$.

Approximation algorithms for submodular optimization problems have received a wealth of attention \cite{nemhauser1978analysis,badanidiyuru2014fast,mirzasoleiman2015lazier,balkanski2019exponential,buchbinder2015tight,calinescu2011maximizing}. Proposed algorithms typically are assumed to have value oracle access to $f$. That is, $f$ is a black box that can be queried for any $X\subseteq U$, and the value of $f(X)$ is returned. 

However, in many optimization scenarios, we can only make noisy queries from some random distribution to estimate the objective. For example, in applications such as data summarization with human feedback \cite{singla2016noisy}, influence maximization \cite{kempe2003maximizing}, feature selection tasks \cite{krause2005near}, querying the exact value of $f$ is unrealistic, and instead a more realistic assumption is that we can query $f$ subject to some noise  \cite{horel2016maximization,singla2016noisy,hassidim2017submodular,qian2017subset,crawford2019submodular,huang2022efficient}. And in a related setting, submodular optimization algorithms that leverage the multilinear extension $F$ of the function $f$ may only be able to access $F$ via noisy random samples \cite{calinescu2011maximizing,badanidiyuru2014fast}. In this setting, the common approach is to use existing submodular optimization algorithms and apply the fact that the objective can be evaluated to arbitrary precision by taking sufficiently many samples and applying concentration inequalities in order to achieve a fixed-precision\footnote{See Section \ref{sec:prelim} for a definition of fixed-precision approximation. } approximation of the objective function \cite{calinescu2011maximizing}. However, modern massive datasets demand algorithms that are as efficient as possible in terms of runtime, and in the case of submodular optimization algorithms, the main computation time bottleneck for the above approach would be the noisy queries to $f$.

Motivated by the above, our main insight is that an algorithm doesn't necessarily need to approximate $f$ with such fine precision at every query in order to find a solution with an approximation guarantee comparable to the exact value oracle setting. Instead, we propose methods of adaptively approximating the function $f$ based on decisions that the algorithm must make, with an emphasis on minimizing the total number of noisy queries. In particular, the contributions of the paper are as follows:
\begin{enumerate}[noitemsep]
\item We propose the adaptive sampling algorithm \samplong (\samp) in Section \ref{sec:sampling}, which can be used to determine if the mean of a random variable $X$ is approximately above or below a given threshold $w$ with high probability, in relatively few random samples.
Intuitively, the required number of samples is inversely proportional to the distance between $\mE X$ and $w$, and therefore we can significantly decrease the number of samples relative to the fixed-precision approach by sampling less when the distance is large. This sampling algorithm serves as a versatile tool that is used in many noisy submodular algorithms in the subsequent sections.

\item Using \samp as a subroutine, we propose algorithms for various submodular maximization problems under the noisy setting. The proposed algorithms exhibit an improved sample complexity compared with fixed-precision approximation. The results are listed as follows: 
\begin{enumerate}[noitemsep]
   \item First of all, we address the problem of Monotone Submodular Maximization with Cardinality constraint (MSMC) in Section \ref{sec:monotone}. Given a cardinality constraint (or ``budget'') $\kappa$, and noisy access to a monotone submodular function $f:2^U\to\mathbb{R}_{\geq 0}$, \prob is to find the set $\arg\max\{f(X): X\subseteq U, |X|\leq \kappa\}$. 
   We propose the \alglong algorithm (\alg) for MSMC, which achieves an approximation guarantee arbitrarily close to $1-1/e$ with high probability. 
 The detailed results on sample complexity are presented in Theorem \ref{mainthm} of Section \ref{sec:monotone}. 
    \item In Section \ref{sec:nonmono}, we propose \texttt{Confident Double Greedy} (\texttt{CDG}) for Unconstrained Submodular Maximization (USM). Given noisy access to a non-negative submodular function $f:2^U\rightarrow \mathbb{R}^+$ which is \textit{not necessarily monotone}, the goal is to find a subset $S\subseteq U$ that maximizes $f(S)$. \texttt{CDG} follows a similar idea to the deterministic algorithm in \cite{buchbinder2015tight}, and is proved to satisfy an approximation guarantee arbitrarily close to $1/3$ with high probability. The theoretical guarantee on sample complexity is presented in Theorem \ref{thm:nonmono}.
    \item In Section \ref{sec:matroid}, the algorithm \contialglong (\contialg) is proposed for the problem of Monotone Submodular Maximization with Matroid constraint (MSMM). MSMM is to find the solution of $\arg\max_{S\subseteq\mathcal{M}}f(S)$, where $\mathcal{M}$ is a matroid defined on subsets of the ground set $U$, and that $f$ is monotone and submodular. \contialg accesses the multilinear extension of $f$ via noisy samples, since the multilinear extension can be difficult to compute in general \cite{calinescu2011maximizing,badanidiyuru2014fast}. We prove that the proposed algorithm achieves an approximation ratio of arbitrarily close to $1-1/e$ with high probability. In addition, we demonstrate that \contialg has an improved sample complexity compared with the one proposed in \cite{badanidiyuru2014fast}. 
\end{enumerate}

    \item Finally, as a demonstration of our approach, we experimentally analyze \alg on instances of noisy data summarization and influence maximization. We compare \alg to several alternative methods including the algorithm of \citet{singla2016noisy} which is discussed in more detail in Section \ref{sec:relatedwork} and in the appendix.
    \alg is demonstrated to be a practical choice that can save many samples relative to alternative approaches.
\end{enumerate}


\subsection{Related Work}
\label{sec:relatedwork}
Approximation algorithms for the maximization of a submodular objective function subject to various constraints have been extensively studied in the literature \cite{nemhauser1978analysis,badanidiyuru2014fast,mirzasoleiman2015lazier,calinescu2011maximizing} with the assumption of oracle access to $f$. The runtime of these algorithms is generally measured in queries to $f$ as this is the main bottleneck (see Section \ref{appdx:related_work} for a more comprehensive discussion on the runtime of algorithms for various submodular optimization problems).

While there are many works assuming value oracle access to $f$, algorithms developed assuming noisy access to $f$ are relatively less explored \cite{horel2016maximization,singla2016noisy,hassidim2017submodular,qian2017subset,crawford2019submodular,huang2022efficient}. One related setting to ours is that we have noisy access to $f$, but this noise is \textit{persistent} \cite{horel2016maximization,hassidim2017submodular,qian2017subset,crawford2019submodular,huang2022efficient}, i.e. repeated samples cannot be taken to diminish the noise. Our proposed approach does not apply to the persistent noise setting, since repeated sampling is a central part of our algorithm. Another related but different setting is that of stochastic submodular optimization \cite{karimi2017stochastic,staib2019distributionally,ozcan2023stochastic} which assumes the optimization objective $f$ is the expectation over some unknown distribution over a set of monotone submodular functions. Therefore a sample average function can be built, which is also monotone and submodular, and algorithms run on it. In contrast, in our setting, it is only assumed that we can sample noisy queries at each subset $X\subseteq U$.

The algorithm \singla of \citet{singla2016noisy} is for a noisy setting identical to ours and is developed for the MSMC problem specifically. \singla also incorporates an adaptive sampling approach. In particular, their algorithm combines the standard greedy algorithm with the best arm identification problem found in combinatorial bandit literature \cite{chen2014combinatorial}. Their approach is still very different from ours, and an extensive comparison of our algorithms and results with \cite{singla2016noisy} are presented in the appendix, as well as an experimental comparison in Section \ref{sec:exp}. 

The intuition behind \samp is similar to the best-arm-identification problem in the multi-armed bandit literature \cite{kalyanakrishnan2012pac}.
Both the algorithm LUCB of \cite{kalyanakrishnan2012pac} and \samp share a common underlying intuition: they leverage the difference between expectations to reduce the number of noisy queries required. In LUCB, this difference is between the expectation of the optimal arm and other arms, while in \samp, it is between the expectation of the input variable and the threshold value $w$.
\section{Preliminary Definitions and Notations}
\label{sec:prelim}
In this section, we lay the groundwork definitions and notations for the remainder of the paper.
Throughout this paper, we assume $f:2^U\to\mathbb{R}_{\geq 0}$ is submodular. $U$ is the ground set of size $n$.
Let us denote the marginal gain of adding element $u\in U$ to a set $X\subseteq U$ as $\Delta f(X,u)$, i.e., $\Delta f(X,u):=f(X\cup\{u\})- f(X)$.

We first define the noisy model of access to $f$. In particular, given any subset $X\subseteq U$ and $u\in U$, independent samples can be taken from the distribution $\mathcal{D}(X,u)$ to obtain noisy evaluations of $\Delta f(X,u)$. In this paper, we denote the random variable following the distribution of $\mathcal{D}(X,u)$ as $\widetilde{\Delta f}(X,u)$. We assume the following properties about the distribution $\mathcal{D}(X,u)$: (i) $\mathbb{E} [\widetilde{\Delta f}(X,u)]=\Delta f(X,u)$; and (ii) $\widetilde{\Delta f}(X,u)$ are bounded in the range of $[0, R]$ for all $X,u$ (or in some results, they are assumed to be $R$-sub-Gaussian).\footnote{ A random variable that is bounded within the interval $[0,R]$ can be demonstrated to be $R/2$ sub-Gaussian. Consequently, the assumption of a random variable being sub-Gaussian is more general than that of boundedness.} In addition, in applications where instead we have noisy queries directly to $f$ instead of the marginal gain, this also satisfies our setting (see Section \ref{appdx:related_work} in the appendix of the supplementary material for more details).


\textbf{Fixed $\epsilon$-approximation. }Given any random variable $X$, an estimate $\widehat{X}$ is a $\textit{fixed $\epsilon$-approximation}$ of $X$ if $\mE X-\epsilon\leq \widehat{X} \leq \mE X+\epsilon$. Notice that for any $X$ that is $R$-sub-Gaussian, we can take $O\left(\frac{R^2}{\epsilon^2}\log \frac{1}{\delta}\right)$ samples and the sample average is a fixed $\epsilon$-approximation of $X$ with probability at least $1-\delta$ by an application of Hoeffding's Inequality (Lemma \ref{hoeffding} in the appendix in the supplementary material).

\textbf{Multi-linear extension}. For any submodular objective $f$, the multi-linear extension of $f$ is defined as $\vect{F}$, i.e., $\vect{F}(\vect{x})=\sum_{S\subseteq U}\prod_{i\in S}x_i\prod_{j\notin S}(1-x_j)f(S)$. Here we define $S(\vect{x})$ to be a random set that contains each element $i\in U$ with probability $x_i$, then by definition, we have that $\vect{F}(\vect{x})=\mE f(S(\vect{x}))$.


\section{Confident Sampling Algorithm}
\label{sec:sampling}

In this section, we propose and analyze the \samplong (\samp) algorithm. \samp is used in order to determine if the expected value of a random variable $X$ is approximately above or below a threshold value with high probability. \samp works for any random variable that is $R$-sub-Gaussian (see Theorem \ref{thm:sampling}) or bounded in the range of $[0,R]$ (see Theorem \ref{thm:sampling2}). 
In Sections \ref{sec:monotone}, \ref{sec:nonmono}, and \ref{sec:matroid}, we show that \samp is useful as a subroutine for a variety of submodular maximization algorithms where we only have noisy access to the marginal gains.

We now describe \samp. \samp takes as input failure probability $\delta\in\mathbb{R}_{>0}$, threshold error parameter $\epsilon\in\mathbb{R}_{>0}$, a threshold value $w\in\mathbb{R}_{>0}$, the unknown distribution $\mathcal{D}_X$ of the random variable $X$ which it accesses via independent random samples, and the sub-Gaussian parameter $R$. \samp iteratively takes at most $N_1$ samples from $\mathcal{D}_X$, while maintaining a sample average and a confidence interval. In particular $\widehat{X}_t$ is the sample average after taking $t$-th samples of $X$, i.e., $\widehat{X}_t=\frac{1}{t}\sum_{i=1}^tX_i$ where $X_i$ is the $i$-th random sample of $X$. The confidence region, after taking the $t$-th sample of $X$, is a shrinking region $[\hat{ X}_t-C_t, \hat{ X}_t+C_t]$ around $\widehat{X}_t$ that reflects where \samp is almost certain that the true value of $\mE X$ has to be. We leave the exact definition of both $C_t$ and $N_1$ until Theorems \ref{thm:sampling} and \ref{thm:sampling2} for reasons that will become clear. Once the lower bound of the confidence region crosses $w-\epsilon$, or the upper bound crosses $w+\epsilon$, \samp completes and returns true or false respectively. \samp stops sampling in at most $N_1$ samples regardless, and in this case returns true or false according to whether $\hat{ X}_t\geq w$. An illustration of the various states of \samp is depicted in Figure \ref{fig:samp}, and pseudocode for \samp is provided in Algorithm \ref{alg:samp}.

\begin{figure}[t]
\centering
\begin{tikzpicture}[scale=0.7]

\draw (0,0) to (9,0);
\draw node at (10,1) {$w+\epsilon$};
\draw node at (10,0) {$w$};
\draw node at (10,-1) {$w-\epsilon$};
\draw[thick,dotted,red] (0,1) to (9,1);
\draw[thick,dotted,red] (0,-1) to (9,-1);

\draw[thick,cyan,->] (1,0.5) to (1,2);
\draw[thick,cyan,->] (1,0.5) to (1,-1);
\draw[cyan,fill=cyan] (1,0.5) circle [radius=0.1];
\draw node at (1.5,0.5) {$(a)$};

\draw[thick,cyan,->] (3,-0.5) to (3,1);
\draw[thick,cyan,->] (3,-0.5) to (3,-2);
\draw[cyan,fill=cyan] (3,-0.5) circle [radius=0.1];
\draw node at (3.5,-0.5) {$(b)$};

\draw[thick,cyan,->] (5,0.5) to (5,2.5);
\draw[thick,cyan,->] (5,0.5) to (5,-1.5);
\draw[cyan,fill=cyan] (5,0.5) circle [radius=0.1];
\draw node at (5.5,0.5) {$(c)$};

\draw[thick,cyan,->] (7,0) to (7,1);
\draw[thick,cyan,->] (7,0) to (7,-1);
\draw[cyan,fill=cyan] (7,0) circle [radius=0.1];
\draw node at (7.5,0.5) {$(d)$};




\end{tikzpicture}
\caption{An illustration of the various states of \samp. The blue dots depict the values of $\widehat{X}_t$, while the surrounding blue lines depict the confidence region $[\widehat{X}_t-C_t, \widehat{X}_t+C_t]$. Once the region looks like (a), \samp will return true. In (b), \samp will return false. In (c), \samp will continue sampling to reduce the width of the confidence region. Finally, in (d) \samp has taken $N_1$ samples resulting in an $\epsilon$-additive approximation.}
\label{fig:samp}
\end{figure}
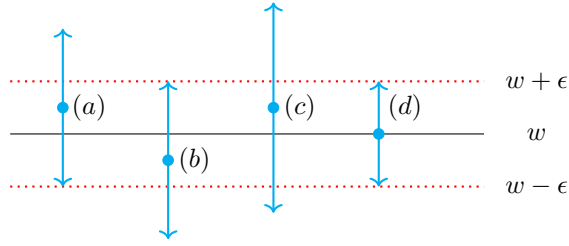

\begin{algorithm}[t]
\caption{\samplong (\samp)}
\label{alg:samp}
\begin{algorithmic}[1]
    \STATE \textbf{Input:} $w$, $\epsilon$, $\delta$, $\mathcal{D}_X$, $R$
    \FOR{$t=1,2,...N_1$}\label{line:sample_N_1}
        \STATE $\widehat{X}_t\gets$ updated sample mean after taking $t$-th sample from $\mathcal{D}_X$
        \STATE $\conf\gets$ updated confidence interval \label{alg: update confidence interval}
        \IF{$\widehat{X}_t-\conf \geq w-\epsilon$} 
        \label{line: comparison to thres 1}
            \STATE \textbf{return true}
        \ELSIF{$\widehat{X}_t+\conf \leq w+\epsilon$} 
        \label{line: comparison to thres 2}
            \STATE \textbf{return false}
        \ENDIF
 \ENDFOR
 \IF{$\widehat{X}_t\geq w$}
 \STATE \textbf{return true}
 \ELSE
 \STATE \textbf{return false}
 \ENDIF
\end{algorithmic}
\end{algorithm}

We now state our first main result for \samp in Theorem \ref{thm:sampling} below. The second item of Theorem \ref{thm:sampling} states that with high probability, \samp will correctly return the answer to whether $\mE X$ is approximately above or below the input threshold $w$. The first item states that, in the worst case, \samp takes $O(R^2\log(1/\delta)/\epsilon^2)$ samples from $\mathcal{D}_X$ to return true or false no matter what the value of $\mE X$ is (the first of the values we take the minimum of). However, the further the value of $\mE X$ is from $w$, as reflected by $\phi$, the fewer samples \samp needs to make a decision (the second of the values that we take the minimum of). Figure \ref{fig:sample_complexity} illustrates how the sample complexity changes with the increase of gap function $\phi$ in the result of Theorem \ref{thm:sampling}. The details of the proof of Theorem \ref{thm:sampling} can be found in Section \ref{appdx:proof_of_samp} of the supplementary material.
\begin{theorem}
    \label{thm:sampling}
   For any random variable $X$ that is $R$-sub-Gaussian, if we define $N_1=2R^2/\epsilon^2\log \frac{4}{\delta}$, and $\conf =R\sqrt{\frac{2}{t}\log \frac{8 t^2}{\delta}}$, then the algorithm \samplong achieves that with probability at least $1-\delta$
    \begin{enumerate}[noitemsep]
        \item \samp on input $(w,\epsilon,\delta,\mathcal{D}_X,R)$ takes at most the minimum between
    \begin{align*}
        \left\{\frac{2R^2}{\epsilon^2}\log \left(\frac{4}{\delta}\right),\frac{8R^2}{\phi_X^2}\log\left(\frac{16R^2}{\phi_X^2}\sqrt{\frac{2}{\delta}}\right)\right\}
    \end{align*}
    noisy samples, where $\phi_X = \frac{\epsilon + |w-\mathbb{E} X|}{2}$.
    \item If \samp returns true, then $\mE X\geq w-\epsilon$. If \samp returns false, then $\mE X\leq w+\epsilon$.
    \end{enumerate}
\end{theorem}

\begin{figure}[t!]
\hspace{-0.5em}
\centering
\includegraphics[width=0.8\columnwidth]{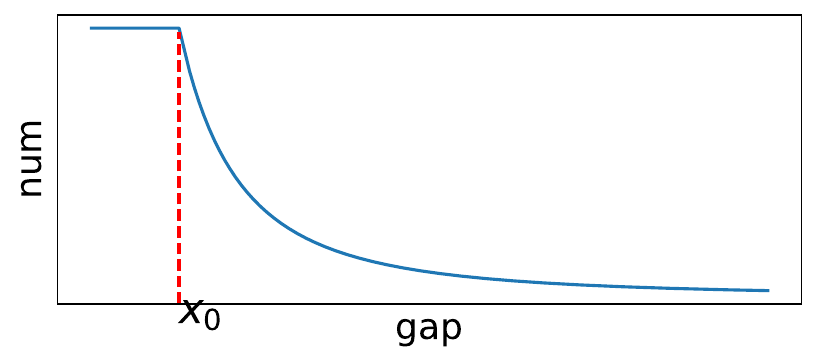}
\captionof{figure}{A plot to illustrate how the number of samples taken by \samp (num) changes with the gap function $\phi_X$ (see Theorem \ref{thm:sampling}). There exists some $x_0$ such that when $0<\phi_X\leq x_0$, the required number of samples is $\frac{R^2}{2\epsilon^2}\log \frac{2}{\delta}$ (the left side in the sample complexity result in Theorem \ref{thm:sampling}).
When $\phi_X>x_0$, the right-hand side in Theorem \ref{thm:sampling} is the minimum and the sample complexity of the algorithm decreases fast as $\phi_X$ increases. 
}
        \label{fig:sample_complexity}
\end{figure}

Our second result, Theorem \ref{thm:sampling2}, is related to Theorem \ref{thm:sampling} but instead of an additive approximation error (i.e. $\mE X\geq w-\epsilon$ or $\mE X\leq w+\epsilon$), the error is a combination of multiplicative and additive. This alternative result is useful in some algorithms such as that considered in Section \ref{appdx:mono2} in the appendix and Section \ref{sec:matroid}. In order to get Theorem \ref{thm:sampling2}, a different definition of the confidence radius $C_t$ as well as the maximum number of samples $N_1$ is needed. Theorem \ref{thm:sampling2} is proven in the supplementary material in Section \ref{appdx:proof_of_samp2}.
\begin{theorem}
\label{thm:sampling2}
    For any random variable $X$ that is bounded in the range of $[0,R]$, if we define $C_t=\frac{3R}{t\alpha}\log(\frac{8R^2}{\delta})$, and $N_1=\frac{3R}{\alpha\epsilon}\log(\frac{4}{\delta})$ where $\alpha$ is an additional parameter that controls the multiplicative error rate, the algorithm \samplong achieves that with probability at least $1-\delta$
    \vspace{-2mm}
    \begin{enumerate}[noitemsep]
        \item \samp on input $(w,\epsilon,\delta,\mathcal{D}_X,R)$ takes at most the minimum between
    \begin{align*}
        \left\{\frac{3R}{\epsilon\alpha}\log \left(\frac{4}{\delta}\right),\frac{12R}{\alpha\phi_X'}\log\left(\frac{12R}{\alpha\phi_X'}\sqrt{\frac{8}{\delta}}\right)\right\}
    \end{align*}
    noisy samples, $\phi_X' = \frac{\epsilon -\alpha\mE X+| w-\mathbb{E} X|}{2}$.
    \item If the output is true, then $(1+\alpha)\mE X\geq w-\epsilon$. If the output is false, then $(1-\alpha)\mE X\leq w+\epsilon$.
    \end{enumerate}
\end{theorem}

\section{Monotone Submodular Maximization}
\label{sec:monotone}
In this section, we address the MSMC problem under the noisy setting, where we assume the noisy sampling of the marginal gain ${\Delta f}(S,s)$ is $R$-sub-Gaussian for any $S\subseteq U$ and $s\in U$. 
Necessary definitions and notations are first given in Section \ref{sec:prelim}. We propose two algorithms \alglong (\alg) and \alglongmono (\algmono) for this problem. A detailed description of \alg is given in Section \ref{sect:alg}. The approximation and sample
complexity guarantees of \alg are presented in Theorem \ref{mainthm} in Section \ref{sect:analysis}. For \algmono, the algorithm description is provided in Algorithm \ref{alg:ATG2}, and the theoretical results are presented in Section \ref{appdx:mono2} in the appendix.

\subsection{Algorithm Description}
\label{sect:alg}

We now describe our proposed algorithm, \alglong (\alg). \alg is based on the algorithm \thresholdlong (\threshold) of \citet{badanidiyuru2014fast} which is for \prob with an exact value oracle. Pseudocode for \alg can be found in Algorithm \ref{alg:ATG}. 

The algorithm \alg takes as input a parameter $\alpha \in (0,1)$. At all times throughout \alg, there is a marginal gain threshold value $w$ (that decreases over time) and a partial solution $S$ (to which elements are iteratively added). \alg proceeds in $O(\log(\kappa/\alpha)/\alpha)$ \textit{rounds}, where each round corresponds to a value of $w$ and a pass through $U$. During each round, \alg iterates through all elements in $U$. Since for each $S$ and $u$, the noisy query to the marginal gain ${\Delta f}(S,u)$ is $R$-sub-Gaussian, \alg can use \samp as the subroutine to determine whether to include $u$ to the solution set $S$. More specifically, \alg calls the procedure \samplong (\samp) with threshold $w$, approximation error bound $\epsilon$, error probability $\frac{2\delta}{3nh(\alpha)}$ where $h(\alpha)=\frac{\log{(\kappa/\alpha)}}{\alpha}$, random distribution $\mathcal{D}(S,u)$, and sub-Gaussian parameter $R$ as input. The worst-case query complexity $N_1$ and confidence interval $C_t$ in \samp are defined as in Theorem \ref{thm:sampling}.

The threshold $w$ is first set to $d$, which is an $\epsilon$-additive approximation of the maximum singleton value with high probability. In particular, $d$ satisfies that with probability at least $1-\delta/3$, $\max_{s\in U}f(s)+\epsilon\geq d\geq \max_{s\in U}f(s)-\epsilon$. At the end of each round, $w$ is decreased by a factor of $1-\alpha$. \threshold completes once $w$ reaches $\alpha d/\kappa$, or if $S$ has reached the cardinality constraint $\kappa$, whichever comes first. 

\begin{algorithm}[t]
\caption{\alglong (\alg)}\label{alg:ATG}
 \begin{algorithmic}[1]
 \STATE \textbf{Input:} $\epsilon$, $\delta, \alpha$
 \STATE $N_2\gets 2R^2\log(6n/\delta)/(\epsilon^2)$
 \FORALL{$s\in U$}
 \STATE $\hat{f}(s) \gets $ sample mean over $N_2$ samples from $\mathcal{D}(\emptyset,s)$ \label{alg:ATG:line:sample-mean}
 \ENDFOR
  
  \STATE $d:=\max_{s\in U}\hat{f}(s)$, 
 \STATE $w\gets d$, $S\gets \emptyset$
 \WHILE{$w>\alpha d/\kappa$}
 \FORALL{$u\in U$} 
\IF{$|S|<\kappa$}
 \STATE thre = \samplong($w$, $\epsilon$, $\frac{2\delta}{3nh(\alpha)}$, $\mathcal{D}(S,u)$, $R$)
 \IF{thre}
 \STATE $S\gets S\cup \{u\}$
 \ENDIF
  \ENDIF
 \ENDFOR

 \STATE $w=w(1-\alpha)$
 \ENDWHILE
 \STATE \textbf{return} $S$
 \end{algorithmic}
\end{algorithm}


\subsection{Theoretical Guarantee}
\label{sect:analysis}
Our main result is the Theorem \ref{mainthm} below.
\begin{theorem}
    \label{mainthm}
    Suppose the noisy marginal gain of any subset $S\subseteq U$ and element $s\in U$ is $R$-sub-Gaussian, then \alg makes at most $n\log(\kappa/\alpha)/\alpha$ calls of \samp. In addition, with probability at least $1-\delta$, the following statements hold:
    \begin{itemize}[noitemsep]
        \item The exact function value of the output solution set $S$ satisfies that $f(S)\geq(1-e^{-1}-\alpha)f(OPT)-2\kappa\epsilon$;
    \item Each call of \samp on input ($w$, $\epsilon$, $\frac{2\delta}{3nh(\alpha)}$, $\mathcal{D}(S,u)$, $R$) takes at most the minimum between
    \begin{align*}
        \frac{8R^2}{\phi^2(S,u)}\log\left(\frac{16R^2\sqrt{\frac{3nh(\alpha)}{\delta}}}{\phi^2(S,u)}\right)
    \end{align*}
    and
    \begin{align*}
       \frac{2R^2}{\epsilon^2}\log \left(\frac{6nh(\alpha)}{\delta}\right)
    \end{align*}
    noisy samples. Here $OPT$ is an optimal solution to the MSMC problem, $$\phi(S,u) = \frac{\epsilon + |w-\Delta f(S,u)|}{2},$$ and $$h(\alpha)=\frac{\log{(\kappa/\alpha)}}{\alpha}.$$
    \end{itemize}
    
\end{theorem}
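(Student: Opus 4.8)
The plan is to prove the deterministic bound on the number of \samp calls directly, and to establish (i) and (ii) on a single high‑probability ``clean event'' $\mathcal{E}$ on which every sample mean computed by \alg or \samp behaves as intended. I would take $\mathcal{E}=\mathcal{E}_1\cap\mathcal{E}_2\cap\mathcal{E}_3$, where $\mathcal{E}_1$ is the event that $|\hat f(s)-f(s)|\le\epsilon$ for all $s\in U$, $\mathcal{E}_2$ is the event that $|\widehat{\Delta f_t}(S,u)-\Delta f(S,u)|\le C_t$ at every iteration $t$ of every call of \samp, and $\mathcal{E}_3$ is the event that whenever a call of \samp exhausts all $N_2$ iterations its final mean lies within $\epsilon$ of $\Delta f(S,u)$. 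Hoeffding's inequality together with the choices of $N_1$, $C_t$, $N_2$ makes each of $\mathcal{E}_1^c,\mathcal{E}_2^c,\mathcal{E}_3^c$ have probability at most $\delta/3$: a union bound over the $n$ singletons for $\mathcal{E}_1$; a union bound over the at most $nh(\alpha)$ invocations of \samp for $\mathcal{E}_3$ (there are at most $h(\alpha)=\log(\kappa/\alpha)/\alpha$ rounds, each with at most $n$ calls); and for $\mathcal{E}_2$ a union bound first over iterations $t\ge1$ — where $\sum_{t\ge1}t^{-2}=\pi^2/6<2$ makes the $t^2$ inside $C_t$ exactly what is needed for summability — and then over the $\le nh(\alpha)$ invocations. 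Since the sets $S$ fed to \samp are themselves random, I would make these union bounds rigorous by ordering the (round, element) invocation slots and conditioning on all randomness preceding each slot, so that within a slot the inputs $(w,S,u)$ are fixed and the fresh samples are i.i.d.\ from $\mathcal{D}(S,u)$; Hoeffding then applies conditionally, and one takes expectations. A final union bound gives $\Pr[\mathcal{E}]\ge1-\delta$. For the \samp‑call count, the while loop multiplies $w$ by $1-\alpha$ starting from $d$ and stops once $w\le\alpha d/\kappa$; since $-\log(1-\alpha)\ge\alpha$ this happens within $\log(\kappa/\alpha)/\alpha$ rounds, each making at most $n$ calls. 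On $\mathcal{E}$ I also record that $\max_s f(s)-\epsilon\le d\le\max_s f(s)+\epsilon$ and that \samp has its advertised semantics: a ``true'' output forces $\Delta f(S,u)\ge w-\epsilon$ and a ``false'' output forces $\Delta f(S,u)\le w+\epsilon$ (combine the two confidence‑based tests with $\mathcal{E}_2$, and the fallback test with $\mathcal{E}_3$).

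For claim (i), I would replay the analysis of \threshold on $\mathcal{E}$. Let $a_1,\dots,a_m$ be the elements added to $S$ in order, $S_j=\{a_1,\dots,a_j\}$, and $w^{(j)}$ the threshold at which $a_j$ is added. By the semantics, $\Delta f(S_{j-1},a_j)\ge w^{(j)}-\epsilon$. For $o\in OPT\setminus S_{j-1}$: if $w^{(j)}=d$ then submodularity and the bound on $d$ give $\Delta f(S_{j-1},o)\le f(\{o\})\le d+\epsilon$; otherwise $o$ was considered and rejected in the preceding round (threshold $w^{(j)}/(1-\alpha)$, and it was considered since $|S|<\kappa$ there) while the partial solution was a subset of $S_{j-1}$, so the ``false'' semantics and submodularity give $\Delta f(S_{j-1},o)\le w^{(j)}/(1-\alpha)+\epsilon$. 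Combining with $w^{(j)}\le\Delta f(S_{j-1},a_j)+\epsilon$, with submodularity in the form $f(OPT)-f(S_{j-1})\le\sum_{o\in OPT\setminus S_{j-1}}\Delta f(S_{j-1},o)$, and with $|OPT\setminus S_{j-1}|\le\kappa$, one obtains
\begin{equation*}
f(S_j)-f(S_{j-1})\ \ge\ \tfrac{1-\alpha}{\kappa}\bigl(f(OPT)-f(S_{j-1})\bigr)-2\epsilon .
\end{equation*}
If $m=\kappa$, unrolling this recurrence while bounding $\sum_{i=0}^{\kappa-1}(1-\tfrac{1-\alpha}{\kappa})^i\le\kappa$ (Bernoulli) keeps the accumulated additive error at $2\kappa\epsilon$; with $(1-\tfrac{1-\alpha}{\kappa})^\kappa\le e^{-(1-\alpha)}\le e^{-1}+\alpha$ and $f(\emptyset)\ge0$ this gives $f(S)\ge(1-e^{-1}-\alpha)f(OPT)-2\kappa\epsilon$. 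If $m<\kappa$, the loop ran to completion, so the last threshold $w_L$ satisfies $(1-\alpha)w_L\le\alpha d/\kappa$ and every $o\in OPT\setminus S$ was rejected in the final round; hence $f(OPT)-f(S)\le\sum_{o\in OPT\setminus S}\Delta f(S,o)\le\kappa(w_L+\epsilon)\le\tfrac{\alpha d}{1-\alpha}+\kappa\epsilon$, which with $d\le f(OPT)+\epsilon$ gives the bound in (i).

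For claim (ii), the loop cap on \samp immediately bounds the number of samples by $N_2=R^2\log(6nh(\alpha)/\delta)/(2\epsilon^2)$, the second term of the minimum. For the first term, observe that on $\mathcal{E}_2$ we have $\widehat{\Delta f_t}(S,u)\in[\Delta f(S,u)-C_t,\Delta f(S,u)+C_t]$, so the first iteration at which $C_t\le\phi(S,u)=(\epsilon+|w-\Delta f(S,u)|)/2$ already triggers a return: if $\Delta f(S,u)\ge w$ then $\widehat{\Delta f_t}(S,u)-C_t\ge\Delta f(S,u)-2C_t\ge w-\epsilon$ fires ``true'', and symmetrically ``false'' fires when $\Delta f(S,u)<w$. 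Since $C_t$ is decreasing in $t$ (because $12nh(\alpha)/\delta>e^2$), the number of samples is at most $t^\ast:=\min\{t:C_t\le\phi(S,u)\}$, i.e.\ the least integer $t$ with $t\ge a(b+2\log t)$ where $a=R^2/(2\phi^2(S,u))$ and $b=\log(12nh(\alpha)/\delta)$. Writing the target bound as $T_0:=4a\log y$ with $y:=4ae^{b/2}$, the identity $\log(4a)=\log y-b/2$ reduces $a(b+2\log T_0)$ to $2a\log y+2a\log\log y$, so $T_0\ge a(b+2\log T_0)$ is equivalent to $y\ge\log y$, which holds; thus $t^\ast\le T_0$, and $T_0$ is exactly the first term of the minimum. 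Taking the smaller of the two bounds completes (ii).

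The step I expect to be the main obstacle is the probabilistic bookkeeping for $\mathcal{E}$: making the union bounds legitimate in the presence of \samp's adaptively chosen inputs $S$, and splitting the failure probability so that the summation over iterations $t$ in $\mathcal{E}_2$, the $nh(\alpha)$ invocations, and the fallback event $\mathcal{E}_3$ each consume only $\delta/3$. Once $\mathcal{E}$ is in place, claim (i) is a careful but routine re‑run of the threshold‑greedy argument with an $\epsilon$‑slack, and claim (ii) reduces to the single transcendental inequality above.
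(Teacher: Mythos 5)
Your proposal is correct and follows essentially the same route as the paper: the same three clean events $\mathcal{E}_1,\mathcal{E}_2,\mathcal{E}_3$ established via Hoeffding plus union bounds over the at most $nh(\alpha)$ invocations, the same threshold-greedy recurrence $f(S_j)-f(S_{j-1})\ge\tfrac{1-\alpha}{\kappa}(f(OPT)-f(S_{j-1}))-2\epsilon$ split into the $|S|=\kappa$ and $|S|<\kappa$ cases, and the same stopping criterion $\conf\le\phi(S,u)$ resolved by the transcendental inequality that the paper packages as Lemma \ref{lem:logx_over_x}. The only differences are ones of care rather than substance: you condition explicitly to legitimize the union bound over adaptively chosen sets $S$, and in the $|S|<\kappa$ case you track the last executed threshold as $w_L\le\alpha d/(\kappa(1-\alpha))$ rather than the paper's asserted $w\le\alpha d/\kappa$, which introduces a $1/(1-\alpha)$ slack that affects the stated constant only for moderately large $\alpha$ (where the guarantee is close to vacuous) and is implicitly present in the paper's own argument as well.
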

As one can see, the approximation guarantee of Theorem \ref{mainthm} can be made arbitrarily close to the best possible $1-1/e$ by choosing $\alpha$ and $\epsilon$ arbitrarily small. On the other hand, more samples will be needed as a tradeoff for increased accuracy. 

The proof and analysis of Theorem \ref{mainthm} are deferred to Section \ref{appdx:proof_of_mono} in the appendix. Additionally, a comparison of the theoretical guarantees between our results and those of \singla in \cite{singla2016noisy} is provided in Section \ref{app:related} in the appendix.

\section{Non-monotone Submodular Objectives}
\label{sec:nonmono}



In Section \ref{sec:monotone} and Section \ref{sec:matroid}, we employ the adaptive sampling algorithm \samp as a subroutine in algorithms that share the same intuition as \threshold to determine if the marginal gain is approximately above or below the threshold $w$. In this section, we demonstrate that \samp can also be employed to develop a deterministic algorithm for the Submodular Maximization (USM) problem, following a similar idea as in \cite{buchbinder2015tight}. Here we assume that the sampling of the marginal gain $\Delta f(S,s)$ is $R$-sub-Gaussian for any $S\subseteq U$ and $s\in U$. 

We propose the algorithm \texttt{CDG}, which is based upon the deterministic algorithm presented in \cite{buchbinder2015tight} ("Double Greedy") for USM in the noise-free setting, with our procedure \samp integrated into it in order to deal with the noisy access to $f$. Here the parameters $N_1$ and $C_t$ in the subroutine algorithm \samp are defined in accordance with Theorem \ref{thm:sampling}.
 We denote the sets $A$ and $B$ after the $i$-th iteration in \cdg as $A_i$ and $B_i$, and the element processed in the $i$-th iteration as $u_i$. Pseudocode for \texttt{CDG} is presented in Algorithm \ref{alg:CDG}.

We start by briefly describing the deterministic algorithm in \cite{buchbinder2015tight}. In particular, the algorithm of \cite{buchbinder2015tight} maintains two sets $A$ and $B$ as it makes a single pass through the ground set $U$ in the order $u_1,...,u_n$. At each element $u_i$, the algorithm evaluates whether $\Delta f(A_{i-1}, u_i)$, the marginal gain of adding the new element $u_i$, surpasses the loss incurred by removing it from set $B_{i-1}/\{u_i\}$, which is $-\Delta f(B_{i-1}/\{u_i\}, u_i)$. If $\Delta f(A_{i-1}, u_i)\geq-\Delta f(B_{i-1}/\{u_i\}, u_i)$, then $u_i$ is added to the final solution set. Otherwise, it is removed from $B_{i-1}$. Our insight is that this procedure in fact is asking about whether the value of the function $\Delta f(A_{i-1}, u_i)+\Delta f(B_{i-1}/\{u_i\}, u_i)$ is above or below the threshold $0$. 

It is important to note that \samp cannot be used as a subroutine in the randomized algorithm with a $1/2$ approximation guarantee as presented in \cite{buchbinder2015tight}. This is due to a fundamental difference in the requirements of the two algorithms. The randomized algorithm in \cite{buchbinder2015tight} requires knowing the exact ratio of $\frac{\Delta f(A_{i-1}, u_i)}{\Delta f(A_{i-1}, u_i)+\Delta f(B_{i-1}/\{u_i\}, u_i)}$, while \samp only guarantees the difference between the mean of a random variable and a threshold value $w$.
Therefore, in the deterministic algorithm, we can apply \samp to find whether the expectation of $X_i=\widetilde{\Delta f}(A_{i-1},u_i)+\widetilde{\Delta f}(B_{i-1}/\{u_i\},u_i)$ is approximately above or below $0$.


We now present our theoretical guarantees for \texttt{CDG} below in Theorem \ref{thm:nonmono}. The proof of Theorem \ref{thm:nonmono} can be found in the supplementary material. We note that our algorithm \texttt{CDG} achieves nearly the same approximation guarantee as that of \cite{buchbinder2015tight}, but with a small penalty due to the noisy setting.
\begin{theorem}
    \label{thm:nonmono}
    \cdg makes $n$ calls of \samp. In addition, with probability at least $1-\delta$, the following statements hold:
    \begin{enumerate}
    [noitemsep]
        \item The exact function value of the output solution set $S$ satisfies that $f(S)\geq\frac{f(OPT)}{3}-\epsilon$;
        \item Each call of \samp on input $(0,\frac{3\epsilon}{n},\frac{\delta}{n},\mathcal{D}_{X_i},\sqrt{2}R)$ takes at most the minimum between
    \begin{align*}
        \left\{\frac{4n^2R^2}{9\epsilon^2}\log \left(\frac{4n}{\delta}\right),\frac{16R^2}{\phi^2_{i}}\log\left(\frac{32R^2}{\phi^2_{i}}\sqrt{\frac{2n}{\delta}}\right)\right\}
    \end{align*}
    noisy samples. Here $OPT$ is an optimal solution to the USM problem, and
    \begin{align*}
        \phi_{i} &:= \frac{3\epsilon/n + |\mE X_i|}{2}\\
        &=\frac{3\epsilon/n + |\Delta f(A_{i-1},u_i)+\Delta f(B_{i-1}/\{u_i\},u_i)|}{2}. 
    \end{align*}
    \end{enumerate}
   
\end{theorem}

From Theorem \ref{thm:nonmono}, we can see that \cdg achieves an approximation guarantee that is arbitrarily close to $1/3$, which matches the result of the deterministic algorithm in \cite{buchbinder2015tight}.

\begin{algorithm}[t]
\caption{\texttt{Confident Double Greedy} (\texttt{CDG} )}\label{alg:CDG}
 \begin{algorithmic}[1]
 \STATE \textbf{Input:} $\epsilon$, $\delta$
 \STATE  $A\gets \emptyset$, $B\gets U$
 
 \FOR{\textbf{all} $u\in U$}
 \STATE Define r.v. $X=\widetilde{\Delta f}(A,u)+\widetilde{\Delta f}(B/\{u\},u)$, 
 \STATE thre = \samplong($0$, $\frac{3\epsilon}{n}$, $\frac{\delta}{n}$, $\mathcal{D}_{X}$, $\sqrt{2}R$)
 \IF{$thre$} 
 \STATE $A\gets A\cup\{u\}$
 \ELSE \STATE  $B\gets B/\{u\}$
 \ENDIF
 \ENDFOR
 \STATE \textbf{return} $A$
 \end{algorithmic}
\end{algorithm}
\section{Continuous Threshold Greedy with Noisy Queries}
\label{sec:matroid}
Finally, in this section, we consider the problem of Monotone Submodular Maximization with a Matroid constraint (MSMM). We propose the \contialglong (\contialg) algorithm for MSMM, which leverages the continuous multilinear extension $\vect{F}$ of the submodular function $f$ to obtain an approximation guarantee arbitrarily close to the best possible result of $1-1/e$.

In many applications, even with access to an exact oracle for $f$, $\vect{F}$ is not able to be evaluated exactly due to the inherent randomness in $S(\vect{x})$ in the definition of $\vect{F}$ (see Section \ref{sec:prelim}), so we can only make noisy queries to $\vect{F}$. In addition, our results hold even for the case that only noisy access to $f$ is provided. More specifically,  we assume that for any set $S\subseteq U$ and element $s\in U$, the noisy marginal gain $\widetilde{\Delta f}(S,s)$ is bounded in $[0,R]$. 


We now describe \contialg. Let $\kappa$ to denote the rank of the matroid, and let $S(\vect{x})$ be a random set that contains each element $i\in U$ with probability $x_i$
The \contialg algorithm initializes a solution in the origin, $\vect{x}=\vect{0}$. Then at each step, \contialg selects a subset of coordinates $B$ to increment by a predetermined step size $\epsilon$.  The set of coordinates $B$ is chosen by the subroutine algorithm \contisublong (\contisub), which is described in Algorithm \ref{alg:CCTG_subroutine}. Here the parameters $N_1$ and $C_t$ in the subroutine algorithm \samp are defined as in Theorem \ref{thm:sampling2} with the multiplicative error parameter $\alpha$ set to be $\epsilon/3$. After the \contialg is complete, we process the fractional solution $\vect{x}$ with the swap rounding procedure in \cite{vondrak2011submodular} to obtain the final solution set $S$. 


\begin{algorithm}[t]
\caption{\contialglong(\contialg)}\label{alg:CCTG}
 \begin{algorithmic}[1]
 \STATE \textbf{Input:} $\epsilon$, $\delta$, $\mathcal{M}\in 2^U$
 \STATE  $\vect{x}\gets \vect{0}$
 \FORALL{$s\in U$ and $s\in\mathcal{M}$}
 \STATE $\hat{f}(s) \gets $ sample mean over $\frac{18\kappa}{\epsilon^2}\log\frac{4n}{\delta}$ samples from $\mathcal{D}(\emptyset,s)$
 \ENDFOR
  
  \STATE $d:=\max_{s\in \mathcal{M}}\hat{f}(s)$, 
 
 \FOR{$t=1$ to $1/\epsilon$}
 \STATE $B\gets$\contisublong($\vect{x}$, $\epsilon$, $\delta$, $d$, $\mathcal{M}$)
 \STATE $\vect{x}\gets \vect{x}+\epsilon\cdot\vect{1}_{B}$
 \ENDFOR
 \STATE \textbf{return} $\vect{x}$
 \end{algorithmic}
\end{algorithm}

\begin{algorithm}[t]
\caption{\contisublong(\contisub)}\label{alg:CCTG_subroutine}
 \begin{algorithmic}[1]
 \STATE \textbf{Input:} $\vect{x}$, $\epsilon$, $\delta$, $d$, $\mathcal{M}\in 2^U$
 \STATE $w\gets d$, $B\gets \emptyset$
 \WHILE{$w>\frac{\epsilon d}{3\kappa}$}
 \FORALL{$u\in U$} 
\IF{$B\cup\{u\}\in\mathcal{M}$}
\STATE $X=\widetilde{\Delta f}(S(\vect{x}+\epsilon\vect{1}_B),u)$
 \STATE thre = \sampnewlong($w$, $\frac{R\epsilon}{2\kappa}$, $\frac{\delta\epsilon}{2nh'(\epsilon)}$, $\mathcal{D}_X$, $R$)
 \IF{thre}
 \STATE $B\gets B\cup \{u\}$
 \ENDIF
  \ENDIF
 \ENDFOR

 \STATE $w=w(1-\epsilon/3)$
 \ENDWHILE
 \STATE \textbf{return} $B$ \end{algorithmic}
\end{algorithm}

\begin{theorem}
\label{thm:continuous}
    \contialg makes at most $\frac{3n}{\epsilon^2}\log\frac{3\kappa}{\epsilon}$ calls of \samp. In addition, with probability at least $1-\delta$, the following statements hold:
    \begin{itemize}[noitemsep]
        \item The output fractional solution $\vect{x}$ achieves the approximation guarantee of $\vect{F}(\vect{x})\geq(1-e^{-1}-2\epsilon)f(OPT)-R\epsilon$.
        \item Each call of \sampnew on input  ($w$, $\frac{\epsilon R}{2\kappa}$, $\frac{\delta\epsilon}{2nh'(\epsilon)}$, $\mathcal{D}_X$, $R$) requires at most the minimum between
    \begin{align*}
        \left\{\frac{18\kappa}{\epsilon^2}\log \left(\frac{8nh'(\epsilon)}{\delta\epsilon}\right),\frac{36R}{\epsilon\phi''_X}\log\left(\frac{144R}{\epsilon\phi''_X}\sqrt{\frac{nh'(\epsilon)}{\delta\epsilon}}\right)\right\}
    \end{align*}
    noisy queries to the marginal gain. Here $OPT$ is an optimal solution to the MSMM problem, and
    \begin{align*}
        \phi_X'' = \frac{\frac{\epsilon R}{2\kappa} -\epsilon\mathbb{E}X  /{3}+ |w-\mathbb{E}X|}{2},
    \end{align*} and $h'(\epsilon)=\frac{3}{\epsilon}\log{(\frac{3\kappa}{\epsilon})}$.
    
    \end{itemize}
\end{theorem}
The proof of Theorem \ref{thm:continuous} is deterred to Appendix \ref{appdx:continuous}.
Besides, we discuss and compare our results in Theorem \ref{thm:continuous} with the Accelerated Continuous Greedy algorithm in \cite{badanidiyuru2014fast} in Section \ref{appdx:comparison_to_ACG}.

\section{Applications and Experiments}
\label{sec:exp}
\begin{figure*}[t!]
    \centering
    \hspace{-0.5em}
     \subfigure[delicious\_300 samples]
{\label{fig:cover2500_300_eps_q}\includegraphics[width=0.24\textwidth]{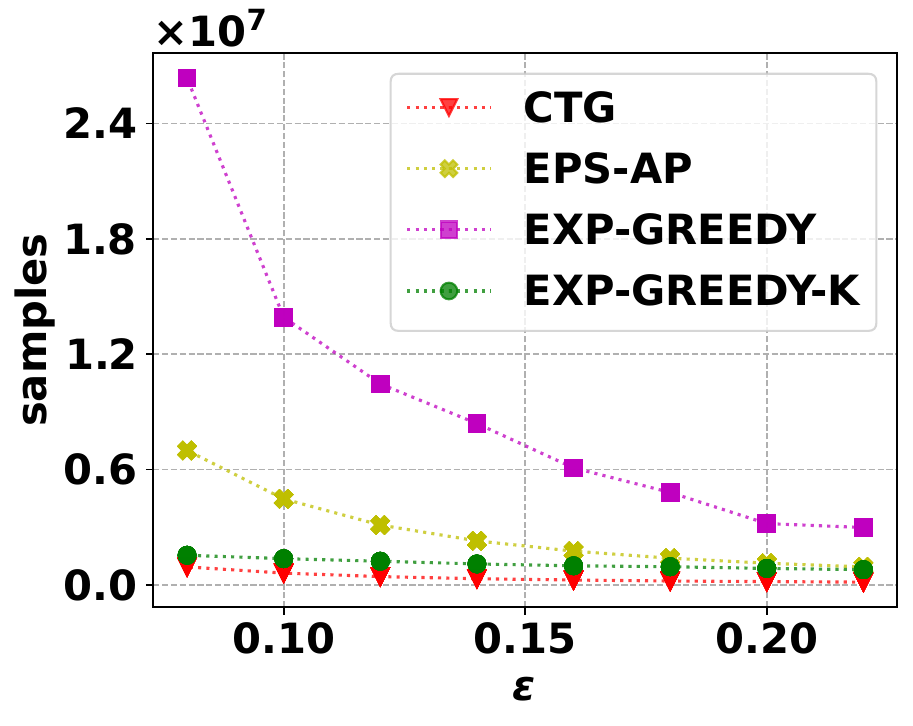}} 
\hspace{-0.5em}
     \subfigure[delicious\_300 average samples]
{\label{fig:cover2500_300_eps_average-q}\includegraphics[width=0.24\textwidth]{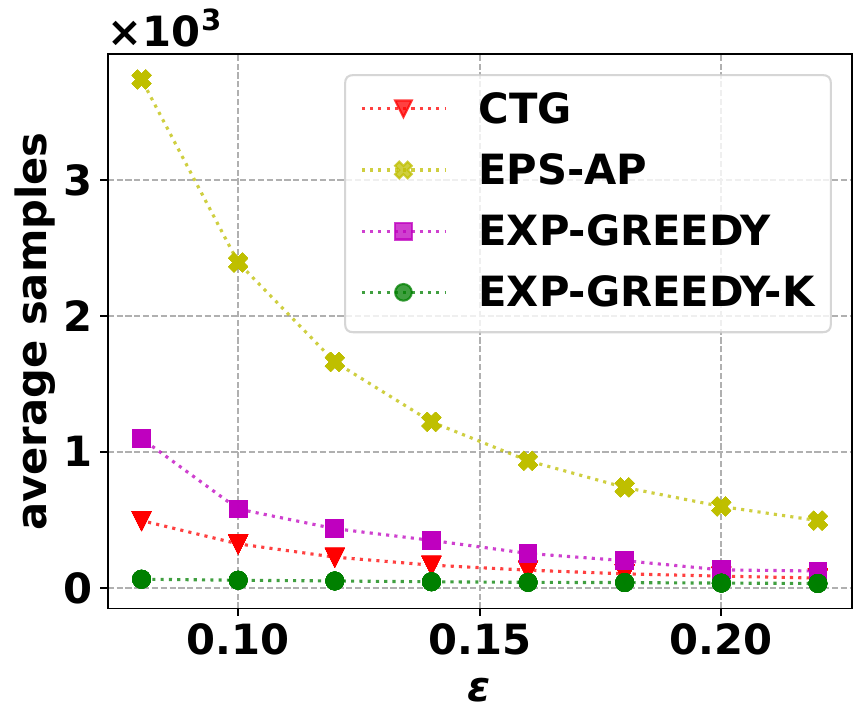}}
    \hspace{-0.5em}
     \subfigure[delicious\_300 samples]
{\label{fig:cover_300_k_q}\includegraphics[width=0.24\textwidth]{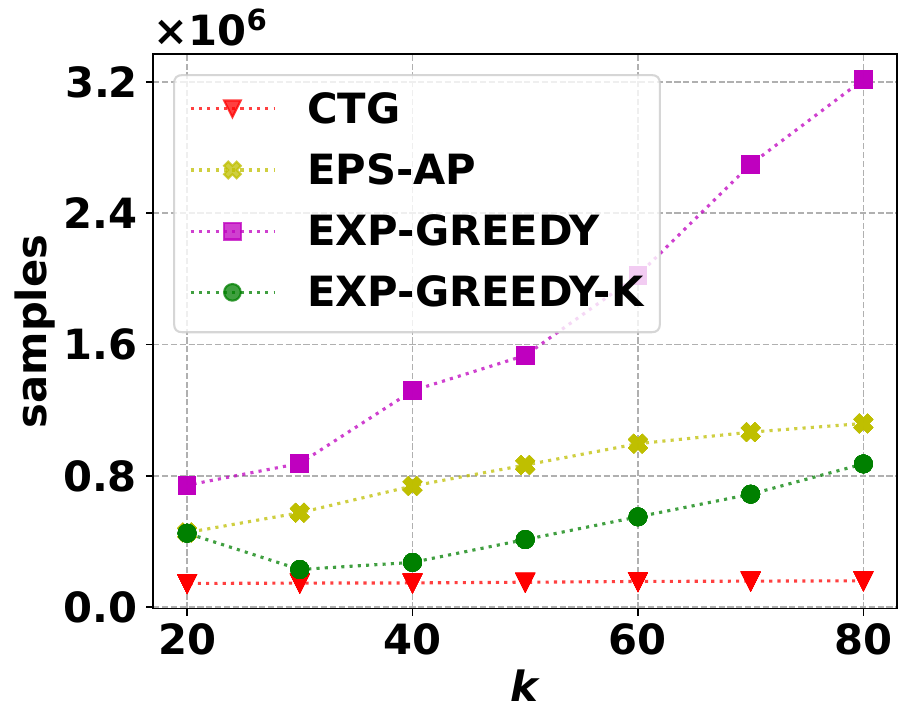}}
\hspace{-0.5em}
\subfigure[delicious\_300 average samples]
{\label{fig:cover_300_k-ave_q}\includegraphics[width=0.24\textwidth]{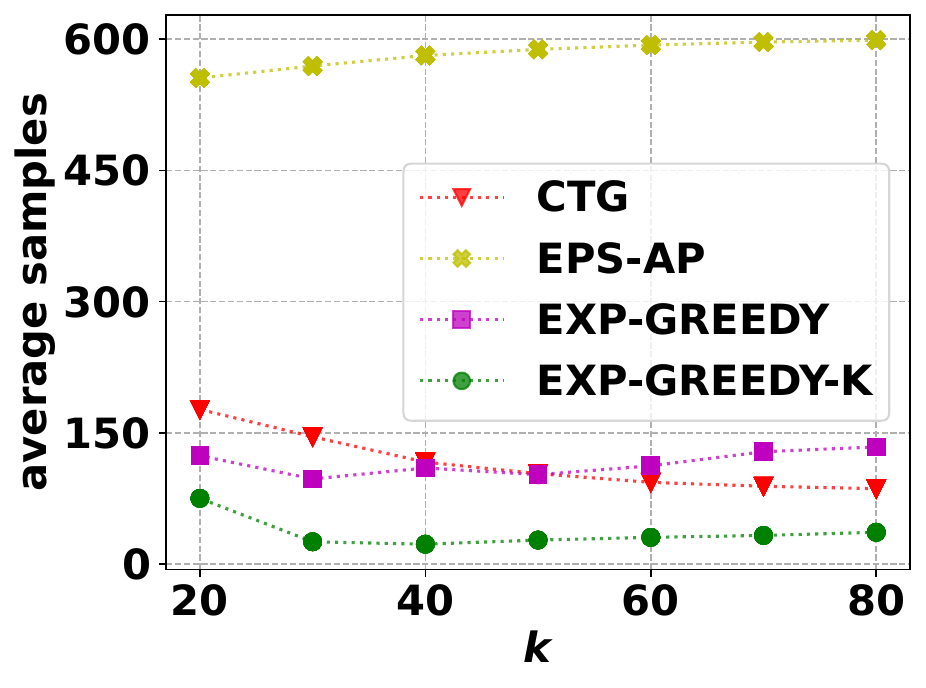}}
\hspace{-0.5em}
\subfigure[corel\_60 samples]
{\label{fig:corel_60_eps_q}\includegraphics[width=0.24\textwidth]{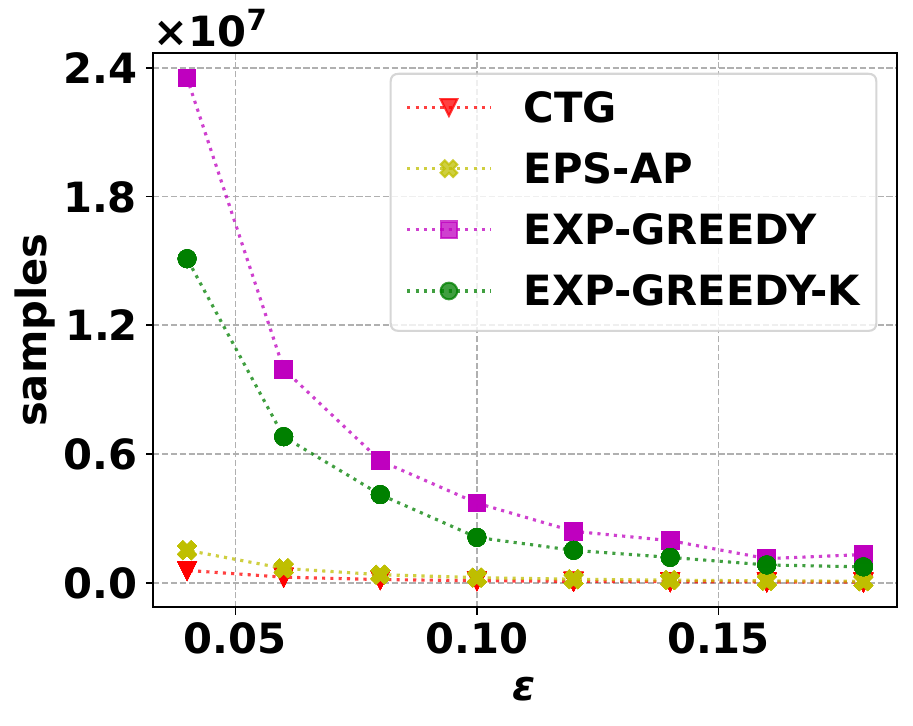}}
\hspace{-0.5em}
\subfigure[corel\_60 average samples]
{\label{fig:corel_60_eps_ave_q}\includegraphics[width=0.24\textwidth]{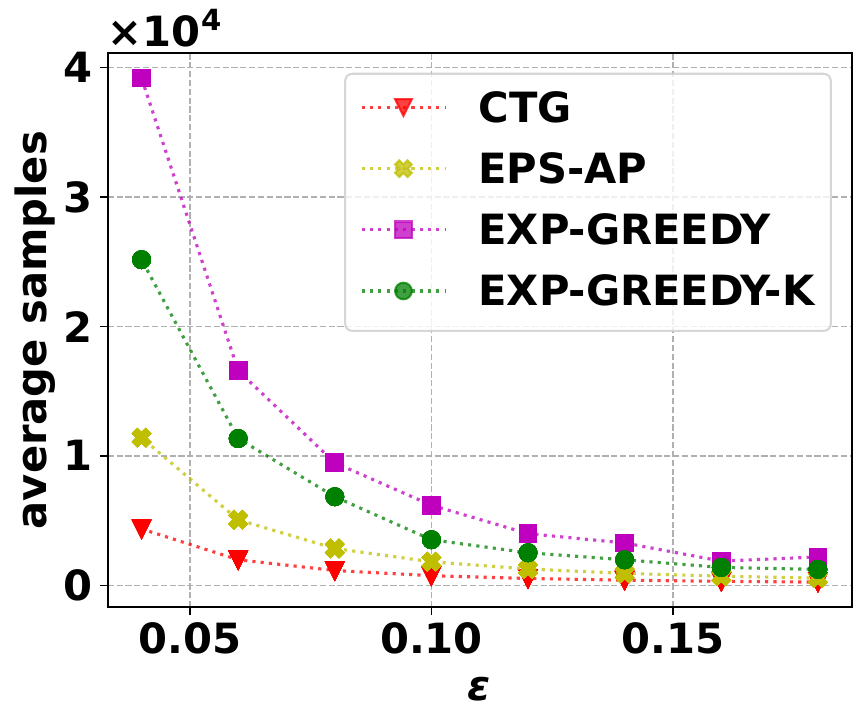}}
\hspace{-0.5em}
\subfigure[corel\_60 samples]
{\label{fig:corel_60_k-q}\includegraphics[width=0.24\textwidth]{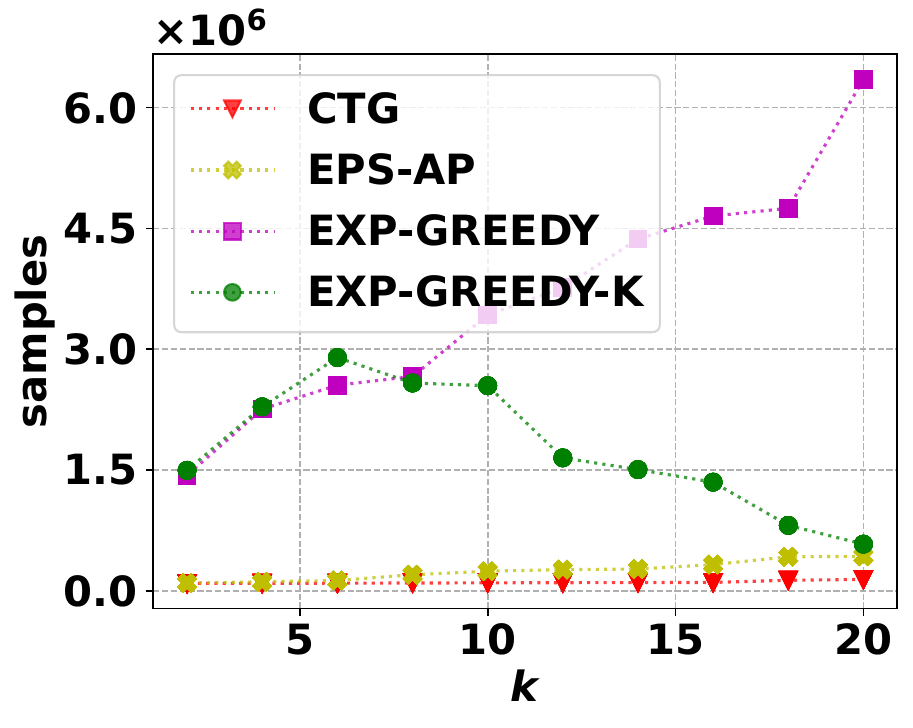}}
\hspace{-0.5em}
\subfigure[corel\_60 average samples]
{\label{fig:corel_60_k_ave_q}\includegraphics[width=0.24\textwidth]{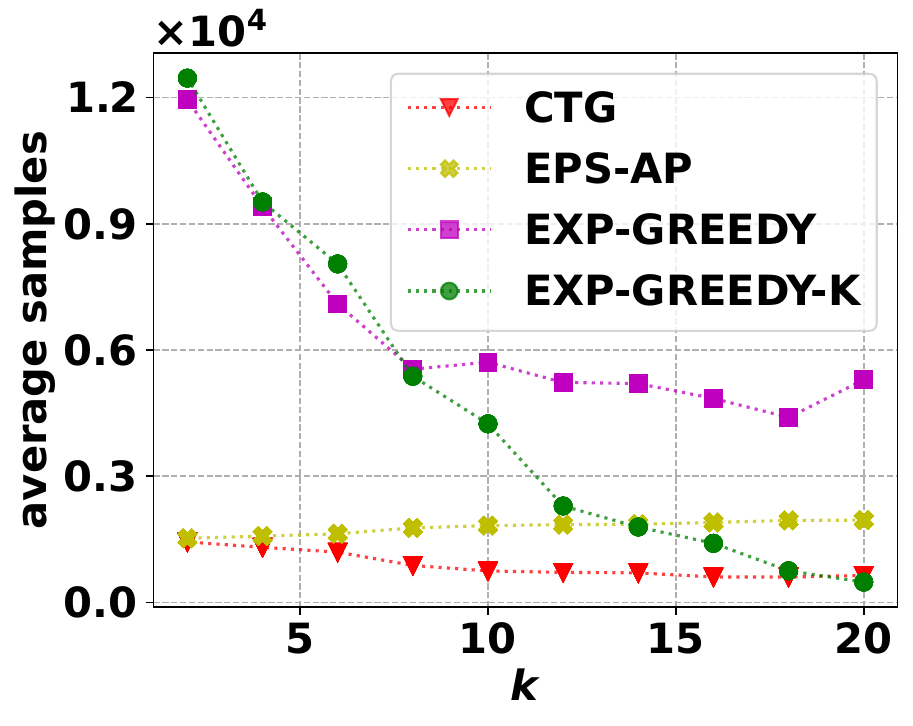}}
\hspace{-0.5em}
\subfigure[delicious samples]
{\label{fig:cover_eps_q}\includegraphics[width=0.24\textwidth]{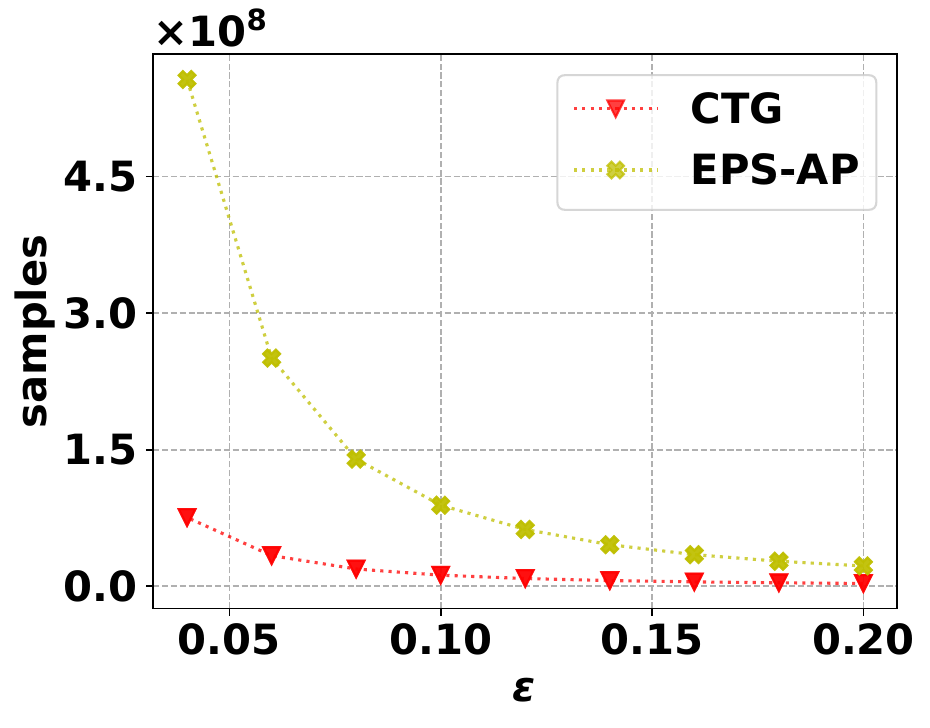}}
\hspace{-0.5em}
\subfigure[delicious samples]
{\label{fig:cover_k_q}\includegraphics[width=0.24\textwidth]{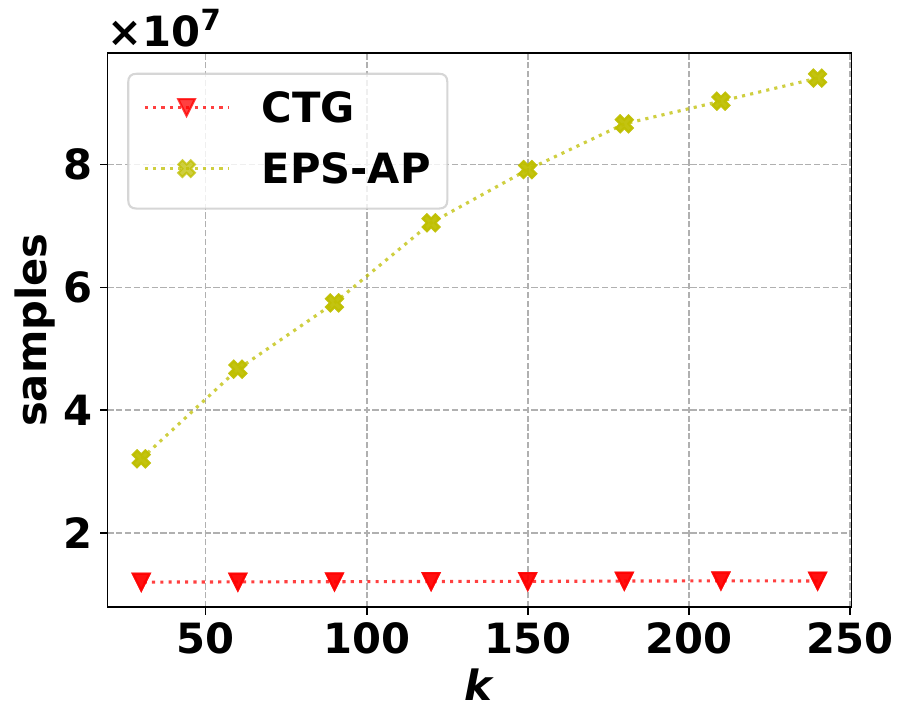}}
\hspace{-0.5em}
\subfigure[corel samples]
{\label{fig:corel_eps-q}\includegraphics[width=0.24\textwidth]{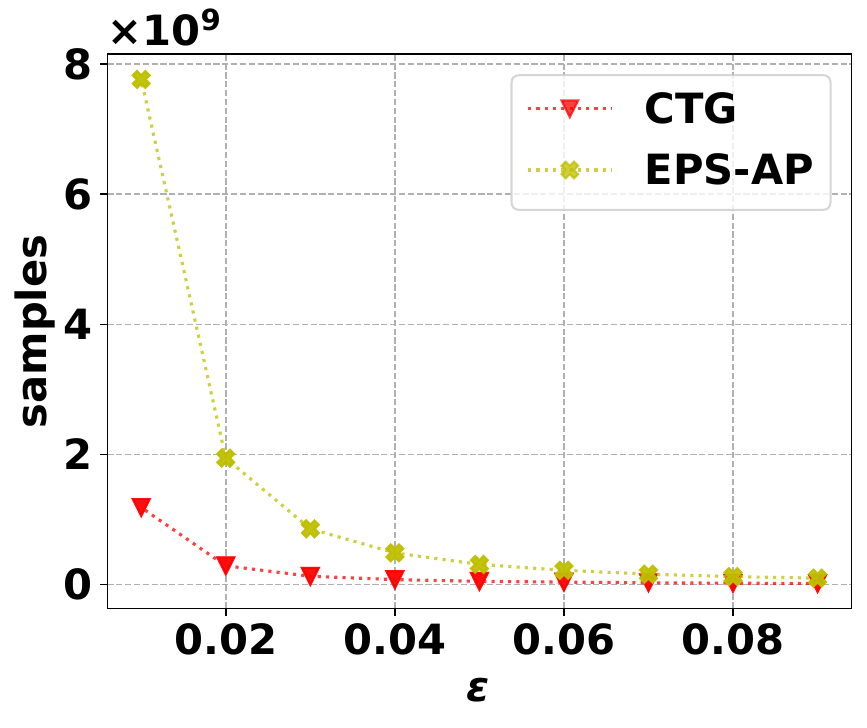}}
\hspace{-0.5em}
\subfigure[corel samples]
{\label{fig:corel_eps_ave_q}\includegraphics[width=0.24\textwidth]{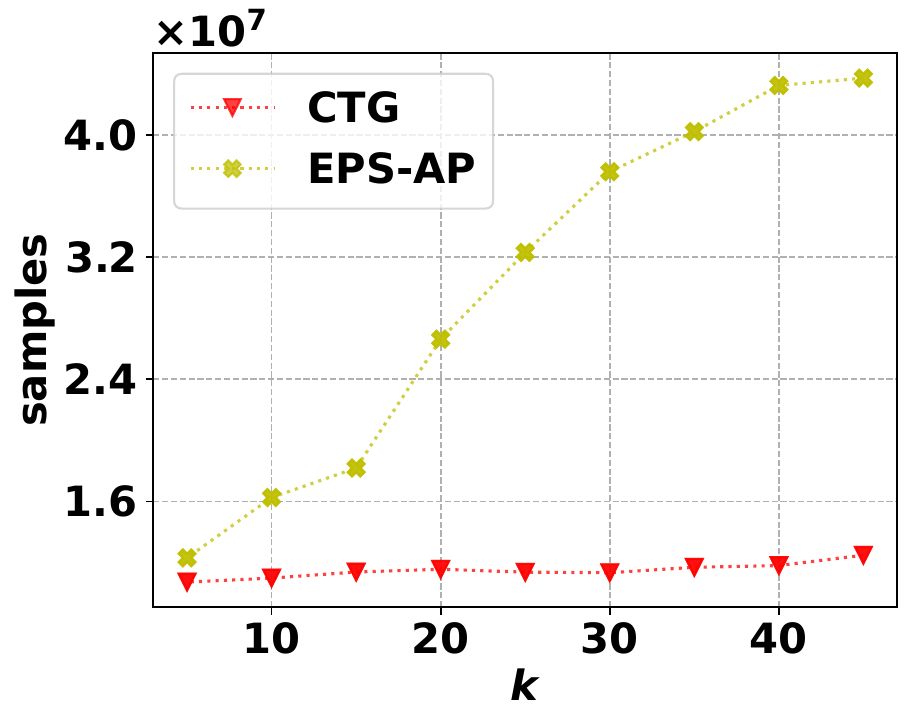}}

\caption{The experimental results of running different algorithms on instances of data summarization on the delicious URL dataset ("delicious", "delicious\_300") and Corel5k dataset ("corel", "corel\_60").}
\label{fig:exp_results}
\end{figure*}
In this section, we conduct an experimental evaluation of our algorithm \alg on instances of \prob with noisy marginal gain evaluations. In particular, we consider instances of the noisy data summarization application, which is described in Section \ref{sec:data_summarization} in the appendix. Synthetic noise is introduced into marginal gain queries by adding a zero-mean Gaussian random variable with $\sigma=1.0$ ($\sigma$ is the standard deviation) to the exact value of marginal gain. Therefore, parameter $R=1.0$. Our experiments are conducted on a subset of the Delicious dataset of URLs that are tagged with topics \cite{soleimani2016semi}, and subsets of the Corel5k dataset of tagged images \cite{duygulu2002object}. We give more details about the datasets we use in the appendix in the supplementary material. We additionally consider the influence maximization problem in the appendix in the supplementary material. The setup of our experiments is described in Section \ref{sec:setup}, while our results are presented in Section \ref{sec:exp_results}.

\subsection{Experimental Setup}
\label{sec:setup}
We now describe the setup of our experiments. In addition to our algorithm \alg, we compare the following alternative approaches to noisy \prob: (i) The fixed $\epsilon$ approximation (``\texttt{EPS-AP}'') algorithm; (ii)  Two special case of the algorithm \singla of \citet{singla2016noisy} ``\texttt{EXP-GREEDY}'' and  ``\texttt{EXP-GREEDY-K}'' with the parameter $k'$ in \singla set to be $k'=1$ and  $k'=\kappa$ respectively. More details about the three algorithms can be found in the appendix.
 We evaluate \alg and \texttt{EPS-AP} on all the datasets. However, \texttt{EXP-GREEDY} and \texttt{EXP-GREEDY-K} have greater runtime as discussed in the appendix in the supplementary material, and so we only evaluate them on the smaller datasets.
 Details about the parameter settings can be found in the appendix in the supplementary material.

\subsection{Experimental Results}
\label{sec:exp_results}

We now present our experimental results. The algorithms are compared in terms of: (i) The function value $f$ of their solution; (ii) The total number of noisy samples of the marginal gain; (iii) The average number of samples per marginal gain estimation \textit{average samples=}$$\textit{total samples}/\#\textit{ of evaluated marginal gains}.$$

Our results for different values of $\epsilon$ and $\kappa$ are presented in Figure \ref{fig:exp_results}. From Figures \ref{fig:cover2500_300_eps_q}, \ref{fig:cover_300_k_q}, \ref{fig:corel_60_eps_q} and \ref{fig:corel_60_k-q}, one can see that the total samples required by \alg tends to be smaller than those required by \texttt{EPS-AP}, \texttt{EXP-GREEDY} and \texttt{EXP-GREEDY-K}, which demonstrates the advantage of \alg in sample efficiency, which was the main goal of the paper. However, on the delicious\_300 dataset (Figures \ref{fig:cover2500_300_eps_average-q} and \ref{fig:cover_300_k-ave_q}), the average samples of \texttt{EXP-GREEDY-K} is slightly better than \alg, and on the other hand \alg has significantly better average samples compared to \texttt{EXP-GREEDY-K} on the corel\_60 dataset (Figures \ref{fig:corel_60_eps_ave_q} and \ref{fig:corel_60_k_ave_q}). This demonstrates the incomparability of the instance-dependent sample query bounds given for marginal gain computations on \alg vs that of \singla.

From the results where we vary $\epsilon$, it can be seen that both the total samples and average samples of our algorithm \alg increase less compared with \texttt{EPS-AP} and \texttt{EXP-GREEDY} as $\epsilon$ decreases (Figures \ref{fig:cover2500_300_eps_q}, \ref{fig:cover2500_300_eps_average-q}, \ref{fig:corel_60_eps_q} and \ref{fig:corel_60_eps_ave_q}), which corresponds to our theoretical results (see the discussion in Section \ref{results} in the appendix).

For the experiments comparing different 
$\kappa$, we can see that the total queries of the \texttt{EXP-GREEDY} and \texttt{EXP-GREEDY-K} increases faster compared with \texttt{EPS-AP} and \alg (Figure \ref{fig:cover_300_k_q}), which can be attributed to the better dependence on $\kappa$ that \threshold exhibits compared to the standard greedy algorithm.
A result that is a little different from the above is that the number of total queries of \texttt{EXP-GREEDY-K} decreases on dataset corel\_60 when $\kappa$ becomes large (Figure \ref{fig:corel_60_k-q}), which is because when $\kappa$ increases, \texttt{EXP-GREEDY-K} is able to better deal with tiny differences in marginal gains (see the appendix).

Finally, the results on the larger dataset (corel and delicious) of \alg and \texttt{EPS-AP} are presented in Figures \ref{fig:cover_eps_q}, \ref{fig:cover_k_q}, \ref{fig:corel_eps-q} and \ref{fig:corel_eps_ave_q}. Notably, our proposed algorithm (\alg) showcases considerable advantages over the \texttt{EPS-AP} algorithm in terms of both required total samples and average samples.



\clearpage
\bibliographystyle{icml2024}
\bibliography{arxiv}

\clearpage

\appendix
\appendixtitle{Appendix}
\section{Additional Related Work}
\label{appdx:related_work}
Approximation algorithms for submodular maximization problems with exact value oracle have been extensively studied in the literature \cite{nemhauser1978analysis,badanidiyuru2014fast,mirzasoleiman2015lazier,balkanski2019exponential}. For MSMC, the standard greedy algorithm produces a solution set with the best possible $1-1/e$ approximation guarantee in $O(n^2)$ queries of $f$. \cite{badanidiyuru2014fast} proposed a faster greedy-like algorithm that gives an approximation guarantee of $1-1/e-O(\epsilon)$ while reducing the sample complexity to $O(\frac{n}{\epsilon}\log\frac{n}{\epsilon})$.

Another variant is USM
\cite{buchbinder2015tight,feige2011maximizing,buchbinder2018deterministic}. Notably, \cite{buchbinder2015tight} introduced a deterministic algorithm that gives a $1/3$ guarantee in $O(n)$ queries to an oracle for $f$, and a randomized version of their algorithm yields the best possible $1/2$ guarantee in expectation in the same number of queries. 

The final variant of submodular maximization we consider is MSMM
\cite{balkanski2019optimal,friedrich2014maximizing,fisher1978analysis}. The greedy algorithm only yields an approximation ratio of $1/2$ in this setting \cite{fisher1978analysis}. But by extending the discrete submodular function to its continuous counterpart, known as the multilinear extension (see the definition in Section \ref{sec:prelim}), and by solving the problem in this regime, it is proved that an approximation ratio arbitrarily close to the best possible $1-1/e$ can be achieved \cite{badanidiyuru2014fast,calinescu2011maximizing}.

Our work is also related to the best-arm-identification in multi-armed bandit literature \cite{audibert2010best,kaufmann2016complexity,jun2016top}, where the objective is to estimate the best action by choosing arms and receiving stochastic rewards from the environment. The most widely considered setting is the PAC learning setting \cite{even2002pac,kalyanakrishnan2012pac,zhou2014optimal}.

Our paper studies the same noisy setting as \cite{singla2016noisy}. There are essentially two versions of \singla, one gives an approximation guarantee of about $1-1/e$ with high probability (like our algorithm \alg does), and the other gives the same approximation guarantee but is randomized. The benefit of the latter over the former is better sample complexity. The bounds given on the sample complexity of \singla and the ones given in this paper for \alg are instance-dependent and incomparable to one another. We discuss how our algorithm relates to \singla in more depth in Section \ref{app:related}, but we briefly list here the potential advantages of our algorithm \alg compared to \singla: (i) Our algorithm has an approximation guarantee of about $1-1/e$ with high probability as opposed to an approximation guarantee of about $1-1/e$ in expectation as in the randomized version of \singla; (ii) Our algorithm is not as sensitive to small differences in marginal gain between elements since it is not based on the standard greedy algorithm as \singla is; (iii) The algorithm of \singla has greater time complexity beyond just the sample complexity because it requires $O(n\log n)$ computations per each noisy query to $\Delta f$; (iv) Our algorithm makes less estimations of $\Delta f$ overall since it is based on a faster variant of the greedy algorithm (\threshold). We further compare the algorithms experimentally in Section \ref{sec:exp_results}.

\subsection{Other Noisy Model}

If the noisy model is that the the samples are taken from distribution $\mathcal{D}(X)$ to evaluate $f(X)$ instead of the marginal gain, the model also satisfies our setting. This is because if the noisy evaluation of $f(X)$ is R-sub-Gaussian, the noisy evaluation of the marginal gain $\Delta f(X,u)$ can be obtained by taking two noisy samples of $f$ and calculating $\mathcal{D}(X\cup\{u\})-\mathcal{D}(X)$ and that the difference of two independent sub-Gaussian random variables is also sub-Gaussian.

\section{Comparison with \singla}
\label{app:related}
In this section, we provide more discussion about the related algorithm \singla of \citet{singla2016noisy}. \singla combines the standard greedy algorithm with the best arm identification algorithm used in combinatorial bandit literature \cite{chen2014combinatorial}.

In particular, the standard greedy algorithm for \prob \cite{nemhauser1978analysis} goes as follows: A solution $S$ is built by iteratively choosing the element $u\in U$ that maximizes the marginal gain $\Delta f(S,u)$ until the cardinality constraint $\kappa$ is exhausted. \singla follows a setting like ours, so instead of choosing the element of maximum marginal gain at each iteration, they follow the standard greedy algorithm but adaptive sampling following techniques from the best-arm identification problem is done in order to identify the element(s) with the highest marginal gain. The simplest version of their algorithm identifies one element with the highest marginal gain at each iteration, and this version has a guarantee of about $1-1/e$ with high probability as in \alg. This algorithm is \texttt{EXP-GREEDY} in Section \ref{sec:exp}. However, a downside of this approach is that many samples are often needed to distinguish between elements of nearly the same marginal gain. In contrast, notice that our algorithm \alg does not need to compare marginal gains between elements and therefore does not have this issue.

In order to deal with the sample inefficiency, \singla is generalized to a randomized version. The randomized version of \singla involves a subroutine called \texttt{TOPX}, which adaptively samples marginal gains until a subset of elements with relatively high marginal gains have been identified. Then a randomly selected element among the subset is added to the solution set. In particular, given an integer $0 <\kappa'\leq \kappa$, the TOPX algorithm runs TOP-$l$ selection algorithms for each $l\in\{1,2,...,\kappa'\}$, and each of the TOP-$l$ selection algorithm runs until it returns a subset of $l$ items with highest marginal gain with high probability. The TOPX algorithm stops once there exists some $l$ such that the TOP-$l$ selection algorithm ends. This randomized version of \singla has an almost $1-1/e$ approximation guarantee, but it holds in expectation and with high probability. The case where $\kappa'=\kappa$ is \texttt{EXP-GREEDY-K} in Section \ref{sec:exp}.

Now that we have described the two versions of \singla and their corresponding approximation guarantee, we look into more detail about the efficiency of \singla in terms of runtime and sample complexity.

It is proven by \cite{singla2016noisy} that the number of samples taken for each iteration where an element is added to the solution is at most
\begin{align*}
    O\left(n\kappa'R^2\min\left\{\frac{4}{\Delta_{\max}^2},\frac{1}{\epsilon^2}\right\}\log\left(\frac{R^2\kappa n\min\{\frac{4}{\Delta_{\max}^2},\frac{1}{\epsilon^2}\}}{\delta}\right)\right)
\end{align*}
where $\Delta_{\max}$ is the largest difference amongst the first $\kappa'$ element's marginal gains. In other words, this is the number of samples taken each time TOPX is called. Since an element being added involves approximating the marginal gains over all of the elements of $U$, the average sample complexity to compute an approximate marginal gain for a single element is then
\begin{align*}
    O\left(\kappa'R^2\min\left\{\frac{4}{\Delta_{\max}^2},\frac{1}{\epsilon^2}\right\}\log\left(\frac{R^2\kappa n\min\{\frac{4}{\Delta_{\max}^2},\frac{1}{\epsilon^2}\}}{\delta}\right)\right).
\end{align*}
We compare the above to a single call of \samp in our algorithm \alg, which is the analogous computation where we are approximating the marginal gain for an element of $U$. Recall from Theorem \ref{mainthm} that the bound for the sample complexity for \samp is the minimum between
    \begin{align*}
        \left\{\frac{2R^2}{\phi^2(S,u)}\log\left(\frac{4R^2\sqrt{\frac{3nh(\alpha)}{\delta}}}{\phi^2(S,u)}\right),\frac{R^2}{2\epsilon^2}\log \left(\frac{6nh(\alpha)}{\delta}\right)\right\}.
    \end{align*}
If $k'=1$, i.e. the non-randomized version of \singla that has a similar approximation guarantee to our algorithm \alg, then $\Delta_{\text{max}}$ is the difference between the top two marginal gains, which could be very small and therefore the sample complexity quite high. On the other hand, \samp is not sensitive to this property. In order to make $\Delta_{\max}$ bigger, one could increase $k'$ and use the randomized version of \singla. But this case could have worse sample complexity compared to ours as well. If $\Delta_{\max}$ is small and satisfies that $\Delta_{\max}=O(\epsilon)$, then the sample complexity of \singla is worse than our averaged sample complexity by a factor of at least $O(\kappa')$.

Further, since \singla follows the standard greedy algorithm, there are $\kappa$ calls made to TOPX. In contrast, \alg is based on the faster variant of the greedy algorithm, \threshold, and so only requires $O(\log(\kappa))$ iterations over $U$.

Another factor that makes \alg preferable to \singla is its run time besides sample complexity.  From the description of \singla in \cite{singla2016noisy}, we can see that at each time a noisy query to $\Delta f$ is taken, the TOP-$l$ selection algorithm updates the confidence interval for all the elements, and then the algorithm sorts all elements to find the set $M_t$ of $l$ elements with highest empirical marginal gain. Then another estimate of the marginal gains is computed to be the empirical mean plus a confidence interval or minus the confidence interval depending on whether the elements are within $M_t$. Next, the algorithm sorts the newly obtained estimates to find the top-$l$ set with respect to the new estimates. However, both \alg and \texttt{EPS-AP} have more efficient runtime complexity and require only one update of the confidence interval in Line \ref{alg: update confidence interval} and two comparisons in Line \ref{line: comparison to thres 1} and \ref{line: comparison to thres 2} in \samp, which is only $O(1)$ in computation.

\section{Appendix for Section \ref{sec:sampling}}

In this section, we present the omitted content of Section \ref{sec:sampling}. In Section \ref{appdx:proof_of_samp}, we present the proof of Theorem \ref{thm:sampling}. In Section \ref{appdx:proof_of_samp2}, we present the proof of Theorem \ref{thm:sampling2}.

\subsection{Additional Lemmas and Analysis of Theorem \ref{thm:sampling}}

\label{appdx:proof_of_samp}

In this section, we present the proof of Theorem \ref{thm:sampling2}, which provides the theoretical results of sample complexity and approximation guarantee of the \samp algorithm. First of all, we provide the statement of Theorem \ref{thm:sampling} again.

\noindent\textbf{Theorem \ref{thm:sampling}. }\textit{
   For any random variable $X$ that is $R$-sub-Gaussian, if we define $N_1=2R^2/\epsilon^2\log \frac{4}{\delta}$, and $\conf =R\sqrt{\frac{2}{t}\log \frac{8 t^2}{\delta}}$, then the algorithm \samplong achieves that with probability at least $1-\delta$
    \begin{enumerate}
        \item \samp on input $(w,\epsilon,\delta,\mathcal{D}_X,R)$ takes at most the minimum between
    \begin{align*}
        \left\{\frac{2R^2}{\epsilon^2}\log \left(\frac{4}{\delta}\right),\frac{8R^2}{\phi_X^2}\log\left(\frac{16R^2}{\phi_X^2}\sqrt{\frac{2}{\delta}}\right)\right\}
    \end{align*}
    noisy samples, where $R$ is as defined in Section \ref{sec:prelim}, $\phi_X = \frac{\epsilon + |w-\mathbb{E} X|}{2}$.
    \item If \samp returns true, then $\mE X\geq w-\epsilon$. If \samp returns false, then $\mE X\leq w+\epsilon$.
    \end{enumerate}
    }

Before we present the detailed proof, here we provide an overview of the proof. In order for \samp to correctly determine whether $\mE X$ is approximately above or below the threshold $w$, i.e. the second result of Theorem \ref{thm:sampling}, two random events must occur during \samp. The first event is that at all iterations during the for loop, the confidence regions around the sample mean ($\hat{X}_t$) contain the true expected value ($\mE X$). The second event is that after $N_1$ samples taken by the for loop on Line \ref{line:sample_N_1}, we have achieved an $\epsilon$-additive approximation of the expected value. Basically these two events together mean that \samp is correct about the region where $\mE X$ is throughout the algorithm, and therefore it returns the correct answer to whether $\mE X$ is approximately above or below the threshold $w$. The following Lemma states that on a run of \samp, the two events hold with probability at least $1-\delta$.

\begin{lemma}
\label{lem:clean_event}
    With probability at least $1-\delta$, the following two events hold.
    \begin{enumerate}       
    \item At any time $t\in\mathbb{N}_+$, the sample mean $\widehat{X}_t$ satisfies that
    $|\widehat{X}_t-\mE X|\leq \conf$,
    where $\conf:=R\sqrt{\frac{2}{t}\log \frac{8 t^2}{\delta}}$.
    \item The sample mean $\widehat{X}_{N_1}$ at time $N_1:=\frac{2R^2}{\epsilon^2}\log \frac{4}{\delta}$ satisfies that $|\widehat{X}_{N_1}-\mE X|\leq \epsilon  $.
    \end{enumerate}
\end{lemma}
\begin{proof}
    First, we apply the Hoeffding's inequality on $\widehat{X}_{N_1}$ and it follows that 
    \begin{align*}
        P\left(|\widehat{X}_{N_1}-\mE X|\geq \epsilon \right)\leq 2\exp\left(-\frac{N_1\epsilon^2}{2R^2}\right)\leq\frac{\delta}{2}.
    \end{align*}
    Next, by applying the Hoeffding's inequality for any fixed time $t$, we have that
    \begin{align*}
        P\left(|\widehat{X}_t-\mE X|\geq \conf \right)\leq \frac{\delta}{4t^2}.
    \end{align*}
By taking the union bound for any time $t$, it follows that
\begin{align*}
    &P(\exists t \text{ s.t. }|\widehat{X}_t-\mE X|\geq \conf )\\
    &\leq\sum_{t=1}^\infty P(|\widehat{X}_t-\mE X|\geq \conf )\\
    &\leq \frac{\delta}{4}\sum_{t=1}^\infty
    \frac{1}{t^2}\leq\frac{\delta}{2}.
\end{align*}
By taking the union bound again on the two events above, we have that
\begin{align*}
    &P(|\widehat{X}_{N_1}-\mE X|\geq \epsilon\text{ or }\exists t \text{ s.t. }|\widehat{X}_t-\mE X|\geq \conf )\\
    &\leq P\left(|\widehat{X}_{N_1}-\mE X|\geq \epsilon \right)+P(\exists t \text{ s.t. }|\widehat{X}_t-\mE X|\geq \conf )\\
    &\leq \delta.
\end{align*}
\end{proof}

The second lemma required for establishing Theorem \ref{thm:sampling} concerns the number of samples that \samp takes before its approximation of $\mE X$ is sufficiently accurate so that it can terminate. The number of samples depends on how far away the true value of $f$ is from the threshold. In particular, Lemma \ref{lem:conf_int} below states that once the confidence interval goes beneath the corresponding $\phi$ value (as defined in Theorem \ref{thm:sampling}), then \samp will complete.
Lemma \ref{lem:conf_int} and its proof are stated below.
\begin{lemma}
\label{lem:conf_int}
    With probability at least $1-\delta$, when the confidence interval $\conf$ satisfies that
    \begin{align*}
        \conf \leq \phi_X,
    \end{align*}
    the sampling of $ X$ finishes, where $\phi_X = \frac{\epsilon + |w-\mE X|}{2}$.
    
\end{lemma}
\begin{proof}
     If $\conf\leq\frac{\epsilon+w-\mE X}{2}$, then we have $\mE X\leq w+\epsilon-2\conf $. From Lemma \ref{lem:clean_event}, we have that with probability at least $1-\delta$, it holds that $\widehat{X}_t-\mE X\leq\conf$. Therefore,
     \begin{align*}
         &\widehat{X}_t+\conf\\
         &\leq (\widehat{X}_t-\mE X)+\mE X+\conf\\
         &\leq w+\epsilon.
     \end{align*}
     Thus the algorithm ends.
     
     Similarly, we consider the case where $\conf\leq\frac{\epsilon-w+\mE X}{2}$. In this case, we have that $\mE X\geq 2\conf+w-\epsilon$.
    Notice that conditioned on the clean event defined in Lemma \ref{lem:clean_event}, we have that $\widehat{X}_t-\mE X\geq-\conf$. Then
    \begin{align*}
        \widehat{X}_t-\conf&\geq \widehat{X}_t-\mE X\\
        &\qquad+\mE X-\conf\\
        &\geq-\conf+2\conf\\
        &\qquad+w-\epsilon-\conf\\
        &=w-\epsilon.
    \end{align*}
    Therefore, the algorithm ends. 
\end{proof}

Now we present the proof of Theorem \ref{thm:sampling}. 
\begin{proof}
    We first prove the result on sample complexity, which is the first result in Theorem \ref{thm:sampling}. From Lemma \ref{lem:conf_int}, we have if
    \begin{align}
    \label{ineq:confidence_intv}
     \conf \leq \phi_X,
    \end{align}
    then the Algorithm \ref{alg:samp} finishes.
    Since $\conf =R\sqrt{\frac{2}{t}\log \frac{8 t^2}{\delta}}$, we have the above inequality (\ref{ineq:confidence_intv}) is equivalent to that
    \begin{align*}
     \frac{4\log (\sqrt{\frac{8}{\delta}}t)}{t}\leq \frac{\phi^2_X}{R^2}.
    \end{align*}
    Since $\sqrt{\frac{8}{\delta}}t\geq 2$, from Lemma \ref{lem:logx_over_x}, we have when
    \begin{align*}
        t\geq\frac{8R^2}{\phi^2_X}\log(\frac{16R^2}{\phi^2_X}\sqrt{\frac{2}{\delta}}),
    \end{align*}
    the above inequality holds and the Algorithm \ref{alg:samp} ends. Therefore, the number of samples required is bounded by $\min\{\frac{8R^2}{\phi^2_X}(\log\frac{16R^2}{\phi^2_X}\sqrt{\frac{2}{\delta}}),N_1\}$. 

    Next, we prove the second result in Theorem \ref{thm:sampling}. If $t=N_1$ when \samp ends, then conditioned on the events in Lemma \ref{lem:clean_event}, $|\widehat{X}_{N_1}-\mE X|\leq\epsilon$. Thus 
  if the algorithm returns true, $\mE X\geq \widehat{X}_t - \epsilon\geq w-\epsilon$. If the output of the algorithm is false, then $\widehat{X}_t \leq w$. Similarly we have that $\mE X\leq \widehat{X}_t + \epsilon\leq w+\epsilon$. Secondly, let us consider the case where $t<N_1$ when the algorithm \samp ends. Conditioned on the second event in Lemma \ref{lem:clean_event}, we have if the algorithm \samp returns true, $\mE X\geq\widehat{X}_t-\conf\geq w-\epsilon$. If the output is false, $\mE X\leq\widehat{X}_t+\conf\leq w+\epsilon$.
\end{proof}

\subsection{Proof and Analysis of Theorem \ref{thm:sampling2}}
\label{appdx:proof_of_samp2}
In this section, we present the omitted proofs of Theorem \ref{thm:sampling2} in Section \ref{sec:sampling}. Theorem \ref{thm:sampling2} provides another result of the approximation error for the \samp algorithm by defining the confidence interval $C_t$ to be $C_t=\frac{3R}{t\alpha}\log\big(\frac{8t^2}{\delta}\big)$ and the worst-case sample complexity $N_1$ to be $N_1=\frac{3R}{\epsilon\alpha}\log \left(\frac{4}{\delta}\right)$. We begin by stating Theorem \ref{thm:sampling2}, followed by the proof of the theorem. Finally, we establish the lemmas crucial to the proof of the theorem.

\noindent\textbf{Theorem \ref{thm:sampling2}. }\textit{
   For any random variable $X$ that is bounded in the range of $[0,R]$, if we define $C_t=\frac{3R}{t\alpha}\log(\frac{8t^2}{\delta})$, and $N_1=\frac{3R}{\epsilon\alpha}\log \left(\frac{4}{\delta}\right)$ where $\alpha$ is an additional parameter that controls the multiplicative error rate, the algorithm \samplong achieves that with probability at least $1-\delta$, the algorithm \sampnewlong achieves that with probability at least $1-\delta$
    \begin{enumerate}
        \item \sampnew on input $(w,\epsilon,\delta,\mathcal{D}_X,R)$ takes at most the minimum between
    \begin{align*}
        \left\{\frac{3R}{\epsilon\alpha}\log \left(\frac{4}{\delta}\right),\frac{12R}{\alpha\phi_X'}\log\left(\frac{12R}{\alpha\phi_X'}\sqrt{\frac{8}{\delta}}\right)\right\}
    \end{align*}
    noisy samples, $\phi_X' = \frac{\epsilon -\alpha\mE X+| w-\mathbb{E} X|}{2}$.
    \item If the output is true, then $(1+\alpha)\mE X\geq w-\epsilon$. If the output is false, then $(1-\alpha)\mE X\leq w+\epsilon$.
    \end{enumerate}
}
\begin{proof}
   First of all, we prove the result on the sample complexity as presented in the first result in Theorem \ref{thm:sampling2}. From Lemma \ref{lem:conf_int2}, we have if $$\conf\leq\phi_X',$$ the algorithm ends. By definition of $\conf$, we have that the above result is equivalent to that
   \begin{align*}
       \frac{3R}{t\alpha}\log(\frac{8t^2}{\delta})\leq\phi_X'.
   \end{align*}
   From Lemma \ref{lem:logx_over_x}, we have that when
   \begin{align*}
       t\geq\frac{12R}{\alpha\phi_X'}\log\big(\frac{12R}{\alpha\phi_X'}\sqrt{\frac{8}{\delta}}\big)
   \end{align*}
   the above inequality holds and thus the algorithm ends. From the description of the algorithm, we have that the number of samples is also bounded by $N_1$. Therefore, the first result in Theorem \ref{thm:sampling2} is proved.

   Next, we prove the second result on the difference of $\mE X$ and $w$. If $t=N_1$ when \samp ends, then if the algorithm returns true, we have that with probability at least $1-\delta$,
   \begin{align*}
       (1+\alpha)\mathbb{E}X+\epsilon\geq \widehat{X}_{N_1}\geq w.
   \end{align*}
   where the first inequality follows from Lemma \ref{lem:clean_event_samp2}. If the algorithm returns false and $t=N_1$ when the algorithm ends, then with probability at least $1-\delta$, 
   \begin{align*}
       (1-\alpha)\mathbb{E}X-\epsilon\leq \widehat{X}_{N_1}\leq w.
   \end{align*} 
  Next, we consider the case where $t<N_1$ when the algorithm ends. Conditioned on the first event in Lemma \ref{lem:clean_event_samp2} and from the stopping condition of \samp, we can see if \samp returns true, then
  \begin{align*}
       (1+\alpha)\mathbb{E}X+\epsilon\geq \widehat{X}_t-\conf+\epsilon\geq w.
   \end{align*}
   If \sampnew returns false, then
  \begin{align*}
       (1-\alpha)\mathbb{E}X-\epsilon\leq \widehat{X}_t+\conf-\epsilon\leq w.
   \end{align*} 
\end{proof}

We now present the statement and the proofs of the lemmas used in the proof of Theorem \ref{thm:sampling2}. We start by introducing Lemma \ref{lem:clean_event_samp2}, which defines 
two "clean events".
\begin{lemma}
    \label{lem:clean_event_samp2}
    With probability at least $1-\delta$, the following two events hold.
    \begin{enumerate}       
    \item At any time $t\in\mathbb{N}_+$, the sample average $\widehat{X}_t$ satisfies that
    $|\widehat{X}_t-\mE X|\leq \alpha\mathbb{E}X+\conf$,
    where $\conf:=\frac{3R}{t\alpha}\log(\frac{8t^2}{\delta})$.
    \item The sample average $\widehat{X}_{N_1}$ at time $N_1:=\frac{3R}{\epsilon\alpha}\log \left(\frac{4}{\delta}\right)$ satisfies that $|\widehat{X}_{N_1}-\mE X|\leq \alpha\mE X+\epsilon$.
    \end{enumerate}
\end{lemma}
\begin{proof}
    By applying the Lemma \ref{lem:chernoff}, we have that for any fixed time step $t$, 
    \begin{align*}
        P\big(|\widehat{X}_t-\mE X|> \alpha\mathbb{E}X+\conf\big)&\leq2\exp\{-\frac{t\alpha\conf}{3R}\}\\
        &\leq\frac{\delta}{4t^2}.
    \end{align*}
    By taking the union bound over all time step $t\in\mathbb{N}_+$, we have
    \begin{align*}
        &P\big(|\widehat{X}_t-\mE X|> \alpha\mathbb{E}X+\conf,\forall t\big)\\
        \leq&\sum_{t=1}^{\infty}P\big(|\widehat{X}_t-\mE X|> \alpha\mathbb{E}X+\conf\big)\\
    \leq&\sum_{t=1}^{\infty}\frac{\delta}{4t^2}\leq\frac{\delta}{2}.
    \end{align*}
    Therefore the first event in the lemma holds with probability at least $1-\delta/2$. By applying the Lemma \ref{lem:chernoff} again, we have that for $t=N_1$, 
    \begin{align*}
        P\big(|\widehat{X}_{N_1}-\mE X|> \alpha\mathbb{E}X+\epsilon\big)\leq2\exp\{-\frac{N_1\alpha\epsilon}{3R}\}=\delta/2.
    \end{align*}
     It follows that the second event in the lemma holds with probability at least $1-\delta/2$. By combining the two results and applying the union bound again, we know that with probability at least $1-\delta$, the two events both hold.
\end{proof}
Next, we prove another lemma that is used in the proof of the sample complexity result in Theorem \ref{thm:sampling2}.

\begin{lemma}
\label{lem:conf_int2}
    With probability at least $1-\delta$, when the confidence interval $\conf$ satisfies that
    \begin{align*}
        \conf \leq \phi'_X,
    \end{align*}
    the sampling of $ X$ finishes, where $\phi_X' = \frac{\epsilon-\alpha\mE X + |w-\mE X|}{2}$.
\end{lemma}
\begin{proof}
    To prove the lemma, it is equivalent to prove that when $C_t\leq\frac{\epsilon-\alpha\mE X + w-\mE X}{2}$ or $C_t\leq\frac{\epsilon-\alpha\mE X - w+\mE X}{2}$, the algorithm ends. First of all, if $C_t\leq\frac{\epsilon-\alpha\mE X + w-\mE X}{2}$, then $(1+\alpha)\mE X+2\conf\leq w+\epsilon$. Conditioned on the events in Lemma \ref{lem:clean_event_samp2}, we have that with probability at least $1-\delta$, it follows that
    \begin{align*}
        \widehat{X}_t+\conf\leq(1+\alpha)\mE X+2\conf\leq w+\epsilon.
    \end{align*}
    Thus the sampling of $X$ ends. Next, if $C_t\leq\frac{\epsilon-\alpha\mE X - w+\mE X}{2}$, then $(1-\alpha)\mE X-2\conf\geq w-\epsilon$. By Lemma \ref{lem:clean_event_samp2},
    \begin{align*}
        \widehat{X}_t-\conf\geq(1-\alpha)\mE X-2\conf\geq w-\epsilon.
    \end{align*}
    Then the algorithm ends.
\end{proof}
\section{Appendix for Section \ref{sec:monotone}}
In this section, we present the omitted content in Section \ref{sec:monotone}, which is organized as follows: In Section \ref{appdx:compare_to_sample_before}, we discuss and compare the theoretical performance of our algorithm, \algmono, with the sampling-before-hand algorithm in the context of the influence maximization problem. Next, we provide the proof of our main result, Theorem \ref{mainthm}, in Section \ref{appdx:proof_of_mono}. Theorem \ref{mainthm} gives the theoretical guarantee of the \alg algorithm. Finally, in Section \ref{appdx:proof_of_mono2}, we provide the brief description of \algmono algorithm and the detailed proof of Theorem \ref{thm:monotone2}.

\subsection{Comparing to sampling-before-hand algorithm}
\label{appdx:compare_to_sample_before}
Before we describe the sampling-before-hand algorithm and dive into the comparison of this algorithm and \algmono, first we present a detailed description of the application of influence maximization. In the influence maximization problem in large-scale networks, the submodular objective is defined as follows:

{\textbf{Influence Maximization }} Suppose the social graph is described by $G=(V,E,\bar{\vect{w}})$, where $V$ is the set of nodes with $|V|=n$, $E$ denotes the set of edges, and $\bar{\vect{w}}$ is the weight vector defined on the set of edges $E$. Given a seed set $S$, let us define $f(S;\vect{w})$ to be the number of nodes reachable from the seed set $S$ under the graph realizations determined by a random weight vector $\vect{w}$. Therefore, $f(S;\vect{w})$ is bounded by the number of nodes in the graph, i.e., $0\leq f(S;\vect{w})\leq n$. The submodular objective is defined as $f(S)=\mE_{\vect{w}\sim\mathcal{D}(\bar{\vect{w}})}f(S;\vect{w})$. Here $\mathcal{D}(\bar{\vect{w}})$ is the distribution of the weight vector.  

The marginal gain can be calculated as 
\begin{align*}
\Delta f(S,s)&=\mE_{\vect{w}\sim\mathcal{D}(\bar{\vect{w}})}\Delta f(S,s;\vect{w})\\
&=\mE_{\vect{w}\sim\mathcal{D}(\bar{\vect{w}})}f(S\cup\{u\};\vect{w})-\mE_{\vect{w}\sim\mathcal{D}(\bar{\vect{w}})} f(S),
\end{align*}
which is also bounded in the range of $[0,n]$.

Next, we describe the sampling-before-hand algorithm, which runs as follows:
\begin{enumerate}
    \item \textbf{Sampling:} The algorithm begins by sampling $N$ i.i.d graph realizations. For the $i$-th graph realization, we denote its weight vector as $\vect{w}_i$ and the corresponding function value for a set $S$ as $f_i(S)=f(S;\vect{w}_i)$.
    \item \textbf{Average Objective Function:} Next, we define the average function $\hat{f}$ over the sampled graph realizations. This function is given by $\hat{f}(S)=\frac{\sum_{i=1}^Nf_i(S)}{N}$ for any $S\subseteq U$.
    \item \textbf{Threshold-Greedy Algorithm: }We run \thresholdlong (\threshold) with the average function $\hat{f}$ as the submodular objective. The output of the threshold-greedy algorithm is returned as the solution set, denoted as $S$.
\end{enumerate}

\subsubsection{Analysis of Sampling-before-hand approach}
Now we present the analysis of the sampling-before-hand algorithm. From Lemma \ref{lem:chernoff}, and by taking the union bound, we can prove that 
\begin{align*}
    P(|\hat{f}(X)&-f(X)|\geq\alpha f(X)+\epsilon, \forall |X|\leq \kappa)\\
    &\leq2n^{\kappa}\exp\{-\frac{N\alpha\epsilon}{3n}\}.
\end{align*}
Therefore, to guarantee that $$P(|\hat{f}(X)-f(X)|\geq\alpha f(X)+\epsilon, \forall |X|\leq \kappa)\leq\delta,$$ it is enough to take 
$$N\in\Omega\big(\frac{n}{\alpha\epsilon}(\kappa\log n+\log\frac{1}{\delta})\big)$$ number of graph realizations. Since \threshold requires $\frac{n}{\alpha}\log\frac{n}{\alpha}$ number of evaluations of $\hat{f}$. The total number of evaluations of noisy realizations of $f$ would be 
\begin{align*}
    O\big(\frac{n^2}{\alpha^2\epsilon}\log\frac{n}{\alpha}(\kappa\log n+\log\frac{1}{\delta})\big).
\end{align*}
Next, we prove the approximation guarantee. From the analysis above, we can see that with probability at least $1-\delta$
\begin{align*}
   f(S)&\geq\frac{\hat{f}(S)-\epsilon}{1+\alpha} \\
   &\geq(1-\alpha)\hat{f}(S)-\epsilon \\
   &\geq (1-1/e-\alpha)(1-\alpha)\hat{f}(OPT)-\epsilon\\
   &\geq (1-1/e-2\alpha)\hat{f}(OPT)-\epsilon\\
   &\geq (1-1/e-3\alpha)f(OPT)-2\epsilon.
\end{align*}
Now we compare the theoretical guarantees of the sampling-based algorithm and \algmono. The theoretical results of \algmono are in Theorem \ref{thm:monotone2}. Notice that by substituting $\epsilon$ with $\epsilon/k$ in Theorem \ref{thm:monotone2}, we obtain a similar approximation guarantee for \algmono: $f(S)\geq (1-1/e-O(\alpha))f(OPT)-O(\epsilon)$, which matches the result achieved by the sampling-based algorithm. 

For the sample complexity, each call of \sampnew requires at most the minimum between
$O(\frac{\kappa n}{\epsilon\alpha}\log\frac{n}{\delta})$ and $O(\frac{n}{\alpha\phi'(S,u)}\log\frac{n}{\alpha\phi'(S,u)\delta})$ number of samples. The first bound is derived by considering the fixed $\epsilon$- approximation of the marginal gain. If we only consider this bound, then the total number of marginal gains would be $O(\frac{kn^2}{\epsilon\alpha^2}(\log\frac{n}{\alpha})(\log\frac{n}{\delta}))$. In practice, the parameter $\delta$ is usually set to be $O(Poly(1/n))$, such as $O(1/n^2)$. Consequently, the sample complexity of both \algmono and the sampling-before-hand approach would be $O(\frac{\kappa n}{\epsilon\alpha}\log n)$. However, it is important to note that \sampnew employs the adaptive thresholding technique, which often allows the algorithm to terminate much earlier before reaching the worst-case sample complexity required for fixed-confidence approximation. As a result, \algmono can be significantly more sample-efficient in practice.

In comparison to the sampling-before-hand algorithm, \algmono offers an additional advantage. The sampling-before-hand algorithm requires obtaining $N$ independent graph realizations and storing all the data at the beginning of the algorithm. However, this can pose practical challenges. Firstly, in scenarios where both $N$ and the graph are exceedingly large, storing all the data might be infeasible. Secondly, in certain applications, such as real-world social networks, obtaining an entire graph realization may not be possible, as we might only be able to sample a portion of the graph at each time.

\subsection{Proof of Theorem \ref{mainthm}}
\label{appdx:proof_of_mono}
In this section, we move towards proving one of our main results, Theorem \ref{mainthm} about \alg for the MSMC problem. We state the theorem again as follows.

\noindent\textbf{Theorem \ref{mainthm}. }\textit{
    Suppose the noisy marginal gain of any subset $S\subseteq U$ and element $s\in U$ is $R$-sub-Gaussian, then \alg makes at most $n\log(\kappa/\alpha)/\alpha$ calls of \samp. In addition, with probability at least $1-\delta$, the following statements hold:
    \begin{itemize}
        \item The exact function value of the output solution set $S$ satisfies that $f(S)\geq(1-e^{-1}-\alpha)f(OPT)-2\kappa\epsilon$;
    \item Each call of \samp on input ($w$, $\epsilon$, $\frac{2\delta}{3nh(\alpha)}$, $\mathcal{D}(S,u)$, $R$) takes at most the minimum between
    \begin{align*}
        \frac{8R^2}{\phi^2(S,u)}\log\left(\frac{16R^2\sqrt{\frac{3nh(\alpha)}{\delta}}}{\phi^2(S,u)}\right)
    \end{align*}
    and
    \begin{align*}
       \frac{2R^2}{\epsilon^2}\log \left(\frac{6nh(\alpha)}{\delta}\right)
    \end{align*}
    noisy samples. Here $OPT$ is an optimal solution to the MSMC problem, $\phi(S,u) = \frac{\epsilon + |w-\Delta f(S,u)|}{2}$, and $h(\alpha)=\frac{\log{(\kappa/\alpha)}}{\alpha}$.
    \end{itemize}
}


To prove the theorem, we first present a series of needed lemmas. In order for the guarantees of Theorem \ref{mainthm} to hold, two random events must occur during \alg. The first event is that the estimate of the max singleton value of $f$ on Line \ref{alg:ATG:line:sample-mean} in \alg is an $\epsilon$-approximation of its true value. More formally, we have the following lemma.
 \begin{lemma}
    \label{lem:clean_event_monotone}
      With probability at least $1-\delta/3$, we have $\max_{s\in U}f(s)-\epsilon\leq d\leq\max_{s\in U}f(s)+\epsilon$.
\end{lemma}
\begin{proof}
    For a fix $s\in U$, by Hoeffding's inequality we would have that
    \begin{align}
        P(|\hat{f}(s)-f(s)|\geq\epsilon)\leq\frac{\delta}{3n}.
    \end{align}
    Taking a union bound over all elements we would have that
    \begin{align*}
        P(\exists s\in U, s.t.|\hat{f}(s)-f(s)|\geq\epsilon )\leq\frac{\delta}{3}.
    \end{align*}
    Then with probability at least $1-\frac{\delta}{3}$, $|\hat{f}(s)-f(s)|\leq\epsilon$ 
    for all $s\in U$.  
    It then follows that $\forall s\in U$, $f(s)-\epsilon\leq\hat{f}(s)\leq f(s)+\epsilon$. Therefore $$\max_{s\in U}(f(s)-\epsilon)\leq\max_{s\in U}\hat{f}(s)\leq\max_{s\in U}(f(s)+\epsilon).$$ Thus we have
    \begin{align*}
        \max_{s\in U}f(s)-\epsilon\leq d\leq\max_{s\in U}f(s)+\epsilon.
    \end{align*} 
    
\end{proof}

The second event is that for all calls of \samp, the result in Theorem \ref{thm:sampling} holds, which is stated formally as follows. 
 \begin{lemma}
 \label{lem:clean_event_call_to_CS_mono}
     With probability at least $1-2\delta/3$, we have that during each call of \samp with the solution set $S$ and element $u$, the output satisfies that if $thre$ is true, then $\Delta f(S,u)\geq w-\epsilon$. If $thre$ is false, then $\Delta f(S,u)\leq w+\epsilon$.
 \end{lemma}
 \begin{proof}
    First, since each sampling result of the marginal gain is assumed to be $R$-sub-Gaussian, by applying the result in Theorem \ref{thm:sampling}, we can prove that for each call of \samp during \alg with a fixed solution set $S$ and evaluated element $u$ as input, and with probability at least $1-\frac{2\delta}{3nh(\alpha)}$, if the output of \samp is true, then $\Delta f(S,u)\geq w-\epsilon$. Otherwise, $\Delta f(S,u)\leq w+\epsilon$. Since there are $n$ elements in the universe and the number of iterations in Algorithm \ref{alg:ATG} is bounded by $\frac{\log{\kappa/\alpha}}{\log(1/(1-\alpha))}\leq h(\alpha)$, there are at most $nh(\alpha)$ number of marginal gains to evaluate in Algorithm \ref{alg:ATG}. Therefore, by taking the union bound we have that with probability at least $1-2\delta/3$, the statement holds.
\end{proof}

With the above Lemma \ref{lem:clean_event_monotone}  and Lemma \ref{lem:clean_event_call_to_CS_mono}, and by taking the union bound, we have that with probability at least $1-\delta$, the two events both hold during the \alg. Our next step is to show that if both of the events occur during \alg, the approximation guarantees and sample complexity of Theorem \ref{mainthm} hold. To this end, we need the following Lemma \ref{lem:mar_gain}. 

\begin{lemma}
\label{lem:mar_gain}
    Assume the events defined in Lemma \ref{lem:clean_event_monotone} and Lemma \ref{lem:clean_event_call_to_CS_mono} above hold during \alg. Then for any element $s$ that is added to the solution set $S$, the following statement holds.
    \begin{align*}
        \Delta f(S,s)\geq \frac{1-\alpha}{\kappa}(f(OPT)-f(S))-2\epsilon.
    \end{align*}
\end{lemma}

\begin{proof}   
    At the first iteration, if an element $s$ is added to the solution set, it holds by Lemma \ref{lem:clean_event_monotone} that $\Delta f(S,s)\geq w-\epsilon$.
   Since at the first iteration $w=d$ and $d\geq \max_{s\in U}f(s)-\epsilon$. It follows that $\Delta f(S,s)\geq \max_{s\in U}f(s)-2\epsilon$.
    By submodularity we have that $\kappa\max_{s\in U}f(s)\geq f(OPT)$. Therefore, $\Delta f(S,s)\geq \frac{f(OPT)-f(S)}{\kappa}-2\epsilon$.
    
    At iteration $i$ where $i>1$, if an element $o\in OPT$ is not added to the solution set, then it is not added to the solution at the last iteration, where the threshold is $\frac{w}{1-\alpha}$. By Lemma \ref{lem:clean_event}, we have
    $\Delta f(S,o)\leq \frac{w}{1-\alpha}+\epsilon$.
    Since for any element $s$ that is added to the solution at iteration $i$, by Lemma \ref{lem:clean_event} it holds that $\Delta f(S,s)\geq w-\epsilon$. Therefore, we have
    \begin{align*}
        \Delta f(S,s)&\geq w-\epsilon \\
        &\geq(1-\alpha)(\Delta f(S,o)-\epsilon)-\epsilon\\
        &\geq (1-\alpha)\Delta f(S,o)-2\epsilon. 
    \end{align*}
    By submodularity, it holds that $\Delta f(S,s)\geq (1-\alpha)\frac{f(OPT)-f(S)}{\kappa}-2\epsilon$.
\end{proof}
We now prove the main result, Theorem \ref{mainthm}, which relies on the previous Lemma \ref{lem:clean_event_monotone}, \ref{lem:clean_event_call_to_CS_mono} and \ref{lem:mar_gain}.
\begin{proof}
    The events defined in Lemma \ref{lem:clean_event_monotone}, \ref{lem:clean_event_call_to_CS_mono} hold with probability at least $1-\delta$ by combining Lemma \ref{lem:clean_event_monotone}, \ref{lem:clean_event_call_to_CS_mono}, and taking the union bound. Therefore in order to prove Theorem \ref{mainthm}, we assume that both the two events have occurred. 
    The proof of the first result in the theorem depends on the Lemma \ref{lem:mar_gain}. First, consider the case where the output solution set satisfies $|S|=\kappa$. Denote the solution set $S$ after the $i$-th element is added as $S_i$. Then by Lemma \ref{lem:mar_gain}, we have
    \begin{align*}
        f(S_{i+1})\geq \frac{1-\alpha}{\kappa}f(OPT)+(1-\frac{1-\alpha}{\kappa})f(S_i)-2\epsilon.
    \end{align*}
By induction, we have that
\begin{align*}
    f(S_{\kappa})&\geq(1-(1-\frac{1-\alpha}{\kappa})^k)\{f(OPT)-\frac{2\kappa\epsilon}{1-\alpha}\}\\
    &\geq (1-e^{-1+\alpha})\{f(OPT)-\frac{2\kappa\epsilon}{1-\alpha}\}\\
    &\geq(1-e^{-1}-\alpha)\{f(OPT)-\frac{2\kappa\epsilon}{1-\alpha}\}\\
    &\geq(1-e^{-1}-\alpha)f(OPT)-2\kappa\epsilon.
\end{align*}
 If the size of the output solution set $S$ is smaller than $\kappa$, then any element $o\in OPT$ that is not added to $S$ at the last iteration satisfies that
$\Delta f(S,o)\leq w+\epsilon$.
Since the threshold $w$ in the last iteration satisfies that $w\leq\frac{\alpha d}{\kappa}$, we have
\begin{align*}
    \Delta f(S,o)&\leq\frac{\alpha d}{\kappa}+\epsilon.
\end{align*}
It follows that
\begin{align*}
    \sum_{o\in OPT\backslash S}\Delta f(S,o)
    &\leq\alpha (\max_{s\in S}f(s)+\epsilon)+\kappa\epsilon\\
    &\leq\alpha f(OPT)+2\kappa\epsilon.
\end{align*}
 By submodularity and monotonicity of $f$, we have $f(S)\geq (1-\alpha)f(OPT)-2\kappa\epsilon$.

\end{proof}

\subsection{Proof of Theorem \ref{thm:monotone2}}
\label{appdx:proof_of_mono2}
In this section, we analyze Theorem \ref{thm:monotone2}, which establishes the sample complexity and approximation ratio guarantees for the solution obtained by \alglongmono (\algmono). \algmono is an algorithm for the MSMC problem where only noisy queries to $\Delta f$ are available. The corresponding algorithm description is presented in Algorithm \ref{alg:ATG2}.

First of all, we give a brief description of the \algmono algorithm. \algmono shares a similar idea with the \alg algorithm presented in Section \ref{sec:monotone}. Both of the two algorithms utilize \samp to determine if the expectation of the evaluated marginal gain is approximately above a threshold $w$. However, they differ in their error approximation guarantees on the expectation of evaluated marginal gain. Specifically, \alg invokes the \samplong procedure (\samp) with the following inputs: threshold $w$, approximation error bound $\epsilon$, error probability $\frac{2\delta}{3nh'(\alpha)}$ where $h'(\alpha)=\frac{3\log{(3\kappa/\alpha)}}{\alpha}$, random distribution $\mathcal{D}(S,u)$, and upper bound of the noisy marginal gain $R$ as input. Different from the subroutine algorithm \samp in \alg, the worst-case query complexity $N_1$ and confidence interval $C_t$ in \samp are defined as in Theorem \ref{thm:sampling2} with the multiplicative input parameter set to $\alpha/3$.
Therefore, the output of \samp in \algmono satisfies that with high probability, if the output is true, then $(1+\alpha/3)\Delta f(S,u)\geq w-\epsilon$. If the output is false, then $(1-\alpha/3)\Delta f(S,u)\leq w+\epsilon$.


Next, we present the analysis of Theorem \ref{thm:monotone2}.
\begin{theorem}
    \label{thm:monotone2}
    Suppose the noisy marginal gain of any subset $S\subseteq U$ and element $s\in U$ is bounded in $[0,R]$, \algmono makes at most $3n\log(\kappa/\alpha)/\alpha$ calls of \samp. In addition, with probability at least $1-\delta$, the following statements hold:
    \begin{itemize}[noitemsep]
        \item The exact function value of the output solution set $S$ satisfies that $f(S)\geq(1-e^{-1}-\alpha)f(OPT)-2\kappa\epsilon$;
    \item Each call of \samp on input ($w$, $\epsilon$, $\frac{2\delta}{3nh'(\alpha)}$, $\mathcal{D}(S,u)$, $R$) takes at most the minimum between
    \begin{align*}
\frac{9R}{\epsilon\alpha}\log \left(\frac{6nh'(\alpha)}{\delta}\right)
    \end{align*}
    and 
    \begin{align*}
       \frac{36R}{\alpha\phi'(S,u)}\log\left(\frac{36R}{\alpha\phi'(S,u)}\sqrt{\frac{12nh'(\alpha)}{\delta}}\right)
    \end{align*}
    noisy samples. Here $OPT$ is an optimal solution to the MSMC problem, $$\phi'(S,u) = \frac{\epsilon -\alpha\Delta f(S,u)/3+ |w-\Delta f(S,u)|}{2},$$ and $$h'(\alpha)=\frac{3}{\alpha}\log{(\frac{3\kappa}{\alpha})}.$$
    \end{itemize}
    
\end{theorem}


Notice that the sample complexity in Theorem \ref{thm:monotone2} has a dependence of $O(R)$ concerning the order of the parameter $R$, while the sample complexity result in Theorem \ref{mainthm} is $O(R^2)$ in the order of $R$. Consequently, in some applications such as influence maximization, where $R$ can be as large as the size of ground set $n$, Theorem \ref{thm:monotone2} has an advantage in sample complexity compared with Theorem \ref{mainthm}.

Now we present the proof of Theorem \ref{thm:monotone2}. The organization of the proof for Theorem \ref{thm:monotone2} is as follows: we begin by presenting the proof of the Theorem \ref{thm:monotone2}. Then the proofs of two lemmas, Lemma \ref{lem:clean_event_call_to_CS_mono2} and Lemma \ref{lem:mono_iterative_result}, that are used in the proof of Theorem \ref{thm:monotone2} are presented.
\label{appdx:mono2}

\begin{algorithm}[t]
\caption{\alglongmono (\algmono)}\label{alg:ATG2}
 \begin{algorithmic}[1]
 \STATE \textbf{Input:} $\epsilon$, $\delta, \alpha$
 \STATE $N_3\gets \frac{9R}{\epsilon\alpha}\log\frac{6n}{\delta}$
 \FORALL{$s\in U$}
 \STATE $\hat{f}_{N_3}(s) \gets $ sample mean over $N_3$ samples from $\mathcal{D}(\emptyset,s)$ \label{alg:ATG:line:sample-mean2}
 \ENDFOR
  
  \STATE $d:=\max_{s\in U}\hat{f}_{N_3}(s)$, 
 \STATE $w\gets d$, $S\gets \emptyset$
 \WHILE{$w> \frac{\alpha d}{3\kappa}$}
 \FORALL{$u\in U$} \label{line:algmono_loop_start}
\IF{$|S|<\kappa$}
 \STATE thre = \samplong($w$, $\epsilon$, $\frac{2\delta}{3nh'(\alpha)}$, $\mathcal{D}(S,u)$, $R$)
 \IF{thre}
 \STATE $S\gets S\cup \{u\}$
 \ENDIF
  \ENDIF
 \ENDFOR

 \STATE $w=w(1-\alpha/3)$
 \label{line:algmono_loop_end}
 \ENDWHILE
 \STATE \textbf{return} $S$
 \end{algorithmic}
\end{algorithm}

    \begin{proof}
        First, since the number of iterations in the while loop from Line \ref{line:algmono_loop_start} to Line \ref{line:algmono_loop_end} in \algmono (see Algorithm \ref{alg:ATG2}) is upper bounded by $\frac{3}{\alpha}\log\frac{3\kappa}{\alpha}$, \algmono makes at most $\frac{3n}{\alpha}\log\frac{3\kappa}{\alpha}$ calls of \sampnew. Next, we prove the second result in Theorem \ref{thm:monotone2}, which guarantees the upper bound on the required number of samples. By applying Lemma \ref{lem:clean_event_call_to_CS_mono2} on the sampling of the noisy marginal gain of $\Delta f(S,u)$, we can see that with probability at least $1-\delta$, for each call of \sampnew, we have that the number of noisy queries is bounded by the minimum between 
        $\frac{9R}{\epsilon\alpha}\log \left(\frac{6nh'(\alpha)}{\delta}\right)$ and $\frac{36R}{\alpha\phi'(S,u)}\log\left(\frac{36R}{\alpha\phi'(S,u)}\sqrt{\frac{12nh'(\alpha)}{\delta}}\right)$.
        
        Now we prove the first result. Since the proof of the first result is similar to the proof of Theorem \ref{mainthm}, here we provide a proof sketch and omit the details. First of all, by Lemma \ref{lem:mono_iterative_result}, we have
        \begin{align*}
             f(S_{i+1})\geq \frac{1-\alpha}{\kappa}f(OPT)+(1-\frac{1-\alpha}{\kappa})f(S_i)-2\epsilon.
        \end{align*}
        Let us denote the solution set $S$ after the $i$-th element is added as $S_i$. Notice that the result in Lemma \ref{lem:mar_gain} is the same as Lemma \ref{lem:mono_iterative_result}. Therefore, following the same proof as that in Theorem \ref{mainthm}, we would get that if $|S|=\kappa$, then by induction
        \begin{align*}
    f(S_{\kappa})
    &\geq(1-e^{-1}-\alpha)f(OPT)-2\kappa\epsilon.
\end{align*}
If the size of the output solution set $S$ is smaller than $\kappa$, then any element $o\in OPT$ that is not added to $S$ at the last iteration satisfies that
$(1-\alpha/3)\Delta f(S,o)\leq w+\epsilon$. Since at the last iteration $w\leq\frac{\alpha d}{3\kappa}$, and that conditioned on the events in Lemma \ref{lem:clean_event_call_to_CS_mono2}, $d\leq(1+\alpha/3)\max_{s\in U}f(s)+\epsilon$, it follows that 
\begin{align*}
    (1-\alpha/3)\Delta f(S,o)\leq \frac{\alpha }{3\kappa}\{(1+\alpha/3)\max_{s\in U}f(s)+\epsilon\}+\epsilon
\end{align*}
By submodularity and monotonicity of $f$, we have
\begin{align*}
    f(OPT)-f(S)&\leq\sum_{o\in OPT}\Delta f(S,o)\\
    &\leq \frac{\alpha }{3(1-\alpha/3)}\{(1+\alpha/3)\max_{s\in U}f(s)+\epsilon\}\\
    &\qquad+\frac{\kappa\epsilon}{(1-\alpha/3)}\\
    &\leq\alpha\max_{s\in U}f(s)+2\kappa\epsilon\\
    &\leq \alpha f(OPT)+2\kappa\epsilon.
\end{align*}
Then we have $f(S)\geq (1-\alpha)f(OPT)-2\kappa\epsilon$.
    \end{proof}
    The proof of the above Theorem \ref{thm:monotone2} depends on Lemma \ref{lem:mono_iterative_result}. Before proving Lemma \ref{lem:mono_iterative_result}, we first prove the Lemma \ref{lem:clean_event_call_to_CS_mono2}.
 \begin{lemma}
 \label{lem:clean_event_call_to_CS_mono2}
     With probability at least $1-\delta$, the following two events hold.
     \begin{enumerate}
     \item $(1-\alpha/3)\max_{s\in U}f(s)-\epsilon\leq d\leq(1+\alpha/3)\max_{s\in U}f(s)+\epsilon$.
         \item During each call of \sampnew on input ($w$, $\epsilon$, $\frac{2\delta}{3nh'(\alpha)}$, $\mathcal{D}(S,u)$, $R$), if the output is true, then $(1+\alpha/3)\Delta f(S,u)\geq w-\epsilon$. If the output is false, then $(1-\alpha/3)\Delta f(S,u)\leq w+\epsilon$. In addition, the number of samples taken by \sampnew is at most the minimum between
         \begin{align}
    \label{eq:sam_complxt1}
       \frac{9R}{\epsilon\alpha}\log \left(\frac{6nh'(\alpha)}{\delta}\right)
    \end{align}
    and 
    \begin{align}
    \label{eqn:sam_complx2}
        \frac{36R}{\alpha\phi'(S,u)}\log\left(\frac{36R}{\alpha\phi'(S,u)}\sqrt{\frac{12nh'(\alpha)}{\delta}}\right),
    \end{align}
    
    where $\phi'(S,u) = \frac{\epsilon -\alpha\Delta f(S,u)/3+ |w-\Delta f(S,u)|}{2}$, and $h'(\alpha)=\frac{3}{\alpha}\log{(\frac{3\kappa}{\alpha})}$.
     \end{enumerate}
 \end{lemma}
 
 \begin{proof}
      First of all, by applying the inequality in Lemma \ref{lem:chernoff}, we have that for
      fixed element $s\in U$
      \begin{align*}
          P\big(|\hat{f}_{N_3}(s)-f(s)|\geq\frac{\alpha}{3}f(s)+\epsilon\big)\leq\frac{\delta}{3n}.
      \end{align*}
      Taking a union bound over all elements in $U$, it follows that 
      \begin{align*}
          P\big(|\hat{f}_{N_3}(s)-f(s)|\geq\frac{\alpha}{3}f(s)+\epsilon,\forall s\in U\big)\leq\frac{\delta}{3}, 
      \end{align*}
      where $N_3=\frac{9R}{\epsilon\alpha}\log\frac{6n}{\delta}$. 
      Therefore, with probability at least $1-\delta/3$, we have $|\hat{f}_{N_3}(s)-f(s)|\leq\frac{\alpha}{3}f(s)+\epsilon$ for each $s\in U$. Denote $s_1=\arg\max_{s\in U}\hat{f}_{N_3}(s)$ and $s_2=\arg\max_{s\in U}{f}(s)$. It follows that with probability at least $1-\delta/3$, we have that 
      \begin{align*}
          d=\hat{f}_{N_3}(s_1)\leq (1+\alpha/3)f(s_1)+\epsilon\leq(1+\alpha/3)f(s_2)+\epsilon,
      \end{align*}
      and that 
      \begin{align*}
          d=\hat{f}_{N_3}(s_1)\geq \hat{f}_{N_3}(s_2)\geq(1-\alpha/3)f(s_2)-\epsilon.
      \end{align*}
      Since $d=\max_{s\in U}\hat{f}_{N_3}(s)=\hat{f}_{N_3}(s_1)$ and $f(s_2)=\max_{s\in U}f(s)$, the first result holds with probability at least $1-\delta/3$.

      Next, we prove the second result. For
      each call of the sampling algorithm \sampnew with fixed input ($w$, $\epsilon$, $\frac{2\delta}{3nh'(\alpha)}$, $\mathcal{D}(S,u)$, $R$), and given that $N_1$ and $C_t$ are defined in accordance with Theorem \ref{thm:sampling2} with the multiplicative error parameter set to $\alpha/3$, we can leverage the second result in Theorem \ref{thm:sampling2}. Consequently, with  probability at least $1-\frac{2\delta}{3nh'(\alpha)}$, the following two things hold: 
      \begin{enumerate}
          \item If the output of \sampnew is true, then $(1+\alpha/3)\Delta f(S,s)\geq w-\epsilon$. If the output is false, then $(1-\alpha/3)\Delta f(S,s)\leq w+\epsilon$.
          \item The number of noisy queries is bounded by the minimum between (\ref{eq:sam_complxt1}) and (\ref{eqn:sam_complx2}) in the lemma.
      \end{enumerate}
       Since there are at most $\frac{\log(3\kappa/\alpha)}{\log\frac{1}{1-\alpha/3}}\leq h'(\alpha)=\frac{3}{\alpha}\log\frac{3\kappa}{\alpha}$ number of iterations in \algmono, there are at most $nh'(\alpha)$ calls of \sampnew. Therefore, by taking the union bound we have that with probability at least $1-2\delta/3$, the two events defined above hold for all calls to \sampnew during \algmono. By taking the union bound again, we have that with probability at least $1-\delta$, the two results in the lemma both hold.
    \end{proof}
    Now we prove the Lemma \ref{lem:mono_iterative_result}.
    
    \begin{lemma}
        \label{lem:mono_iterative_result}
         Assume the events defined in Lemma \ref{lem:clean_event_call_to_CS_mono2} hold during \algmono. Then for any element $s$ that is added to the solution set $S$, the following statement holds.
    \begin{align*}
        \Delta f(S,s)\geq \frac{1-\alpha}{\kappa}(f(OPT)-f(S))-2\epsilon.
    \end{align*}
    \end{lemma}
    \begin{proof}   
    At the first iteration, if an element $s$ is added to the solution set, it holds by Lemma \ref{lem:clean_event_call_to_CS_mono2} that $(1+\frac{\alpha}{3})\Delta f(S,s)\geq w-\epsilon$.
   Since at the first iteration $w=d$ and $d\geq (1-\alpha/3)\max_{s\in U}f(s)-\epsilon$. It follows that $\Delta f(S,s)\geq \frac{1-\alpha/3}{1+\alpha/3}\max_{s\in U}f(s)-\frac{2\epsilon}{1+\alpha/3}\geq(1-\alpha)\max_{s\in U}f(s)-2\epsilon$.
    By submodularity we have that $\kappa\max_{s\in U}f(s)\geq f(OPT)$. Therefore, $\Delta f(S,s)\geq \frac{1-\alpha}{\kappa}(f(OPT)-f(S))-2\epsilon$.
    
    At iteration $i$ where $i>1$, if an element $o\in OPT$ is not added to the solution set, then it is not added to the solution set at the last iteration, where the threshold is $\frac{w}{1-\alpha/3}$. By Lemma \ref{lem:clean_event_call_to_CS_mono2}, we have
    $(1-\alpha/3)\Delta f(S,o)\leq \frac{w}{1-\alpha/3}+\epsilon$.
    For any element $s$ that is added to the solution at iteration $i$, by Lemma \ref{lem:clean_event_call_to_CS_mono2} it holds that $(1+\alpha/3)\Delta f(S,s)\geq w-\epsilon$. Therefore, we have
    \begin{align*}
        \Delta f(S,s)&\geq \frac{w-\epsilon}{1+\alpha/3} \\
        &\geq\frac{(1-\alpha/3)^2\Delta f(S,o)-(1-\alpha/3)\epsilon-\epsilon}{1+\alpha/3} \\
        &\geq (1-\alpha)\Delta f(S,o)-2\epsilon. 
    \end{align*}
    By submodularity, it holds that $\Delta f(S,s)\geq (1-\alpha)\frac{f(OPT)-f(S)}{\kappa}-2\epsilon$.
\end{proof}

\section{Appendix for Section \ref{sec:nonmono}}
In this section, we present the proof of Theorem \ref{thm:nonmono}.

\noindent\textbf{Theorem \ref{thm:nonmono}. }\textit{\cdg makes $n$ calls of \samp. In addition, with probability at least $1-\delta$, the following statements hold:
    \begin{enumerate}
        \item The exact function value of the output solution set $S$ satisfies that $f(S)\geq\frac{f(OPT)}{3}-\epsilon$;
        \item Each call of \samp on input $(0,\frac{3\epsilon}{n},\frac{\delta}{n},\mathcal{D}_{X_i},\sqrt{2}R)$ takes at most the minimum between
    \begin{align*}
        \left\{\frac{4n^2R^2}{9\epsilon^2}\log \left(\frac{4n}{\delta}\right),\frac{16R^2}{\phi^2_i}\log\left(\frac{32R^2}{\phi^2_i}\sqrt{\frac{2n}{\delta}}\right)\right\}
    \end{align*}
    noisy samples. Here $OPT$ is an optimal solution to the USM problem, and
    \begin{align*}
        \phi_{i} &:= \frac{3\epsilon/n + |\mE X_i|}{2}\\
        &=\frac{3\epsilon/n + |\Delta f(A_{i-1},u_i)+\Delta f(B_{i-1}/\{u_i\},u_i)|}{2}.
    \end{align*}
    \end{enumerate}
   }
Notice that conditioned on the solution set $A_{i-1}$ and $B_{i-1}$, the random variables $\widetilde{\Delta f}(A_{i-1}, u_i)$ and $\widetilde{\Delta f}(B_{i-1}/\{u_i\},u_i)$ are $R$-sub-Gaussian. Therefore, $X_i:=\widetilde{\Delta f}(A_{i-1}, u_i)+\widetilde{\Delta f}(B_{i-1}/\{u_i\},u_i)$ is $\sqrt{2}R$-sub-Gaussian, the second result is implied by applying Theorem \ref{thm:sampling} immediately. To prove the first result in Theorem \ref{thm:nonmono}, we need the following lemma.
\begin{lemma}
\label{lem:nonmono}
With probability at least $1-\frac{\delta}{n}$, the $i$-th call of \samp satisfies the following inequality 
\begin{align}
\label{eqn:nonmono_iter}
     f(A_{i-1}&\cup OPT_{i-1})-f(A_{i}\cup OPT_{i})\leq \nonumber\\
    &[f(A_i)-f(A_{i-1})] + [f(B_i)-f(B_{i-1})]+\frac{3\epsilon}{n}.
\end{align}
where $OPT_i$ is the set of all elements from $OPT$ that arrives after the $i$-th iteration.
\end{lemma}
\begin{proof}
    From the statement of the algorithm, we know that the element $u_i$ is added to the solution if and only if the output of \samp is true. By applying the results in Theorem \ref{thm:sampling}, we have that for each fixed $i$, with probability at least $1-\delta/n$ if $u_i$ is added, then $\Delta f(A_{i-1},u_i)\geq -\Delta f(B_{i-1}/\{u_i\},u_i)-\frac{3\epsilon}{n}$. Otherwise, $\Delta f(A_{i-1},u_i)\leq -\Delta f(B_{i-1}/\{u_i\},u_i)+\frac{3\epsilon}{n}$. Let us denote the above event as $\mathcal{E}_i$, we discuss the following four cases in our analysis
    \begin{enumerate}
        \item If $u_i\in A_i$, and $u_i\in OPT$, then 
        \begin{align*}
             f(A_{i-1}\cup OPT_{i-1})-&f(A_{i}\cup OPT_{i})=0
        \end{align*}
        Notice that $u_i\in A_i$, then conditioned on $\mathcal{E}_i$, we have $\Delta f(A_{i-1},u_i)\geq -\Delta f(B_{i-1}/\{u_i\},u_i)-\frac{3\epsilon}{n}$. By submodularity, $\Delta f(B_{i-1}/\{u_i\},u_i)\leq\Delta f(A_{i-1},u_i)$. Then it follows that $\Delta f(A_{i-1}, u_i)+\frac{3\epsilon}{2n}\geq 0$. Therefore, the term on the right-hand side of $(\ref{eqn:nonmono_iter})$ satisfies 
        \begin{align*}
            [f(A_i)-f(A_{i-1})]& + [f(B_i)-f(B_{i-1})]+\frac{3\epsilon}{n}\\
            &=\Delta f(A_{i-1}, u_i)+\frac{3\epsilon}{n}\geq 0.
        \end{align*}

        \item If $u_i\in A_i$, and $u_i\notin OPT$, then 
        \begin{align*}
            f(A_{i-1}\cup OPT_{i-1})-&f(A_{i}\cup OPT_{i})\\
            &=-\Delta f(A_{i-1}\cup OPT_i,u_i)\\
            &\leq -\Delta f(B_{i-1}/\{u_i\},u_i),
        \end{align*}
  where the inequality is obtained by submodularity. The right-hand side in (\ref{eqn:nonmono_iter}) is 
        \begin{align*}
            [f(A_i)-f(A_{i-1})]& + [f(B_i)-f(B_{i-1})]+\frac{3\epsilon}{n} \\
            &= \Delta f(A_{i-1}, u_i)+\frac{3\epsilon}{n}.
        \end{align*}
        Notice that $u_i\in A_i$, then conditioned on $\mathcal{E}_i$, we have $\Delta f(A_{i-1},u_i)\geq -\Delta f(B_i/\{u_i\},u_i)-\frac{3\epsilon}{n}$. Therefore, 
        \begin{align*}
            [f(A_i)-f(A_{i-1})]& + [f(B_i)-f(B_{i-1})]+\frac{3\epsilon}{n}\\
            &=\Delta f(A_{i-1}, u_i)+\frac{3\epsilon}{n}\\
            &\geq -\Delta f(B_{i-1}/\{u_i\},u_i).
        \end{align*}

        \item If $u_i\notin A_i$, and $u_i\notin OPT$, then 
        \begin{align*}
            f(A_{i-1}\cup OPT_{i-1})-&f(A_{i}\cup OPT_{i})=0.
        \end{align*}
        Similarly as the first case, we have that $-\Delta f(B_{i-1}/\{u_i\},u_i)\geq\frac{3\epsilon}{2n}$. Since the right-hand side is $-\Delta f(B_{i-1}/\{u_i\},u_i)+\frac{3\epsilon}{n}$, the inequality holds.
        \item If $u_i\notin A_i$, and $u_i\in OPT$, then
        \begin{align*}
            &f(A_{i-1}\cup OPT_{i-1})-f(A_{i}\cup OPT_{i})\\
            &=\Delta f(A_{i-1}\cup OPT_i,u_i)\leq \Delta f(A_{i-1},u_i),
        \end{align*}
        where the inequality holds by submodularity. Conditioned on the event $\mathcal{E}_i$, it follows that $\Delta f(A_{i-1},u_i)\leq -\Delta f(B_i/\{u_i\},u_i)+\frac{3\epsilon}{n}$. Since the right-hand side is
        \begin{align*}
            [f(A_i)-f(A_{i-1})]& + [f(B_i)-f(B_{i-1})]+\frac{3\epsilon}{n}\\
            &=-\Delta f(B_i/\{u_i\},u_i)+\frac{3\epsilon}{n},
        \end{align*}
the result is proved.
    \end{enumerate}
\end{proof}

Now we prove Theorem \ref{thm:nonmono}. 
\begin{proof}
    Define the event 
    \begin{align*}
        \mathcal{F}_i&=\{f(A_{i-1}\cup OPT_{i-1})-f(A_{i}\cup OPT_{i})\leq \\
    &[f(A_i)-f(A_{i-1})] + [f(B_i)-f(B_{i-1})]+\frac{3\epsilon}{n}\}.
    \end{align*}
    From Lemma \ref{lem:nonmono} and by taking the union bound, it follows that 
    \begin{align*}
        P(\mathcal{F}_i,\forall i\in[n])\geq 1-\delta
    \end{align*}
    Therefore, with probability at least $1-\delta$, $\mathcal{F}_i$ holds for all $i$. Then by summing over all $i$, we would get
    \begin{align*}
     \sum_{i=1}^nf(A_{i-1}&\cup OPT_{i-1})-f(A_{i}\cup OPT_{i})\leq \\
    &\sum_{i=1}^n\{[f(A_i)-f(A_{i-1})] \\
    &+ [f(B_i)-f(B_{i-1})]\}+3\epsilon.
\end{align*}
    It follows that 
    \begin{align*}
     f( OPT_{0})&-f(A_{n})\leq \\
   & [f(A_n)-f(A_{0})] + [f(B_n)-f(B_{0})]\}+3\epsilon.
\end{align*}
Since the submodular function is nonnegative, and that $f(A_n)=f(B_n)$, $OPT_0=OPT$, it follows that $f(A)\geq f(OPT)/3-\epsilon$.
\end{proof}

\section{Appendix for Section \ref{sec:matroid}}
\label{appdx:continuous}
In this section, we present supplementary material to Section \ref{sec:matroid}. In particular, we present the comparison of the result in Theorem \ref{thm:continuous} to the Accelerated Continuous Greedy algorithm (ACG) in \cite{badanidiyuru2014fast}. Then in Section \ref{appdx:continuous}, we provide detailed proof of Theorem \ref{thm:continuous}.

\subsection{Comparison of \contialg with Accelerated Continuous Greedy algorithm}
\label{appdx:comparison_to_ACG}
In this section, we compare the results of Theorem \ref{thm:continuous} and the Accelerated Continuous Greedy algorithm (ACG) as presented in \cite{badanidiyuru2014fast}.
\begin{enumerate}
    \item First of all, we consider the case where we have exact access to the value oracle. In this case, we can get that $\widetilde{\Delta f}(S,s)=\Delta f(S,s)\leq \max_{s\in S}f(s)$ for any subset $S\subseteq U$ and element $s\in U$. This implies that $R$ can be set to be $\max_{s\in S}f(s)$. Consequently, from Theorem \ref{thm:continuous}, the output solution set of \contialg satisfies that $f(S)\geq (1-1/e-O(\epsilon))f(OPT)$, which aligns with the approximation ratio presented in \cite{badanidiyuru2014fast}. For the result on sample complexity, notice that each call of \samp takes at most $\min\{O(\frac{\kappa}{\epsilon^2}\log\frac{n}{\delta\epsilon}), O(\frac{\kappa}{\epsilon\phi_X''}\log\frac{n}{\delta\epsilon\phi_X''})\}$ number of samples, where the first result is obtained by considering the worst case sample complexity of a fixed $\epsilon$-approximation. Since there are at most $\frac{3n}{\epsilon^2}\log\frac{\kappa}{\epsilon}$ calls of \samp during \contialg, if we only consider the worst-case sample complexity, the total required sample complexity is at most $O(\frac{\kappa n}{\epsilon^3}\log^2\frac{n}{\epsilon})$ for \contialg. This matches the result in \cite{badanidiyuru2014fast}. In this sense, we improve the sample complexity when reduced to the case of assuming an exact oracle to the marginal gains.
    
    \item On the other hand, from Theorem \ref{thm:continuous}, we can see that even if the access to $\Delta f$ is noisy, as long as the upper bound on the noisy marginal gain $R$ is less than $ f(OPT)$, the above analysis on sample complexity and approximation ratio holds. Hence, we can conclude that compared to access to an exact value oracle, the assumption of access to noisy marginal gain does not lead to additional sample complexity or a deterioration in the approximation ratio when compared to the scenario with an exact value oracle.
\end{enumerate}
\subsection{Proof of Theorem \ref{thm:continuous}}
\label{appdx:proof_of_conti}
In this section, we present the detailed proof of Theorem \ref{thm:continuous} about our algorithm \contialg.

\noindent\textbf{Theorem \ref{thm:continuous}. }\textit{\contialg makes at most $\frac{3n}{\epsilon^2}\log\frac{3\kappa}{\epsilon}$ calls of \sampnew. In addition, with probability at least $1-\delta$, the following statements hold:
\begin{itemize}
        \item The output fractional solution $\vect{x}$ achieves the approximation guarantee of $\vect{F}(\vect{x})\geq(1-e^{-1}-2\epsilon)f(OPT)-R\epsilon$.
        \item Each call of \sampnew on input  ($w$, $\frac{\epsilon R}{2\kappa}$, $\frac{\delta\epsilon}{2nh'(\epsilon)}$, $\mathcal{D}_X$, $R$) requires at most the minimum between
    \begin{align*}
      \frac{18\kappa}{\epsilon^2}\log \left(\frac{8nh'(\epsilon)}{\delta\epsilon}\right)
    \end{align*}
    and
    \begin{align*}
        \frac{36R}{\epsilon\phi''_X}\log\left(\frac{144R}{\epsilon\phi''_X}\sqrt{\frac{nh'(\epsilon)}{\delta\epsilon}}\right)
    \end{align*}
    noisy queries to the marginal gain. Here $OPT$ is an optimal solution to the MSMM problem, $\phi''_X = \frac{\frac{\epsilon R}{2\kappa} -\epsilon\mathbb{E}X  /{3}+ |w-\mathbb{E}X|}{2}$, and $h'(\epsilon)=\frac{3}{\epsilon}\log{(\frac{3\kappa}{\epsilon})}$.
    \end{itemize}
}
    \begin{proof}
    The second result on the sample complexity of calling the subroutine algorithm \sampnew can be obtained immediately by applying the second result in (\ref{lem:item_samp_continuous}) in Lemma \ref{lem:clean_event_continuous}. Here we prove the first result in the theorem. Let us denote the fractional solution at time step $t$ as $\vect{x}_t$. From Lemma \ref{lem:Continuous_decreading_threshold}, it follows that conditioned on the events in Lemma \ref{lem:clean_event_continuous}, we have
    \begin{align*}
        \vect{F}(\vect{x}_{t+1})-\vect{F}(\vect{x}_{t})&\geq\epsilon(1-\epsilon)f(OPT)\\
        &\qquad-\epsilon(1-\epsilon)\vect{F}(\vect{x}_{t+1})-\epsilon^2 R.
    \end{align*} 
    It then follows that
    \begin{align*}
        \vect{F}(\vect{x}_{t+1})&\geq\frac{\vect{F}(\vect{x}_{t})+\epsilon(1-\epsilon)f(OPT)-\epsilon^2 R}{1+\epsilon(1-\epsilon)}\\
        &\geq(1-\epsilon)\vect{F}(\vect{x}_{t})+\epsilon(1-\epsilon)^2f(OPT)-\epsilon^2 R
    \end{align*}
   Since there are $1/\epsilon$ iterations in \contialg, the output $\vect{x}$ satisfies that $\vect{x}=\vect{x}_{1/\epsilon}$. By applying induction to the above inequality, we would get
    \begin{align*}
        \vect{F}(\vect{x}_{1/\epsilon})
        &\geq(1-(1-\epsilon)^{1/\epsilon})\{(1-\epsilon)^2f(OPT)-\epsilon R\}\\
        &\geq(1-1/e)\{(1-\epsilon)^2f(OPT)-\epsilon R\}\\
        &\geq(1-1/e-2\epsilon)f(OPT)-\epsilon R.
    \end{align*}
    \end{proof}

\begin{lemma}
\label{lem:clean_event_continuous}
    With probability at least $1-\delta$, the following two events hold.
    \begin{enumerate}
     \item $(1-\epsilon/3)\max_{s\in U}f(s)-\frac{R\epsilon}{2\kappa}\leq d\leq(1+\epsilon/3)\max_{s\in U}f(s)+\frac{R\epsilon}{2\kappa}$.
         \item During each call of \sampnew on the input ($w$, $\frac{\epsilon R}{2\kappa}$, $\frac{\delta\epsilon}{2nh'(\epsilon)}$, $\mathcal{D}_X$, $R$, $\epsilon/3$
        ) with the evaluated random variable being $X=\widetilde{\Delta f}(S(\vect{x}+\epsilon\vect{1}_B),u)$ where $\vect{x}$ is the fractional solution , $B$ is the set of coordinates and $u$ is an element in $U$, the results in Theorem \ref{thm:sampling2} holds. I.e.,
        \begin{enumerate}
            \item\label{lem:item_samp_continuous} \sampnew takes at most the minimum between 
            \begin{align*}
      \frac{18\kappa}{\epsilon^2}\log \left(\frac{8nh'(\epsilon)}{\delta\epsilon}\right)
    \end{align*}
    and
    \begin{align*}
        \frac{36R}{\epsilon\phi''_X}\log\left(\frac{144R}{\epsilon\phi''_X}\sqrt{\frac{nh'(\epsilon)}{\delta\epsilon}}\right).
    \end{align*}
\item\label{lem:item_approx_continuous} If the output is true, then $$(1+\epsilon/3)\mE \widetilde{\Delta f}(S(\vect{x}+\epsilon\vect{1}_B),u)\geq w-\frac{\epsilon R}{2\kappa}.$$ 
         If the output is false, then $$(1-\epsilon/3)\mE \widetilde{\Delta f}(S(\vect{x}+\epsilon\vect{1}_B),u)\leq w+\frac{\epsilon R}{2\kappa}.$$
        \end{enumerate}

     \end{enumerate}
\end{lemma}
 \begin{proof}
     First of all, by applying the inequality in Lemma \ref{lem:chernoff}, we have that for each fixed $s\in U$, after taking $N_4=\frac{18\kappa}{\epsilon^2}\log\frac{4n}{\delta}$ number of samples, it follows that
     \begin{align*}
         P\big(|\hat{f}_{N_4}(s)-f(s)|\geq\frac{\epsilon}{3}f(s)+\frac{R\epsilon}{2\kappa}\big)\leq\frac{\delta}{2n}.
     \end{align*}
     Taking a union bound over all elements in $U$, it follows that 
      \begin{align*}
          P\big(|\hat{f}_{N_4}(s)-f(s)|\geq\frac{\epsilon}{3}f(s)+\frac{R\epsilon}{2\kappa},\forall s\in U\big)\leq\frac{\delta}{2}.
      \end{align*}
      Following the similar idea as in the proof of the Lemma \ref{lem:clean_event_samp2}, we can prove the first result. 
      
      Now we start to prove the second result. For each fixed call of \sampnew  with input ($w$, $\frac{\epsilon R}{2\kappa}$, $\frac{\delta\epsilon}{2nh'(\epsilon)}$, $\mathcal{D}_X$, $R$, $\epsilon/3$
        ), by applying the results in Theorem \ref{thm:sampling2}, we have that with probability at least $1-\frac{\delta\epsilon}{2nh'(\epsilon)}$, both the statements about the sample complexity in (\ref{lem:item_samp_continuous}) and approximation guarantee in (\ref{lem:item_approx_continuous}) in the lemma holds. Since there are $1/\epsilon$ calls of the \contisublong and each \contisublong makes at most $nh'(\epsilon)$ calls of the \sampnew algorithm, there are at most $nh'(\epsilon)/\epsilon$ calls of the \sampnew algorithm. By taking the union bound, we can prove that with probability at least $1-\delta/2$, the second results hold. By taking the union bound again, we can see that with probability at least $1-\delta$, all of the results in the lemma hold.
 \end{proof}  
 \begin{lemma}
     \label{lem:Continuous_decreading_threshold}
     Conditioned on the two events defined in Lemma \ref{lem:clean_event_continuous}, we have that during each implementation of \contisublong, the output coordinate set $B$ satisfies that
     \begin{align*}
         \vect{F}(\vect{x}+\epsilon\vect{1}_B)-\vect{F}(\vect{x})&\geq\epsilon(1-\epsilon)\{f(OPT)-\vect{F}(\vect{x}+\epsilon\vect{1}_B)\}\\
         &\qquad-\epsilon^2R.
     \end{align*}
 \end{lemma}
 \begin{proof}
     Here we denote the output solution set as $B=\{b_1,b_2,...,b_{\kappa}\}$ where $b_i$ is the $i$-th element that is added to set $B$. Here if $|B|<\kappa$, then for any $i>|B|$, $b_i$ is defined as a dummy variable. Since $\mathcal{M}$ is a matroid, there exists a permutation of the optimal solution $OPT=\{o_1,o_2,...,o_\kappa\}$ such that $B_{i-1}\cup \{o_i\}\in\mathcal{M}$ for each $i\in[\kappa]$. For notation simplicity, we also define $G(\vect{x},u)=\mE\widetilde{\Delta f}(S(\vect{x}),u) $. First of all, we prove the following claim: for each $i\in[\kappa]$, we have that
     \begin{align*}
         G(\vect{x}+\epsilon\vect{1}_{B_{i-1}},b_i)\geq(1-\epsilon)G(\vect{x}+\epsilon\vect{1}_{B_{i-1}},o_i)-\frac{\epsilon R}{\kappa}
     \end{align*}
     The proof is as follows: if the element $b_i$ is added at the first iteration, then from Lemma \ref{lem:clean_event_continuous}, we have that 
$(1+\epsilon/3)G(\vect{x}+\epsilon\vect{1}_{B_{i-1}},b_i)\geq w-\frac{\epsilon R}{2\kappa}$. Since the threshold at the first iteration is $w=d$, and $d\geq (1-\epsilon/3)\max_{s\in U}f(s)-\frac{R\epsilon}{2\kappa}$ according to the first result in Lemma \ref{lem:clean_event_continuous}, then
\begin{align*}
    (1+\epsilon/3)G(\vect{x}+\epsilon\vect{1}_{B_{i-1}},b_i)\geq (1-\epsilon/3)\max_{s\in U}f(s)-\frac{\epsilon R}{\kappa}.
\end{align*}
Since $\max_{s\in U}f(s)\geq\max_{o\in OPT}f(o)\geq G(\vect{x}+\epsilon\vect{1}_{B_{i-1}},o_i) $, $\forall i\in[\kappa]$, it then follows that 
\begin{align*}
    G(\vect{x}+\epsilon\vect{1}_{B_{i-1}},b_i)\geq (1-\epsilon)G(\vect{x}+\epsilon\vect{1}_{B_{i-1}},o_i)-\frac{\epsilon R}{\kappa}.
\end{align*}
If $b_i$ is not a dummy variable and is not added in the first iteration, we can see that $(1+\epsilon/3)G(\vect{x}+\epsilon\vect{1}_{B_{i-1}},b_i)\geq w-\frac{R\epsilon}{2\kappa}$. Since the element $o_i$ is not added to $B$, it is not added at the last iteration. By the construction of $OPT$, we have that $B_{i-1}\cup\{o_i\}\in\mathcal{M}$. Therefore, 
\begin{align*}
(1-\epsilon/3)G(\vect{x}+\epsilon\vect{1}_{B_{i-1}},o_i)\leq \frac{w}{1-\epsilon/3}+\frac{R\epsilon}{2\kappa}.
\end{align*}
Then
\begin{align*}
    G(\vect{x}+\epsilon\vect{1}_{B_{i-1}},b_i)&
    \geq \frac{(1-\epsilon/3)^2G(\vect{x}+\epsilon\vect{1}_{B_{i-1}},o_i)}{1+\epsilon/3}\\
    &\qquad-\frac{(1-\epsilon/3)\epsilon R}{2(1+\epsilon/3)\kappa}-\frac{R\epsilon}{2(1+\epsilon/3)\kappa}\\
    &\geq(1-\epsilon)G(\vect{x}+\epsilon\vect{1}_{B_{i-1}},o_i)-\frac{\epsilon R}{\kappa}.
\end{align*}
Next, we consider the case where $b_i$ is a dummy variable. In this case $ G(\vect{x}+\epsilon\vect{1}_{B_{i-1}},b_i)=0$. Since $o_i$ is not added,
\begin{align*}
    (1-\epsilon/3)G(\vect{x}+\epsilon\vect{1}_{B_{i-1}},o_i)\leq \frac{\epsilon d}{3\kappa}+\frac{R\epsilon}{2\kappa}.
\end{align*}
Since $d\leq(1+\epsilon/3)\max_{s\in U}f(s)+\frac{R\epsilon}{2\kappa}\leq(1+\epsilon/3)R+\frac{R\epsilon}{2\kappa}$. Notice that when $\epsilon>0.5$, the approximation guarantee in Theorem \ref{thm:continuous} is trivial. Therefore, here we can assume $\epsilon\leq0.5$, which implies that $d\leq 3R/2$. Then we have that
\begin{align*}
    (1-\epsilon/3)G(\vect{x}+\epsilon\vect{1}_{B_{i-1}},o_i)\leq \epsilon R/\kappa.
\end{align*}
Therefore,
\begin{align*}
    G(\vect{x}+\epsilon\vect{1}_{B_{i-1}},b_i)&=0\\
    &\geq(1-\epsilon/3)G(\vect{x}+\epsilon\vect{1}_{B_{i-1}},o_i)-\epsilon R/\kappa.
\end{align*}
With this claim, we can prove the results of the lemma.
\begin{align*}
    \vect{F}(\vect{x}+\epsilon\vect{1}_B)-\vect{F}(\vect{x})&=\sum_{i=1}^\kappa\vect{F}(\vect{x}+\epsilon\vect{1}_{B_i})-\vect{F}(\vect{x}+\epsilon\vect{1}_{B_{i-1}})\\
    &=\sum_{i=1}^\kappa \epsilon\cdot\frac{\partial \vect{F}}{\partial b_i}\big|_{x=\vect{x}+\vect{1}_{B_{i-1}}}\\
    &\geq \epsilon\sum_{i=1}^\kappa\mE \Delta f(S(\vect{x}+\epsilon\vect{1}_{B_{i-1}}),b_i)\\
    &= \epsilon\sum_{i=1}^\kappa G(\vect{x}+\epsilon\vect{1}_{B_{i-1}},b_i).
\end{align*}
Here the last equality comes from the fact that $\mathbb{E}\Delta f(S(\vect{x}),u)=\mathbb{E}\widetilde{\Delta f}(S(\vect{x}),u)$.
By the claim, it follows that
\begin{align*}
    \vect{F}(\vect{x}+\epsilon\vect{1}_B)-\vect{F}(\vect{x})
    &\geq\epsilon\sum_{i=1}^\kappa (1-\epsilon)G(\vect{x}+\epsilon\vect{1}_{B_{i-1}},o_i)-\epsilon^2R\\
    &=\epsilon(1-\epsilon)\sum_{i=1}^\kappa \mE \Delta f(S(\vect{x}+\epsilon\vect{1}_{B_{i-1}}),o_i)\\
    &\qquad-\epsilon^2R\\
    &\geq\epsilon(1-\epsilon)\sum_{i=1}^\kappa \mE \Delta f(S(\vect{x}+\epsilon\vect{1}_{B}),o_i)\\
    &\qquad-\epsilon^2R\\
    &\geq\epsilon(1-\epsilon)\{f(OPT)-\vect{F}(\vect{x}+\epsilon\vect{1}_B)\}\\
    &\qquad-\epsilon^2R.
\end{align*}
Here the second and third inequality are due to submodularity and monotonicity.
 \end{proof}
 
     

\section{Technical Lemmas}

\begin{lemma}[Hoeffding's Inequality]
    \label{hoeffding}
    Let $X_1,...,X_N$ be independent random variables such that $X_i$ is $R$-sub-Gaussian and $\mathbb{E}[X_i]=\mu$ for all $i$. Let $\overline{X}=\frac{1}{N}\sum_{i=1}^NX_i$. Then for any $t>0$,
    \begin{align*}
        P(|\overline{X}-\mu|\geq t)\leq 2\exp\{-\frac{Nt^2}{2R^2}\}.
    \end{align*}
\end{lemma}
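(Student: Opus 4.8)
The plan is to use the classical exponential-moment (Chernoff) method. First I would recenter the variables: put $Y_i := X_i - \mu$, so that each $Y_i$ has mean zero and lies in an interval of length exactly $R$ (namely $[-\mu,\,R-\mu]$), and $\overline{X}-\mu = \tfrac1N\sum_{i=1}^N Y_i$. For the upper tail, fix $s>0$; by Markov's inequality applied to the nonnegative random variable $\exp\!\big(s\sum_i Y_i\big)$, together with independence of the $Y_i$,
\[
P\Big(\tfrac1N\textstyle\sum_{i=1}^N Y_i \ge t\Big) \;\le\; e^{-sNt}\,\mathbb{E}\Big[e^{s\sum_i Y_i}\Big] \;=\; e^{-sNt}\prod_{i=1}^N \mathbb{E}\big[e^{sY_i}\big].
\]

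The crucial ingredient is Hoeffding's lemma: if $Y$ has mean zero and is supported in an interval of length $R$, then $\mathbb{E}[e^{sY}] \le \exp(s^2R^2/8)$. I would prove this sub-lemma by bounding $e^{sY}$ above by the affine chord of the convex map $x\mapsto e^{sx}$ on the support interval, taking expectations, and then checking that the logarithm $\psi(s)$ of the resulting bound satisfies $\psi(0)=\psi'(0)=0$ and $\psi''(s)\le R^2/4$ (the second derivative being the variance of a tilted distribution supported on an interval of length $R$), so that a second-order Taylor expansion gives $\psi(s)\le s^2R^2/8$. Substituting this into the product above yields $P(\overline{X}-\mu \ge t) \le \exp(-sNt + Ns^2R^2/8)$, and minimizing the exponent over $s>0$ at $s = 4t/R^2$ gives $P(\overline{X}-\mu \ge t)\le e^{-2Nt^2/R^2}$. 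Applying the identical argument to the variables $-Y_i$ (again mean zero, again supported on an interval of length $R$) bounds the lower tail $P(\overline{X}-\mu\le -t)$ by the same quantity, and a union bound over the two one-sided events produces the factor of $2$ in the statement.

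The main obstacle is the proof of Hoeffding's lemma itself; everything else is routine Chernoff bookkeeping and a one-line calculus optimization. The one subtlety to keep in mind throughout is that only the \emph{width} $R$ of the range of each $X_i$ enters the bound, not its location — this is exactly why recentering by $\mu$ at the very start, rather than attempting to work directly with the interval $[0,R]$, is what makes the constants come out as claimed.
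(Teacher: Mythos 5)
Your proposal is the standard and correct proof of Hoeffding's inequality: recentering, the Chernoff exponential-moment bound, Hoeffding's lemma $\mathbb{E}[e^{sY}]\le e^{s^2R^2/8}$ for a mean-zero variable supported on an interval of width $R$, optimization at $s=4t/R^2$, and a union bound over the two tails; the constants all check out. The paper itself offers no proof of Lemma \ref{hoeffding} --- it is stated as a classical result and used as a black box --- so there is nothing to compare against; your argument is the canonical one and would serve as a complete proof if one were required.
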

\begin{lemma}[Relative $+$ Additive Chernoff Bound (Lemma 2.3 in \cite{badanidiyuru2014fast})]
\label{lem:chernoff}
    Let $X_1,...,X_N$ be independent random variables such that for each $i$, $X_i\in[0,R]$ and $\mathbb{E}[X_i]=\mu$ for all $i$. Let $\widehat{X}_N=\frac{1}{N}\sum_{i=1}^NX_i$. Then 
    \begin{align*}
        P(|\widehat{X}_N-\mu|> \alpha\mu+\epsilon)\leq 2\exp\{-\frac{N\alpha\epsilon}{3R}\}.
    \end{align*}
\end{lemma}


\begin{lemma}
    Let $X_1,...,X_N$ be independent random variables such that $X_i\in[0,R]$ and $\mathbb{E}[X_i]=\mu$ for all $i$. Let $\overline{X}=\frac{1}{N}\sum_{i=1}^NX_i$. Then for any $t>0$ and $\delta>0$, if
    \begin{align*}
        N\geq \frac{R^2\ln(1/\delta)}{t^2},
    \end{align*}
    then $P(|\overline{X}-\mu|\geq t)\leq\delta.$
\end{lemma}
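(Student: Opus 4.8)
The plan is to obtain this as an immediate corollary of Hoeffding's inequality (Lemma~\ref{hoeffding}), simply by exponentiating the tail bound and rearranging for $N$. First I would apply Lemma~\ref{hoeffding} to the independent random variables $X_1,\dots,X_N$ with the given parameter $t>0$, which yields
\[
P\!\left(|\overline{X}-\mu|\ge t\right)\le 2e^{-2Nt^2/R^2}.
\]
Next I would substitute the hypothesis $N\ge R^2\ln(1/\delta)/t^2$ into the exponent: this gives $2Nt^2/R^2\ge 2\ln(1/\delta)$, so that $2e^{-2Nt^2/R^2}\le 2e^{-2\ln(1/\delta)}=2\delta^2$.

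To finish, I would note that $2\delta^2\le\delta$ for all $\delta\le 1/2$, which is the regime of interest, so $P(|\overline{X}-\mu|\ge t)\le\delta$; and when $\delta\ge 1$ the statement is vacuous since probabilities never exceed $1$. (If one wished to avoid even the mild restriction $\delta\le 1/2$, the slightly sharper hypothesis $N\ge R^2\ln(2/\delta)/(2t^2)$ would make the leading factor of $2$ disappear outright; the stated form is chosen for cleanliness.)

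There is essentially no obstacle here — the entire argument is the one substitution above, and the only thing worth a sentence of care is bookkeeping the leading constant $2$ from Hoeffding's two-sided bound, which is exactly what the inequality $2\delta^2\le\delta$ (equivalently, replacing $\ln(2/\delta)/2$ by the larger $\ln(1/\delta)$) is there to absorb.
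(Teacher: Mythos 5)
Your proposal is correct and is exactly the paper's argument: the paper's own proof says only that the result ``follows easily from Hoeffding's Inequality,'' and your substitution of $N\ge R^2\ln(1/\delta)/t^2$ into the two-sided Hoeffding bound is that easy derivation spelled out. Your care over the leading factor of $2$ is in fact warranted rather than pedantic --- the stated hypothesis only absorbs it when $\delta\le 1/2$ (for $\delta\in(1/2,1)$ the lemma as written can fail, e.g.\ $R=1$, $t=0.4$, $\delta=0.9$ permits $N=1$, yet a fair Bernoulli variable has $P(|\overline{X}-\mu|\ge 0.4)=1$), so your remark that the honest hypothesis is $N\ge R^2\ln(2/\delta)/(2t^2)$ identifies a slip the paper's terse proof glosses over, and it is harmless for how the lemma is used since $\delta/3$ or $\delta/6$ appears in every application.
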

\begin{proof}
    This result follows easily from Hoeffding's Inequality.
\end{proof}

\begin{lemma}
    Let $X_1,...,X_N$ be independent random variables such that $X_i\in[0,R]$ and $\mathbb{E}[X_i]=\mu$ for all $i$. Let $\overline{X}=\frac{1}{N}\sum_{i=1}^NX_i$. Then for any $\delta>0$,
    if
    \begin{align}
        c\geq R\sqrt{\frac{\ln(2/\delta)}{2N}},
    \end{align}
    it is the case that
    \begin{align*}
        P(\mu\in[\overline{X}-c,\overline{X}+c])\leq\delta.
    \end{align*}
\end{lemma}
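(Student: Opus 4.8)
The plan is to reduce the statement to a single application of Hoeffding's Inequality (Lemma \ref{hoeffding}), exactly as the two preceding lemmas in this section are dispatched. First I would rewrite the event as worded in terms of the deviation $|\overline{X}-\mu|$: the containment $\mu \in [\overline{X}-c, \overline{X}+c]$ is equivalent to $-c \leq \mu - \overline{X} \leq c$, i.e. to $|\overline{X}-\mu| \leq c$. So the displayed probability is $P(|\overline{X}-\mu| \leq c)$, whose complement $\{|\overline{X}-\mu| > c\}$ is contained in $\{|\overline{X}-\mu| \geq c\}$, so it suffices to control the probability of the latter.

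Next I would apply Lemma \ref{hoeffding} with $t = c$ to this complementary event, giving
\begin{align*}
    P(|\overline{X}-\mu| \geq c) \leq 2 e^{-\frac{2Nc^2}{R^2}}.
\end{align*}
The hypothesis $c \geq R\sqrt{\ln(2/\delta)/(2N)}$ squares and rearranges to $2Nc^2/R^2 \geq \ln(2/\delta)$, whence $2 e^{-2Nc^2/R^2} \leq 2\cdot(\delta/2) = \delta$. This is the entire quantitative content, and it is the calculation the lemma is really after: it certifies that $\mu$ fails to lie in the interval with probability at most $\delta$, which is equivalent to $P(\mu \in [\overline{X}-c, \overline{X}+c]) \geq 1-\delta$.

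The hard part will not be any calculation but reconciling this conclusion with the direction of the displayed inequality. Written as $P(\mu \in [\overline{X}-c, \overline{X}+c]) \leq \delta$, the statement asserts that $\mu$ is almost surely \emph{outside} its own confidence interval, which Hoeffding refutes and which contradicts the way the lemma is actually invoked to justify event $\mathcal{E}_2$ and the radius $C_t$ in \samp (there we need the true marginal gain to lie \emph{inside} the confidence region with high probability). I therefore read the ``$\leq \delta$'' as a transcription of the complementary bound and would state and prove the lemma in the form $P(\mu \in [\overline{X}-c, \overline{X}+c]) \geq 1-\delta$. With that single correction the two-line Hoeffding argument above is complete, and there is no remaining obstacle beyond this typographical fix.
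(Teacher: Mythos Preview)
Your proposal is correct and matches the paper's own proof, which reads in its entirety ``This result follows easily from Hoeffding's Inequality.'' You have supplied exactly that one-line Hoeffding calculation, and you are also right that the displayed inequality is misstated: the intended conclusion is $P(\mu\in[\overline{X}-c,\overline{X}+c])\geq 1-\delta$ (equivalently, $P(\mu\notin[\overline{X}-c,\overline{X}+c])\leq\delta$), consistent with how the confidence radius $C_t$ is used in \samp and in the definition of event $\mathcal{E}_2$.
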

\begin{proof}
    This result follows easily from Hoeffding's Inequality.
\end{proof}

\begin{lemma}
    \label{lem:logx_over_x}
    Suppose $x\in\mathbb{R}$ and $x\geq 2$, if we have $x\geq\frac{2}{a}\log\frac{2}{a}$, then it holds that 
    \begin{align*}
        \frac{\log x}{x}\leq a
    \end{align*}
\end{lemma}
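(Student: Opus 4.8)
The plan is to bypass any case analysis of where the threshold $x_0 := \tfrac{2}{a}\log\tfrac{2}{a}$ sits relative to $2$, and to derive the bound directly from the elementary inequality $\log t \le t-1$ (valid for all $t>0$, with $\log$ the natural logarithm as used throughout the paper). Here $a>0$, as in the intended application where this lemma is invoked with $a$ proportional to $\phi^2(S,s)/R^2$. Since $x\ge 2>0$, the asserted conclusion $\tfrac{\log x}{x}\le a$ is equivalent to $\log x \le ax$, so it suffices to prove the latter.

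First I would apply $\log t \le t-1$ with $t=\tfrac{a}{2}x>0$, and rewrite $\log\!\big(\tfrac{a}{2}x\big)=\log x-\log\tfrac{2}{a}$, obtaining
\[
\log x \;\le\; \frac{a}{2}x - 1 + \log\frac{2}{a}.
\]
Second, I would invoke the hypothesis $x\ge \tfrac{2}{a}\log\tfrac{2}{a}$: multiplying by $\tfrac{a}{2}>0$ gives $\tfrac{a}{2}x \ge \log\tfrac{2}{a}$, which also holds trivially when $\log\tfrac{2}{a}\le 0$ since then the left side is positive and the right side is not. Substituting this into the previous display,
\[
\log x \;\le\; \frac{a}{2}x + \frac{a}{2}x - 1 \;=\; ax-1 \;\le\; ax,
\]
and dividing through by $x>0$ completes the argument (in fact giving the slightly stronger $\log x/x \le a - 1/x$).

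I do not anticipate a substantive obstacle; the proof is essentially a two-line computation once one picks the right value of $t$ in $\log t\le t-1$, and the hypothesis $x\ge 2$ is actually not needed beyond ensuring $x>0$. The only point requiring a little care is the logarithm base, since $\log t\le t-1$ is the natural-log statement; this is consistent with the Hoeffding-based sample-count expressions elsewhere in the paper. If one instead preferred to keep $x\ge 2$ active, a monotonicity route also works: the function $g(x)=ax-\log x$ has $g'(x)=a-1/x\ge 0$ for $x\ge 1/a$, one checks $\max\{2,x_0\}\ge 1/a$, and evaluates $g$ at that base point — but this forces a small split ($x_0\ge 2$ versus $x_0<2$), so I would present the direct computation above.
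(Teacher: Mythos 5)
Your proof is correct, and it takes a genuinely different route from the paper's. The paper argues by monotonicity: it asserts that $\frac{\log x}{x}$ is decreasing for $x\geq 2$, evaluates at the threshold $x_0=\frac{2}{a}\log\frac{2}{a}$, and bounds $\frac{\log x_0}{x_0}=\frac{a}{2}\cdot\frac{\log x_0}{\log(2/a)}\leq a$ via $\log\log\frac{2}{a}\leq\log\frac{2}{a}$. You instead apply $\log t\leq t-1$ at $t=\frac{a}{2}x$ and feed the hypothesis $\frac{a}{2}x\geq\log\frac{2}{a}$ back in, getting $\log x\leq ax-1$ directly. Your route buys several things the paper's does not: it makes no claim about where the maximum of $\frac{\log x}{x}$ sits (the paper's ``decreasing for $x\geq 2$'' is strictly speaking only true for $x\geq e$, and its evaluation at $x_0$ tacitly requires $x_0\geq 2$ and $\log\frac{2}{a}>0$, i.e.\ $a<2$ --- none of which your argument needs); it discards the hypothesis $x\geq 2$ entirely, keeping only $x>0$; and it yields the marginally stronger conclusion $\frac{\log x}{x}\leq a-\frac{1}{x}$. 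What the paper's version buys in exchange is mainly transparency about why the constant $\frac{2}{a}\log\frac{2}{a}$ is the natural inversion point of $\frac{\log x}{x}\leq a$. The one point to keep explicit, which you do flag, is that $\log t\leq t-1$ is the natural-logarithm inequality, consistent with the paper's Hoeffding-based usage; your side remark about the case $\log\frac{2}{a}\leq 0$ is redundant (multiplying the hypothesis by $\frac{a}{2}>0$ already gives $\frac{a}{2}x\geq\log\frac{2}{a}$ in every case) but harmless.
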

\begin{proof}
    Since $y=\frac{\log x}{x}$ is decreasing when $x\geq 2$, if $x>\frac{2}{a}\log\frac{2}{a}$, then we have 
    \begin{align*}
        \frac{\log x}{x}< \frac{a}{2} \cdot \frac{\log (\frac{2}{a}\log\frac{2}{a})}{\log \frac{2}{a}}\leq a.
    \end{align*}
\end{proof}

\section{Additional Experiments}
In this section, we present some additional details of our experiments.  In particular, we present additional
detail about the experimental setup in Section \ref{appdx:exp_setup}. 
Next, we present the additional experimental results in Section \ref{results}.

\subsection{Additional Experimental Setup}
\label{appdx:exp_setup}
First of all, we provide details about the two applications used to evaluate our algorithms. The two applications considered here are noisy data summarization as presented in Section \ref{sec:data_summarization} and influence maximization in Section \ref{influence}.

\subsubsection{Noisy Data Summarization}
\label{sec:data_summarization}
In data summarization, $U$ is a dataset that we wish to summarize by choosing a subset of $U$ of cardinality at most $\kappa$. The objective function $f:2^U\to\mathbb{R}_{\geq 0}$ takes a subset $X\subseteq U$ to a measure of how well $X$ summarizes the entire dataset $U$, and in many cases is monotone and submodular \cite{tschiatschek2014learning}. However, in real instances of data summarization, we may not have access to an exact measure $f$ of the quality of a summary, but instead, we may have authentic human feedback which is modeled as noisy queries to some underlying monotone and submodular function \cite{singla2016noisy}.

Motivated by this, we run our experiments using instances of noisy data summarization. Our underlying monotone submodular function $f$ is defined as follows: $U$ is assumed to be a labeled dataset, e.g. images tagged with descriptive words, and for any $X\subseteq U$, $f$ takes $X$ to the total number of tags represented by at least one element in $X$ \cite{crawford2023scalable}. Notice that this is essentially the instance of set cover.

\subsubsection{Influence Maximization}
\label{influence}
Another application is the influence maximization problem in large-scale networks \cite{kempe2003maximizing}. In this application, the universe is the set of users in the social network, and the objective is to choose a subset of users to seed with a product to advertise in order to maximize the spread throughout the network. The marginal gain of adding an element $s$ to set $S$ is defined as $\Delta f(S,s):=\mathbb{E}_{\bf{w}\sim \mathcal{D}(\bar{\bf{w}})}\Delta f (S,s;\bf{w})$, where $\bf{w}$ is the noisy realization of the graph from some unknown distribution $\mathcal{D}(\bar{\bf{w}})$, and $\Delta f (S,s;\textbf{w}) = f(S\cup\{s\};\textbf{w})- f(S;\bf{w})$. In a noisy graph realization with parameter $\bf{w}$, $f(S;\bf{w})$ is the number of elements influenced by the set $S$ under some influence cascade model. It is \#P-hard to evaluate the objective in influence maximization \cite{chen2010scalable}. Many of the previous works \cite{chen2009efficient} assume the entire graph can be stored by the algorithm and the influence cascade model is known. The algorithm first samples some graph realizations to approximate the true objective and run submodular maximization algorithms on the sampled graphs. In contrast, our setting and algorithm do not assume that a graph is stored or the model of influence is explicitly known, only that we could simulate it for a subset. Therefore our approach could apply in more general influence maximization settings than the sampled realization approach.

Next, we describe the details about the three algorithms that we compare to:
(i) The fixed $\epsilon$ approximation (``\texttt{EPS-AP}'') algorithm. This is where we essentially run \alg, except instead of using the subroutine \samp to adaptively sample in order to reduce the number of samples, we simply sample down to an $\epsilon$-approximation of every marginal gain. This takes $N_1$ samples for every marginal gain computation, see definition of $N_1$ in Algorithm \ref{alg:samp}. The element $u$ is added to $S$ if and only if the empirical estimate $\widehat{\Delta f_{N_1}}(S,u)\geq w$; (ii)
The special case of the algorithm \singla of \citet{singla2016noisy} that yields about a $(1-1/e)$-approximate solution with high probability, ``\texttt{EXP-GREEDY}'', which is described in Section \ref{sec:relatedwork} and in the appendix. In the detailed description of \singla found in the appendix in the supplementary material, this is the case that $k'$ is set to be 1; (iii) The randomized version of the algorithm of \singla, ``\texttt{EXP-GREEDY-K}'', which yields about a $(1-1/e)$-approximation guarantee in expectation. Since \texttt{EXP-GREEDY-K} is a randomized algorithm, we average the results for \texttt{EXP-GREEDY-K} over $10$ trials. This is the case that $k'=\kappa$.

Then we provide some additional details for experiments on instances of data summarization. 
The parameter $\delta$ for all the experiments is set to be $0.2$, and the approximation precision parameter $\alpha$ is $0.2$ for both \alg and \texttt{EPS-AP}. 
The value of $\epsilon$ of the experiments for different $\kappa$ are $0.1$, $0.2$, $0.1$ and $0.1$ on corel\_60, delicious\_300, delicious, and corel respectively. The value of $\kappa$ for different $\epsilon$ are $10$, $80$, $200$ and $100$ on corel\_60, delicious\_300, delicious and corel respectively.

At last, we introduce the experimental setup for influence maximization. We run the four algorithms described above on the experiments for different values of $\kappa$ and $\epsilon$. The dataset used here is a sub-graph extracted from the EuAll dataset with $n=29$ \cite{leskovec2016snap}. The underlying weight of each edge is uniformly sampled from $[0,1]$ (``euall''). In our experiments, we simulate the influence maximization under the influence cascade model. We further use the reverse influence sampling (RIS) \cite{borgs2014maximizing} to enhance the computation efficiency of our algorithm. Here $R$ is the number of nodes in the graph and is thus $29$. The value of $\kappa$ for different $\epsilon$ is $8$, and the value of $\epsilon$ for different $\kappa$ is $0.15$. The parameters $\delta$ and $\alpha$ are set to be $0.2$ for both of the experiments. Since \texttt{EXP-GREEDY-K} is a randomized algorithm, the experimental results for \texttt{EXP-GREEDY-K} are averaged over $4$ trials for different $\epsilon$, and  $8$ trials for different $\kappa$.

\subsection{Addtional Experimental Results}
\label{results}
First, we present the result analysis of the experiments where we vary $\epsilon$. It can be seen from Figures \ref{fig:cover2500_300_eps_q}, \ref{fig:cover2500_300_eps_average-q}, \ref{fig:corel_60_eps_q} and \ref{fig:corel_60_eps_ave_q} that both the total samples and average samples of our algorithm \alg increase less compared with \texttt{EPS-AP} and \texttt{EXP-GREEDY} as $\epsilon$ decreases. 
This is not surprising, because the theoretical guarantee on the number of samples taken per marginal gain contribution in \texttt{EPS-AP} is $O(\frac{1}{\epsilon^2})$, 
which would increase rapidly when $\epsilon$ decreases. This also makes sense for \texttt{EXP-GREEDY}, since the theoretical guarantee on the number of queries of each iteration is $O(\frac{nR^2}{\epsilon^2}\log\big(\frac{R^2kn}{\delta\epsilon^2}\big))$ if the difference between elements marginal gains are very small.

Then we present the additional experimental results with respect to the function value $f$ on the instance of data summarization in the main paper. The results are in Figure \ref{fig:exp_results_of_f}. The experimental results of $f$ for different $\kappa$ are in Figure  \ref{fig:cover2500_300_k_f}, \ref{fig:cover_k_f},  \ref{fig:corel_60_k_f} and  \ref{fig:corel_k_f}. From the results, one can see that the $f$ values for different algorithms are very almost the same in most cases. However, when $\kappa$ increases and becomes large, the $f$ value of \texttt{EXP-GREEDY-K} is smaller than other algorithms, which is because when  $\kappa$ is large, it allows for more randomness in \texttt{EXP-GREEDY-K} and is  less accurate.

Next, we present the experimental results on the instance of influence maximization. The results are plotted in Figure \ref{fig:exp_results_of_influmax}. From the results, we can see that our proposed algorithm \alg outperforms the other three algorithms in terms of the total number of samples (see Figure \ref{fig:influ_eps_q}, \ref{fig:influ_k_q}). When $\kappa$ increases, the average number of samples decreases fast for \alg. This is because the marginal gain on this instance decreases rapidly when $\kappa$ increases while the threshold value decreases only by a factor of $1-\alpha$ at the end of each iteration, in many iterations the threshold value $w$ is much higher than the marginal gain and thus the gap function $\phi(S,s)$ is large. According to the results of sample complexity in Theorem \ref{mainthm}, the number of required samples decreases fast as $\kappa$ increases. This is also why the average number of samples of \alg is much smaller than \texttt{EXP-GREEDY} and \texttt{EXP-GREEDY-K} as is presented in Figure \ref{fig:influ_eps_ave_q} and Figure \ref{fig:influ_k_ave_q}.
\begin{figure*}[t!]
    \centering
    \hspace{-0.5em}
     \subfigure[delicious\_300 $f$]
{\label{fig:cover2500_300_eps_f}\includegraphics[width=0.24\textwidth]{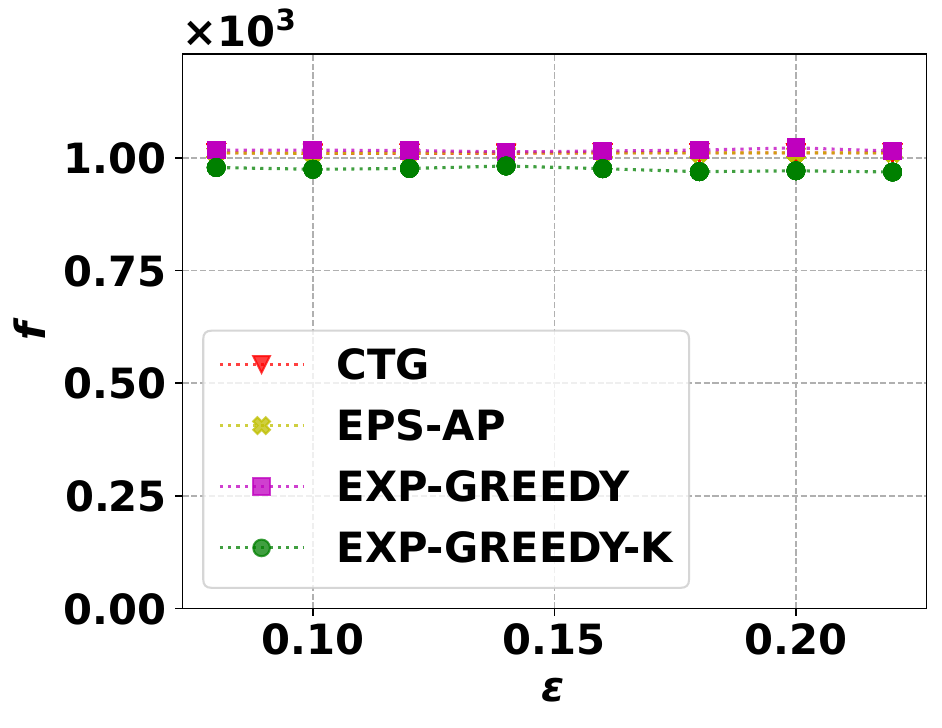}} 
\hspace{-0.5em}
     \subfigure[delicious\_300 $f$]
{\label{fig:cover2500_300_k_f}\includegraphics[width=0.24\textwidth]{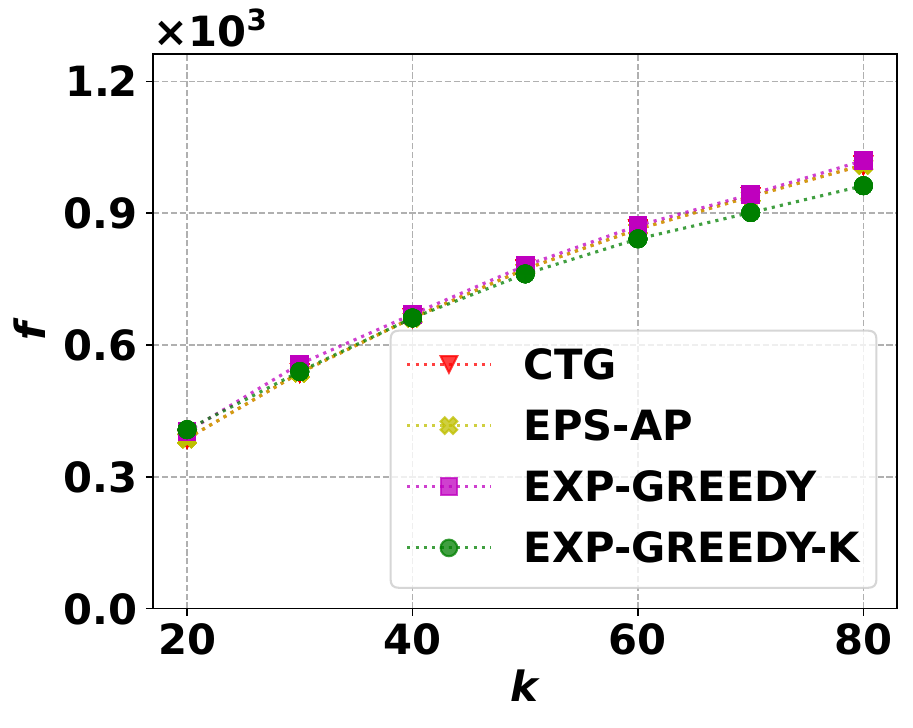}} 
    \hspace{-0.5em}
     \subfigure[corel\_60 $f$]
{\label{fig:corel_60_eps_f}\includegraphics[width=0.24\textwidth]{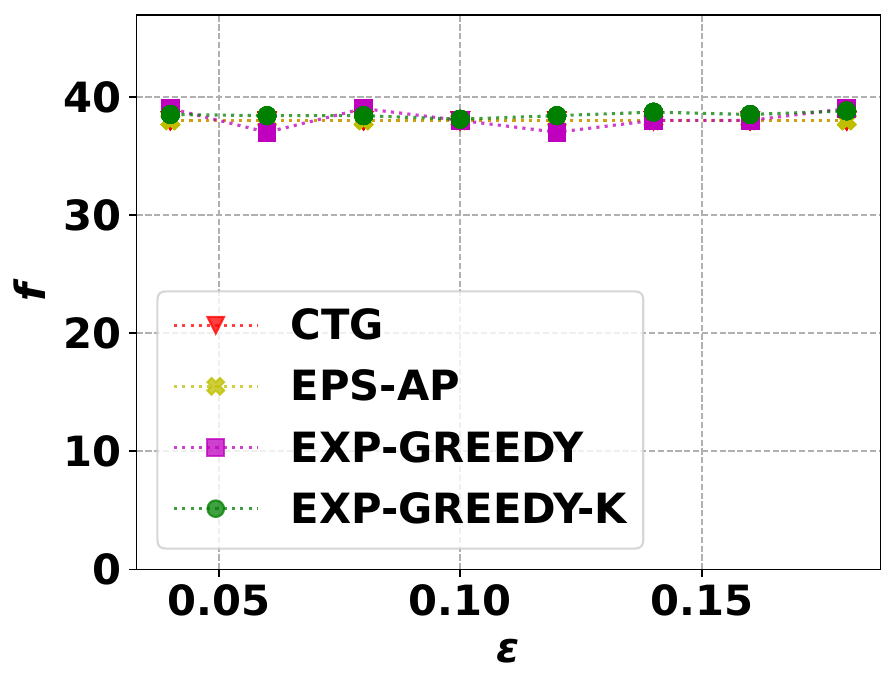}} 
\hspace{-0.5em}
     \subfigure[corel\_60 $f$]
{\label{fig:corel_60_k_f}\includegraphics[width=0.24\textwidth]{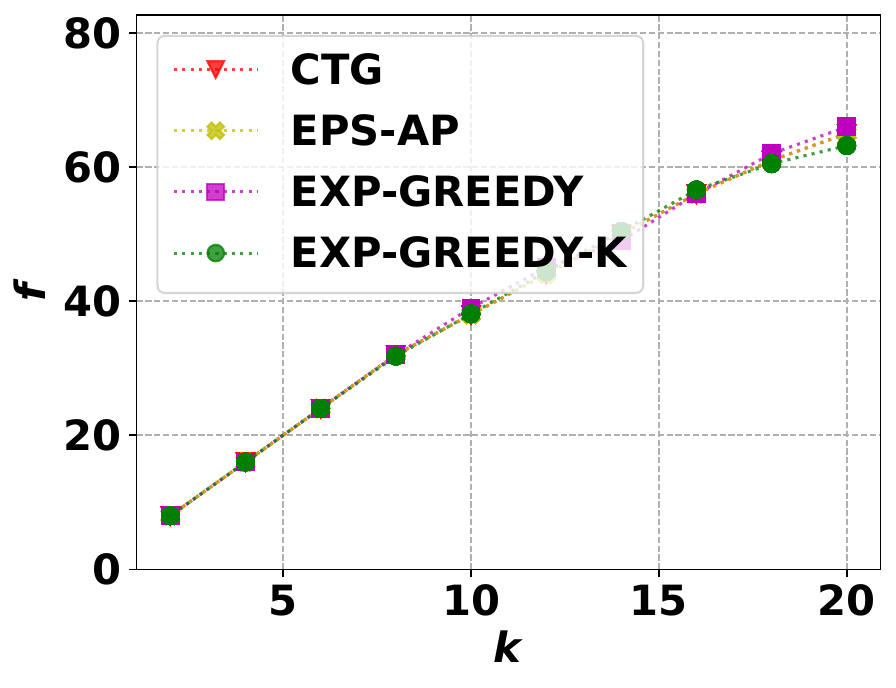}} 
\hspace{-0.5em}
     \subfigure[corel $f$]
{\label{fig:corel_eps_f}\includegraphics[width=0.24\textwidth]{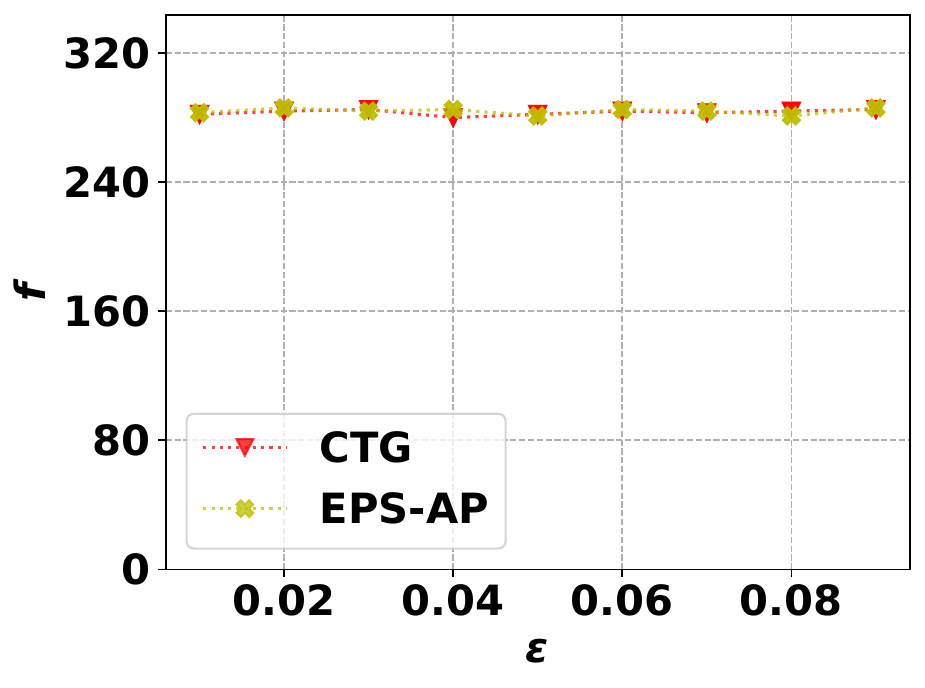}} 
\hspace{-0.5em}
     \subfigure[corel $f$]
{\label{fig:corel_k_f}\includegraphics[width=0.24\textwidth]{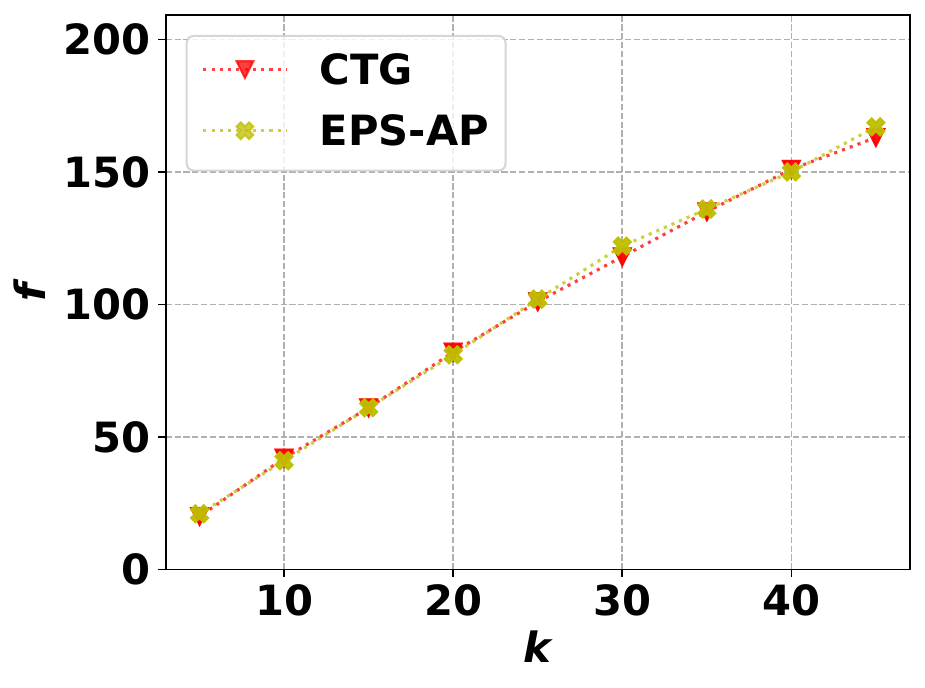}} 
\hspace{-0.5em}
     \subfigure[delicious $f$]
{\label{fig:cover_eps_f}\includegraphics[width=0.24\textwidth]{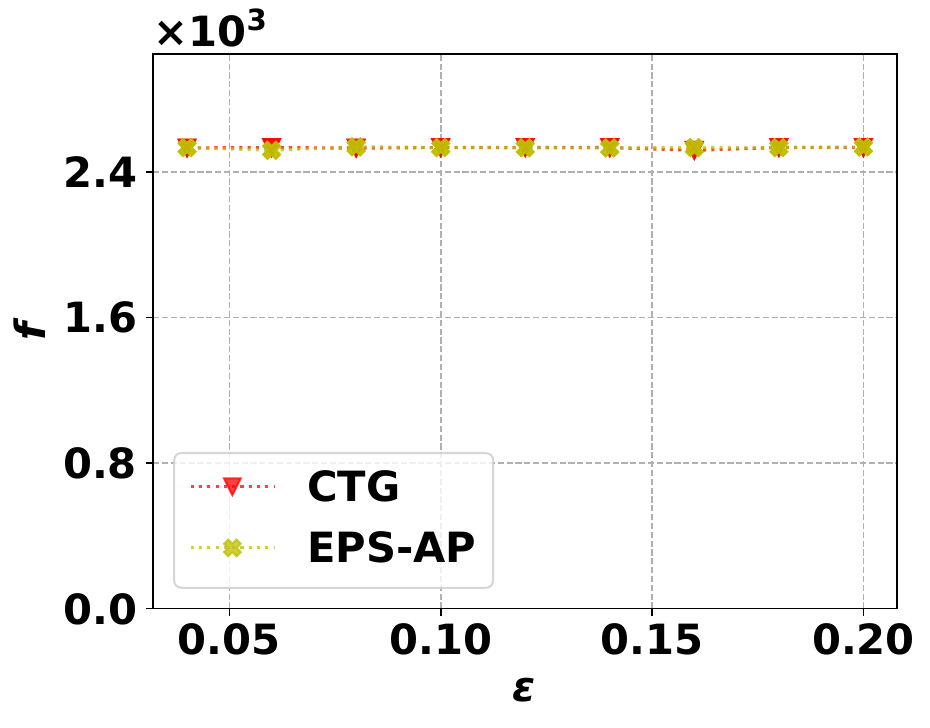}} 
\hspace{-0.5em}
     \subfigure[delicious $f$]
{\label{fig:cover_k_f}\includegraphics[width=0.24\textwidth]{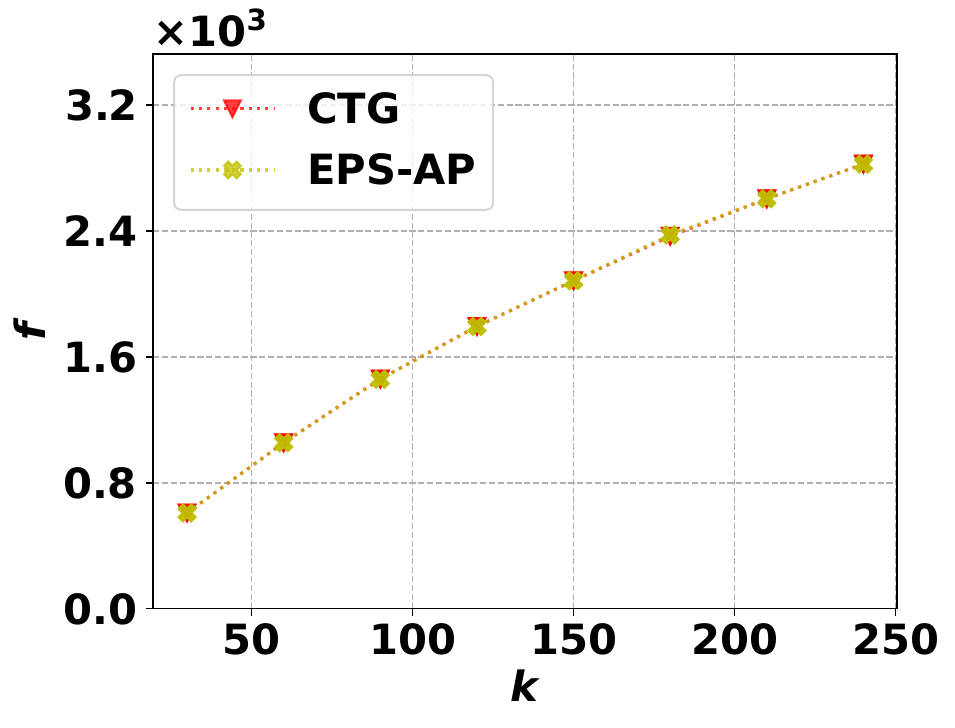}} 
\caption{The experimental results of $f$ of running different algorithms on instances of data summarization on the delicious URL dataset ("delicious", "delicious\_300") and Corel5k dataset ("corel", "corel\_60").}
\label{fig:exp_results_of_f}
\end{figure*}

\begin{figure*}[t!]
    \centering
    \hspace{-0.5em}
     \subfigure[euall samples]
{\label{fig:influ_eps_q}\includegraphics[width=0.24\textwidth]{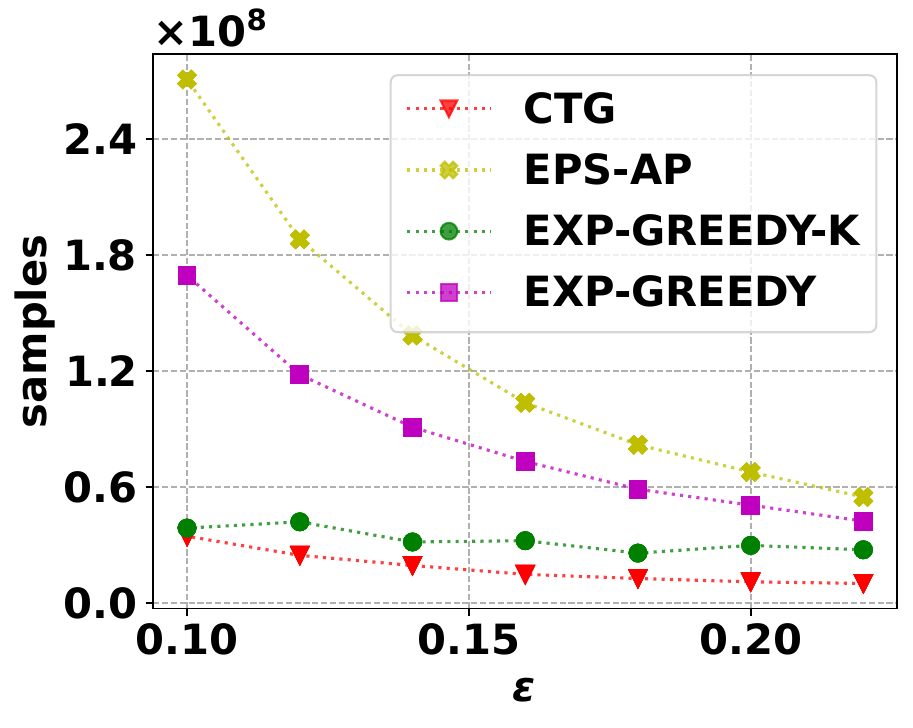}} 
\hspace{-0.5em}
     \subfigure[euall average samples]
{\label{fig:influ_eps_ave_q}\includegraphics[width=0.24\textwidth]{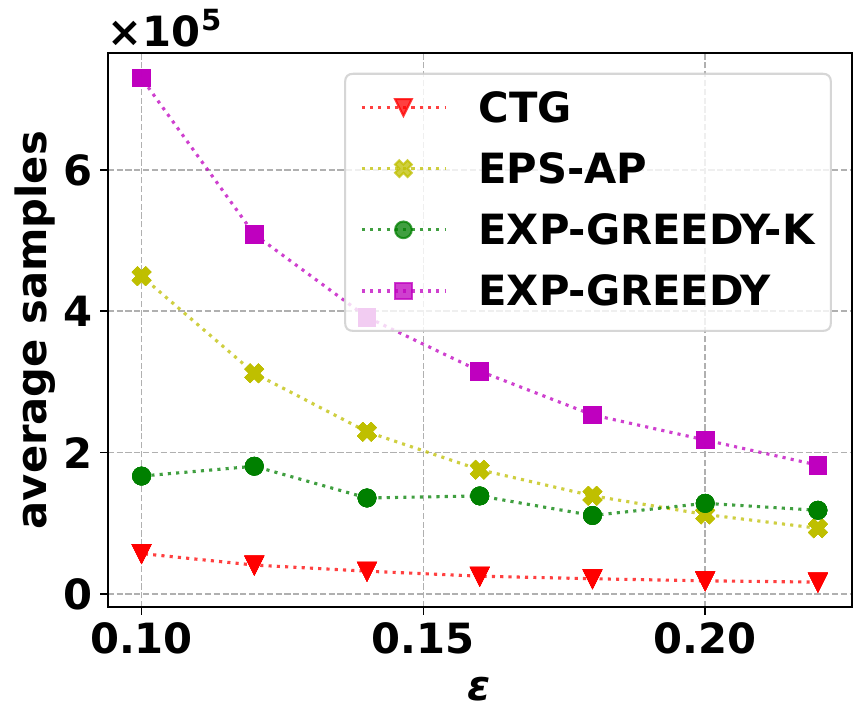}} 
    \hspace{-0.5em}
     \subfigure[euall $f$]
{\label{fig:influ_eps_f}\includegraphics[width=0.24\textwidth]{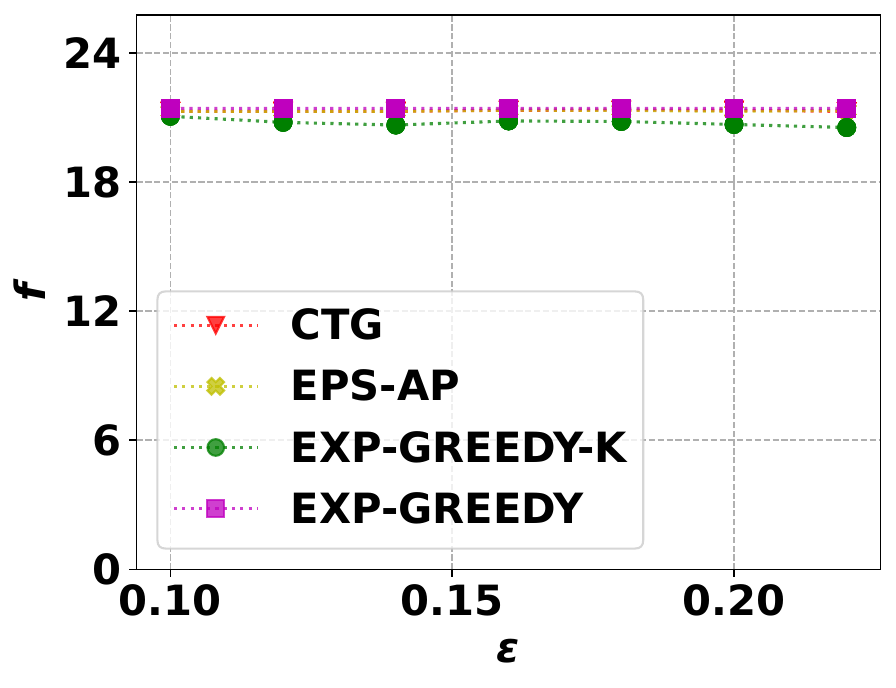}} 
\hspace{-0.5em}
     \subfigure[euall samples]
{\label{fig:influ_k_q}\includegraphics[width=0.24\textwidth]{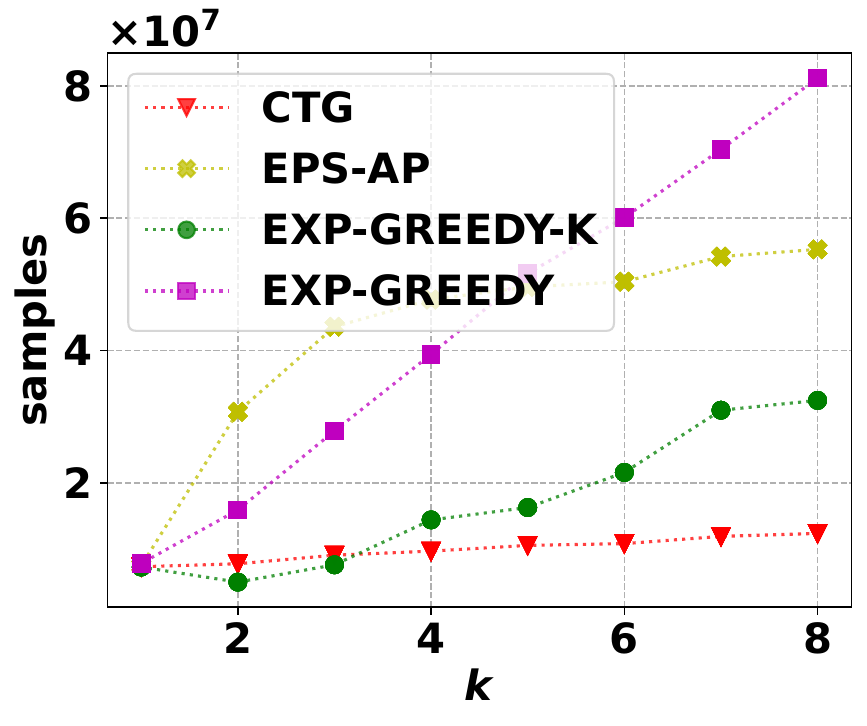}} 
\hspace{-0.5em}
     \subfigure[euall average samples]
{\label{fig:influ_k_ave_q}\includegraphics[width=0.24\textwidth]{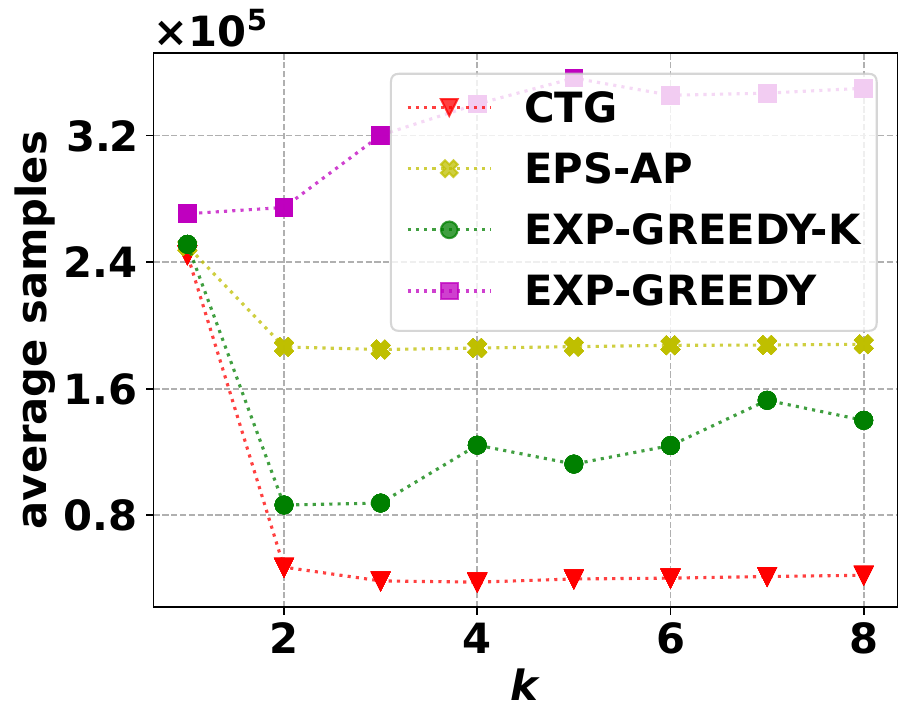}} 
\hspace{-0.5em}
     \subfigure[euall $f$]
{\label{fig:influ_k_f}\includegraphics[width=0.24\textwidth]{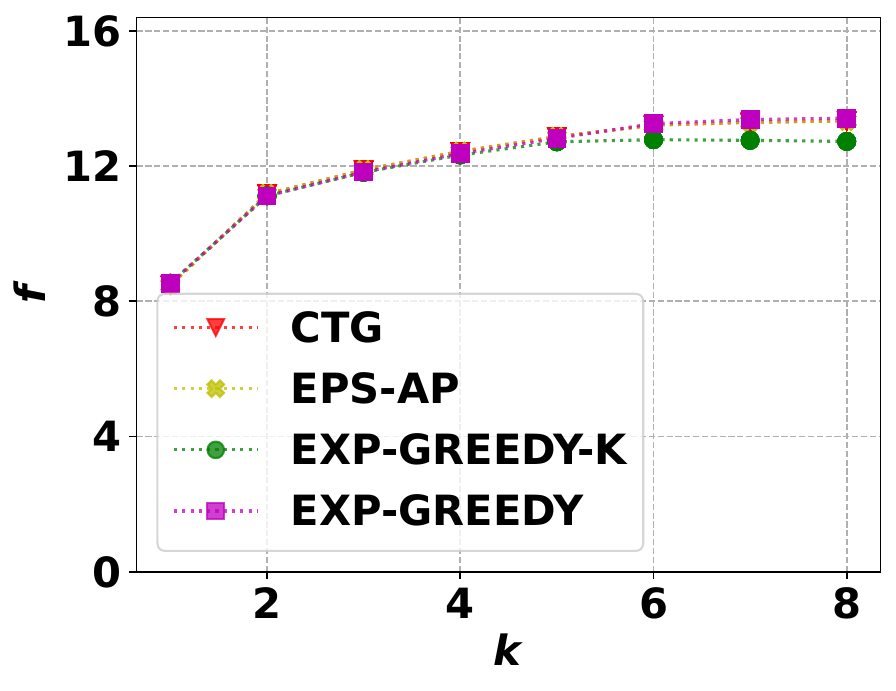}} 
\caption{The experimental results of running different algorithms on the instance of influence maximization on the EuAll dataset ("euall").}
\label{fig:exp_results_of_influmax}
\end{figure*}

\end{document}